\newcommand{\be}{\begin{equation}}
\newcommand{\ee}{\end{equation}}
\newcommand{\ba}{\begin{array}}
\newcommand{\ea}{\end{array}}
\newcommand{\bea}{\begin{eqnarray}}
\newcommand{\eea}{\end{eqnarray}}
\newcommand{\calH}{{\cal H }}
\newcommand{\calL}{{\cal L }}
\newcommand{\calE}{{\cal E }}
\newcommand{\calP}{{\cal P }}
\newcommand{\calU}{{\cal U }}
\newcommand{\calO}{{\cal O }}
\newcommand{\calD}{{\cal D }}
\newcommand{\calT}{{\cal T }}
\newcommand{\calI}{{\cal I }}
\newcommand{\CC}{\mathbb{C}}
\newcommand{\RR}{\mathbb{R}}
\newcommand{\LL}{\mathbb{L}}
\newcommand{\la}{\langle}
\newcommand{\ra}{\rangle}
\newcommand{\nn}{\nonumber}
\newcommand{\trace}{\mathrm{Tr} \,}
\newcommand{\gl}{\mathrm{gl}}
\newcommand{\loc}{\mathrm{loc}}
\newcommand{\eff}{\mathrm{eff}}
\newcommand{\Vd}{V_{\mathrm{d}}}
\newcommand{\Vod}{V_{\mathrm{od}}}
\newcommand{\els}{\mathrm{else}}
\newcommand{\garb}{\mathrm{garbage}}
\newtheorem{dfn}{Definition}
\newtheorem{lemma}{Lemma}
\newtheorem{prop}{Proposition}
\newtheorem{theorem}{Theorem}
\newtheorem{corol}{Corollary}
\numberwithin{lemma}{section}
\numberwithin{corol}{section}
\numberwithin{prop}{section}
\numberwithin{dfn}{section}
\numberwithin{equation}{section}
\begin{document}

\title{Schrieffer-Wolff transformation for quantum many-body systems}
\author{Sergey Bravyi\thanks{IBM Watson
Research Center, Yorktown Heights, NY 10598 USA.
\texttt{sbravyi@us.ibm.com} }, \, \,
David P. DiVincenzo\thanks{RWTH Aachen and Forschungszentrum Juelich, Germany.
\texttt{d.divincenzo@fz-juelich.de} }, \, \, and \, \,
Daniel Loss\thanks{Department of Physics, University of Basel,
Klingelbergstrasse 82, CH-4056 Basel, Switzerland.
\texttt{Daniel.Loss@unibas.ch}}}

\maketitle

\begin{abstract}
The Schrieffer-Wolff (SW) method is a version of degenerate perturbation theory in which the  low-energy effective Hamiltonian $H_{\eff}$ is obtained from the exact Hamiltonian by a unitary transformation decoupling the low-energy and high-energy subspaces. We give a self-contained summary of the SW method with a focus on rigorous results.  We begin with an exact definition of the SW transformation in terms of the so-called direct rotation between linear subspaces. From this we obtain elementary proofs of several important properties of $H_{\eff}$ such as the linked cluster theorem.  We then study the perturbative version of the SW transformation obtained from a Taylor series representation of the direct rotation. Our perturbative approach provides a systematic diagram technique for computing high-order corrections to $H_{\eff}$.  We then specialize the SW method to quantum spin lattices with short-range interactions. We establish unitary equivalence between effective low-energy Hamiltonians obtained using two different versions of  the SW method studied in the literature.  Finally, we derive an upper bound on the precision up to which the ground state energy of the $n$-th order effective Hamiltonian approximates the exact ground state energy.
%
 \end{abstract}

\newpage

\tableofcontents

\newpage

\section{Introduction}
Given a fundamental theory describing a quantum many-body system, one often needs to
obtain a concise description of its low-energy dynamics by integrating out
high-energy degrees of freedom. The Schrieffer-Wolff  (SW) method accomplishes this task  by
constructing a unitary transformation that decouples the
high-energy and low-energy subspaces.

In the present paper we view the SW method as a version of degenerate perturbation theory. It
involves a Hamiltonian $H_0$ describing an unperturbed system, a low-energy subspace $\calP_0$
invariant under $H_0$,
and a perturbation $\epsilon V$ which does not preserve $\calP_0$. The goal is to construct an effective Hamiltonian
$H_{\eff}$ acting only on $\calP_0$ such that the spectrum of $H_{\eff}$ reproduces eigenvalues of
the perturbed Hamiltonian $H_0+\epsilon V$ originating from the low-energy subspace $\calP_0$.
The SW transformation is
a unitary operator $U$ such that the transformed Hamiltonian $U (H_0+\epsilon V)U^\dag$ preserves $\calP_0$.
The desired effective Hamiltonian is then defined as the restriction of $U (H_0+\epsilon V)U^\dag$
onto $\calP_0$.
An important  consideration arising in the context of many-body systems such as molecules or interacting
quantum spins is the locality of $H_{\eff}$.  In order for the effective Hamiltonian $H_{\eff}$ to be usable, it must
only involve $k$-body interactions for some small constant $k$. We demonstrate
that this is indeed the case for the SW method by proving (under certain natural assumptions)
that it obeys the so called linked cluster theorem.

The purpose of the present paper is two-fold. First, we give a self-contained summary of the SW method with a focus on rigorous results. A distinct feature of our presentation is the use of both perturbative and exact treatments of the SW transformation. The former is given in terms of the Taylor series and the rules for computing the Taylor coefficients while
the latter involves the so called direct rotation between linear subspaces\footnote{The direct rotation
can be thought of as a square root of the double reflection operator used in the Grover search algorithm~\cite{Grover}.}.
By combining these different perspectives
we are able to obtain elementary proofs of several important properties of the effective Hamiltonian $H_{\eff}$ such as its additivity under a disjoint union of non-interacting systems.
These properties  do not manifest themselves on the level of individual terms in the perturbative expansion which makes their direct proof (by inspection of the Taylor coefficients of $H_{\eff}$)
virtually impossible. In addition, our approach provides a systematic  diagram technique for computing high-order corrections to $H_{\eff}$ which can be readily cast into a computer program for practical calculations.

Our second goal is to specialize the SW method to interacting quantum spin systems.
We  assume that the unperturbed Hamiltonian $H_0$ is a sum of single-spin
operators such that the low-energy subspace $\calP_0$
is a tensor product of single-spin  low-energy subspaces.
The perturbation $V$ is chosen as a sum of two-spin interactions associated with edges
of some fixed interaction graph.  We prove that the Taylor series for $H_{\eff}$
obeys the so called linked cluster theorem, that is, the $m$-th order correction to $H_{\eff}$
includes only interactions among subsets of spins spanned by
connected clusters of at most $m$ edges in the interaction graph (one should keep in mind that the spins acted on
by $H_{\eff}$ are `effective spins' described by the low-energy subspaces of the original spins),
see Theorem~\ref{thm:lct} in Section~\ref{subs:lct}.
It demonstrates that the SW transformation maps a high-energy theory with local interactions to
an effective low-energy theory with (approximately) local interactions.

For macroscopic spin systems the norm of the perturbation $\|\epsilon V\|$ is typically much larger than the
spectral gap of $H_0$ and one should expect level crossings to occur between energy levels from the low-energy and high-energy subspaces. On the other hand, one should expect that for sufficiently small $|\epsilon|$ ground states of $H_0+\epsilon V$
have most of their weight inside the low-energy subspace $\calP_0$.
We prove that this is indeed the case for so-called block-diagonal perturbations $V$, that is,
perturbations  preserving $\calP_0$.
Specifically, we prove that the low-energy subspace $\calP_0$ contains at least one ground state of $H_0+\epsilon V$
as long as $|\epsilon|<\epsilon_0$ for some constant  $\epsilon_0$ that depends only on the
spectral gap of $H_0$, the degree of the interaction graph, and the strength of spin-spin interactions,
see Lemma~\ref{lemma:new} in Section~\ref{subs:bd}. In that sense,  the ground state
of the system has a constant stability radius against block-diagonal perturbations.

The main technical result reported in the paper is a rigorous upper bound on the error $\delta_m$
up to which the ground state energy of $H_{\eff}$,
truncated at the $m_{th}$ order,
approximates the ground state energy of the exact Hamiltonian $H_0+\epsilon V$,
see Theorem~\ref{thm:global}
in Section~\ref{subs:equiv}.
Specifically, we prove that there exist
constants $c_m, \epsilon_m$ such that
$\delta_m \le c_m N |\epsilon|^{m+1}$ for all $|\epsilon|<\epsilon_m$,
where $N$ is the total number of spins.
 The  coefficients $c_m$ and $\epsilon_m$
depend only on the degree of the interaction graph, the strength of spin-spin interactions,
the spectral gap $\Delta$ of the unperturbed Hamiltonian $H_0$, and the truncation order $m$.
It shows that the approximation error $\delta_m$ is extensive (linear in $N$)
for fixed  $m$. On the other hand, $\delta_m$ may grow very rapidly as a function of $m$.
Note that our bound holds even in the region of parameters $\| \epsilon V\| \gg \Delta$
where the perturbative series for $H_{\eff}$ are expected to be divergent.  The proof of the bound
on $\delta_m$ combines the SW transformation, the linked cluster theorem, and
the ground state stability against block-diagonal perturbations.

One  undesirable feature of the SW transformation is that its generator $S=\log{(U)}$
is a highly non-local operator which cannot be written as a sum of local interactions
even in the lowest order in $\epsilon$. The fact that locality of the theory does not
manifest itself on the level of the generator $S$ can be safely ignored if one is interested only in the
final expression for  $H_{\eff}$. However the non-local character of $S$
makes it more difficult to prove bounds on the truncation error $\delta_m$.
We resolve this problem by employing a different version of the SW method
due to Datta et al~\cite{DFFR} which we refer to as a {\em local SW method}.
By analogy with the standard SW it defines
a unitary transformation $U$ generating an effective low-energy Hamiltonian acting
on $\calP_0$.  The advantage of the local SW transformation is that
its generator $S=\log{(U)}$ has nice  locality properties  which makes analysis of the
truncation error $\delta_m$ obtained using the local SW method much easier,
see Theorem~\ref{thm:local} in Section~\ref{subs:DFFR}.
Unfortunately, the local SW transformation does not exist outside the realm
of perturbative expansions lacking an `exact definition' similar to the standard SW.
We establish equivalence between the local and the standard versions
of the SW method by  showing that the corresponding $m$-th order effective Hamiltonians
can be mapped to each other by a unitary rotation of the low-energy subspace
$\calP_0$ up to an error $O(N\epsilon^{m+1})$.

\subsection{Applications of the SW method}

The SW transformation finds application in many contemporary problems in quantum physics, where an economical description of the low-energy dynamics of the system must be extracted from a full Hamiltonian.  While SW is named in honor of the authors of an important paper in condensed matter theory~\cite{SW66}, where the celebrated Kondo Hamiltonian was shown to be obtained by such a transformation from the equally celebrated Anderson Hamiltonian, it is in fact not the earliest application of the technique.  The concept of a most-economical rotation between subspaces has a several-centuries history in mathematics~\cite{DavisKahan69,PW94}. Its direct descendent is the cosine-sine (CS) decomposition~\cite{Bhatia}, which is very closely related to the definition of the exact SW transformation that we adopt here.   The earliest application of the canonical transformation in quantum physics is in fact at least 15 years earlier than~\cite{SW66}, in the famous transformation of Foldy and Wouthuysen~\cite{FW50}.  The authors of~\cite{FW50} used the technique to derive the non-relativistic Schr\"oedinger-Pauli wave equation, with relativistic corrections, from the Dirac equation.  In this case, the canonical transformation is one that decouples the positive and negative energy solutions of the relativistic wave equation.

The applications of SW in modern times~\cite{W86} are far too numerous even to allude to.  The work of one of us uses SW routinely in many applications (see, e.g., Ref.~\cite{GKL08}).
 Since part of our concern in the present work is a systematic development of a proper series expansion for the SW, we can mention that  high-order terms of this series have been carefully and laboriously developed in some important works~\cite{MGY88} and may be found recorded in recent books dealing with applications~\cite{W10}.  One can, however, find lengthy discussions of possible alternative approaches~\cite{O90,MGY90}.  It is widely acknowledged that SW is preferable to various other approaches, for example the Bloch expansion~\cite{Bloch58,Lindgren73}, which, unlike SW, produces non-Hermitian effective Hamiltonians at sufficiently high order.  SW is also superior to the self-consistent equation for the effective Hamiltonian arising from the diagrammatic self-energy technique~\cite{AGD,Fetter,KKR}.

It often occurs that the effective low-energy Hamiltonian has much higher level of complexity compared to the original
high-energy Hamiltonian.  This observation has lead to the idea of perturbation gadgets~\cite{KKR,Terhal08,JordanFarhi:08}.
Suppose one starts from a target Hamiltonian $H_{\mathrm{target}}$
chosen for some interesting ground state properties. The target Hamiltonian may possibly
contain many-body interactions and might not be very realistic.
Perturbation gadgets formalism allows one  to construct a simpler high-energy simulator Hamiltonian
with only two-body interactions whose low-energy properties (such as the ground state energy) approximate the ones
of  $H_{\mathrm{target}}$. For example, various perturbation gadgets have been constructed for target Hamiltonians with the topological quantum order~\cite{Kitaev03,Koenig10,Brell11}. The SW method provides a natural framework
for constructing and analyzing perturbation gadgets~\cite{gadget_prl}. In contrast to other methods,
SW does not require the unphysical scaling of parameters in the simulator Hamiltonian
required for  convergence of the perturbative series~\cite{gadget_prl}. This makes the SW method
suitable for analysis of experimental implementations of perturbation gadgets with cold atoms in
optical lattices,
see~\cite{optical1,optical2}.

\subsection{Comparison between SW and other perturbative expansions}
\label{subs:comparison}

In this section we briefly review some of the commonly used perturbative expansion techniques
and argue that none of them can serve as a fully functional replacement of the SW method.

We begin by emphasizing that the  effective Hamiltonian $H_{\eff}$ is only defined
up to a unitary rotation of the low-energy subspace $\calP_0$. Therefore one should expect that
different versions of a degenerate perturbation theory produce
different Taylor series for $H_{\eff}$. In particular, the error up to which the series for $H_{\eff}$ truncated  at some finite order
reproduce the exact low-energy spectrum may depend on the method of computing $H_{\eff}$.
Different methods may also vary in the complexity of rules for computing the Taylor coefficients.

The most convenient and commonly used perturbative method is the Feynman-Dyson diagram
technique~\cite{AGD,Fetter,Negele}.
It can  be used whenever the unpertubed ground state obeys Wick's theorem. Unfortunately,
the standard  derivation of the Feynman-Dyson expansion (see, e.g., Ref.~\cite{Fetter}) which relies
on the adiabatic switching of a perturbation can only be applied to non-degenerate ground states.
It was recently shown explicitly that  the Gell-Mann and Low theorem used in the derivation
of Feynman-Dyson series fails for degenerate ground states~\cite{Brouder08}.

Exact quasi-adiabatic continuation~\cite{hwen,tjo}
provides an alternative path to defining $H_{\eff}$ via a unitary transformation
starting from a full Hamiltonian. This method however requires a constant lower bound on the energy gap
of a perturbed Hamiltonian which is typically very hard to check.

Traditional textbook treatment of a degenerate perturbation theory~\cite{Kato49,Messiah61,ReedSimon}
is formulated in terms of  perturbed and unperturbed resolvents
\[
\tilde{G}(z)=(zI - H_0 - \epsilon V)^{-1} \quad \mbox{and} \quad G(z)=(zI-H_0)^{-1}.
\]
An effective low-energy Hamiltonian  can be obtained from $\tilde{G}(z)$ using the self-energy method~\cite{AGD,Fetter,KKR}.
To sketch the method, we shall adopt the standard notations by writing any operator $O$
as a block matrix
\be
\label{pmnote}
O=\left[ \ba{cc}
O_{-} & O_{-+} \\ O_{+-} & O_{+} \\
\ea
\right]
\ee
where the two blocks correspond to the low-energy and the high-energy subspaces
respectively.
One can define an effective low-energy resolvent
as $\tilde{G}_{-}(z)$, that is, the low-energy block of $\tilde{G}(z)$. Its importance comes from the
fact that for sufficiently small $\epsilon$
eigenvalues of $H_0+\epsilon V$ originating from the low-energy subspace
coincide with the poles of $\tilde{G}_{-}(z)$, see~\cite{KKR}.
The poles of $\tilde{G}_{-}(z)$ can be found  using the Dyson equation~\cite{Fetter}
\[
\tilde{G}_{-}(z)=G_{-}(z) + \tilde{G}_{-}(z) \Sigma(z) G_{-}(z),
\]
where
\[
\Sigma(z)=\epsilon V_{-} + \sum_{n=0}^\infty \epsilon^{2+n} \, V_{-+} (G_{+}(z) V_{+})^n G_{+}(z) V_{+-}
\]
is the so-called self-energy operator acting on the low-energy subspace.
The Dyson equation yields
\[
\tilde{G}_{-}(z)^{-1} = G_{-}(z)^{-1} - \Sigma(z)=zI - (H_0)_{-} - \Sigma(z).
\]
Hence $z$ is an eigenvalue of $H_0+\epsilon V$ originating from the low-energy subspace
(the pole of  $\tilde{G}_{-}(z)$) iff $z$ is an eigenvalue of
$(H_0)_{-} +\Sigma(z)$. In that sense one can regard $H_{\eff}(z)\equiv (H_0)_{-} +\Sigma(z)$ as
a $z$-dependent effective Hamiltonian on the low-energy subspace.
Although in many important cases the $z$-dependence of $H_{\eff}(z)$ can be
neglected~\cite{KKR}, there is no systematic way of getting rid of this dependence.
This is the main disadvantage of the self-energy formalism compared with the SW method.

Another commonly used version of the degenerate perturbation theory is
the expansion due to Bloch~\cite{Bloch58,Messiah61,Lindgren73}, see~\cite{JordanFarhi:08,Koenig10}
for some recent applications. The Bloch expansion is conceptually very similar  to the SW method.
It provides a systematic way of constructing an ``effective Hamiltonian" acting on the low-energy subspace,
but  this Hamiltonian  is generally not hermitian and as such it does not describe any physical theory.
To make this point more clear let us recap the main steps of the Bloch expansion, see for instance~\cite{Lindgren73,JordanFarhi:08}.
Let $\{ |\psi_j\ra \}_j$ be the eigenvectors $H_0+\epsilon V$
originating from the low-energy subspace and $\{\lambda_j \}_j$ be the
corresponding eigenvalues.
Decompose
\[
|\psi_j\ra = |\alpha_j\ra + |\alpha_j^\perp\ra,
\quad
 \mbox{where} \quad
  |\alpha_j\ra \in \calP_0 \quad \mbox{and} \quad
|\alpha_j^\perp\ra \in \calP_0^\perp.
\]
Define an operator $\calU$
such that $\calU\, |\alpha_j \ra =|\psi_j\ra$ and $\calU\, |\psi\ra=0$ for all $\psi \in \calP_0^\perp$.
The operator $\calU$ is analogous to the (inverse) SW transformation, although it is not unitary.
Using the notations Eq.~(\ref{pmnote}) one can write $\calU$ as
\[
\calU=\left[ \ba{cc}
I & 0 \\ \calU_{+-} & 0 \\
\ea
\right].
\]
It shows that $\calU$  is a (non-hermitian) projector onto the perturbed low-energy subspace.
Define the effective Hamiltonian acting on $\calP_0$ as
\[
H_{\eff}=P_0(H_0+\epsilon V) \calU,
\]
where $P_0$ is the projector onto $\calP_0$. It follows that
\[
H_{\eff}\, |\alpha_j\ra = P_0(H_0+\epsilon V)\, |\psi_j\ra = \lambda_j\, P_0\, |\psi_j\ra= \lambda_j\, |\alpha_j\ra.
\]
Thus one can regard $H_{\eff}$ as a counterpart of the effective Hamiltonian in the SW theory --- it acts only on the
unperturbed low-energy subspace and its spectrum reproduces eigenvalues
of $H_0+\epsilon V$ originating from the low-energy subspace.
However, since $|\alpha_j\ra$ do not constitute an orthonormal basis of $\calP_0$, the operator $H_{\eff}$ is not hermitian, and thus it cannot be regarded as a physical Hamiltonian.

The advantage of the Bloch expansion is that the perturbative series for $H_{\eff}$ has somewhat simpler structure
compared with the corresponding expansion  in the SW method.
Indeed, using the fact that $\calU$ is a projector
one easily gets
\be
\label{Bloch1}
[H_0+\epsilon V,\calU]\calU=0.
\ee
Using the identities $P_0 \calU= P_0$ and $\calU P_0 = \calU$
one can rewrite Eq.~(\ref{Bloch1}) as
\be
\label{Bloch2}
[H_0,\calU]=-\epsilon [V,\calU] \calU=0.
\ee
Expanding $\calU=\sum_{n=0}^\infty \calU_n \epsilon^n$ with $\calU_0=P_0$ and using Eq.~(\ref{Bloch2}) one
can express $\calU_n$ as a second-degree polynomial
in $\calU_1,\ldots,\calU_{n-1}$ which provides a recursive rule for computing $\calU_n$.
This leads to a simple diagram technique for computing the Taylor coefficients of $H_{\eff}$,
see~\cite{Lindgren73,JordanFarhi:08} for details.

\subsection{Organization of the paper}
To be self-contained, the paper repeats some well-known results such as  the low-order terms in the SW series.
However  we present these results in a very systematic and compact form that we believe might be useful for other workers.

Sections~\ref{sec:crot},\ref{sec:SW} provide an elementary introduction to the SW method for
finite-dimensional Hilbert spaces. Section~\ref{sec:crot} summarizes the definition and basic facts
concerning  the direct rotation between linear subspaces. The direct rotation was originally introduced
in the context of perturbation theory by Davis and Kahan in~\cite{DavisKahan69}.
Although we hardly discover any new properties of the direct rotation, some of the basic facts such as
the behavior of the direct rotation under tensor products (see Section~\ref{subs:mult}) might not be very well-known.

Section~\ref{sec:SW} defines the SW transformation as the direct rotation
from the low-energy subspace of a perturbed Hamiltonian $H_0+\epsilon V$ to the low-energy
subspace of $H_0$. This definition provides an elementary
proof of  additivity of $H_{\eff}$. Namely, given a bipartite system $AB$ with no interactions between $A$ and $B$,
the effective Hamiltonians $H_{\eff}[AB]$, $H_{\eff}[A]$, and $H_{\eff}[B]$ describing the joint system
and the individual systems respectively are related as
\[
H_{\eff}[AB] =H_{\eff}[A] \otimes I_B + I_A \otimes H_{\eff}[B].
\]
Here $I_A,I_B$ are the identity operators acting on the low-energy subspaces of $A$ and $B$.
We note that the above additivity does not hold on the entire Hilbert space
as one could naively expect.
In Section~\ref{subs:series} we  compute the Taylor series for the SW transformation and $H_{\eff}$.
Our presentation uses the formalism of superoperators to develop the series. It allows us to keep track of cancelations between different terms in a systematic way and obtain more compact expressions for the Taylor coefficients.
A convenient diagram technique for computing $H_{\eff}$ is presented in Section~\ref{subs:diagram}.
Convergence of the series is analyzed in Section~\ref{subs:convergence}.
To the best of our knowledge, the approach taken in Sections~\ref{subs:series},\ref{subs:diagram},\ref{subs:convergence}
is new. Our main technical contributions are presented in Section~\ref{sec:manybody}
that specializes the SW method to weakly interacting spin systems.

\newpage
\section{Direct rotation between a pair of subspaces}
\label{sec:crot}
The purpose of this section is to define a direct rotation between a pair of linear subspaces
and derive its basic properties such as a multiplicativity under tensor products.

\subsection{Rotation of one-dimensional subspaces}
\label{subs:1d}
Let $\calH$ be any finite-dimensional Hilbert space.
For any normalized state $\psi \in \calH$
define a reflection operator $R_\psi$ that flips the sign of $\psi$ and acts trivially on the orthogonal
complement of $\psi$, that is,
\be
\label{Refl}
R_\psi=I-2|\psi\ra\la \psi|.
\ee
Given a pair of non-orthogonal states
$\psi,\phi \in \calH$  we would like to define a canonical unitary operator $U_{\psi\to \phi}$
mapping $\psi$ to $\phi$ up to an overall phase. Let us first consider the double reflection
operator $R_\phi R_\psi$. It rotates the two-dimensional subspace spanned by $\psi$ and $\phi$ by
$2\theta$, where $\theta\in [0,\pi/2)$ is the angle between $\psi$ and $\phi$.
In addition, $R_\phi R_\psi$ acts as the identity in the orthogonal complement to $\psi$ and $\phi$.
Thus we can choose the desired unitary operator mapping from $\psi$ to $\phi$ as  $U_{\psi\to \phi}=\sqrt{R_\phi R_\psi}$
assuming that the square root is well-defined.
\begin{dfn}
Let $\psi,\phi \in \calH$ be any non-orthogonal states.
Define a direct rotation from $\psi$ to $\phi$ as a unitary operator
\be
\label{canon1d}
U_{\psi\to \phi}=\sqrt{R_\phi R_{\psi}}
\ee
Here $\sqrt{z}$ is defined on a complex plane with a branch cut along the negative real axis
such that $\sqrt{1}=1$.
\end{dfn}
\noindent
It is worth pointing out that any unitary operator is normal and thus the square root
$\sqrt{R_\phi R_{\psi}}$ is well defined provided that $R_\phi R_{\psi}$ has
no eigenvalues lying on the chosen branch cut of the $\sqrt{z}$ function.
The following lemma shows that $U_{\psi\to \phi}$ is well defined and performs the desired transformation.
\begin{lemma}
\label{lemma:1d}
 Let $\psi,\phi \in \calH$ be any non-orthogonal states. Then the double reflection
 operator $R_\phi R_{\psi}$ has no eigenvalues on the negative real axis.
Furthermore,  fix the relative phase of $\psi$ and $\phi$ such that
 $\la \psi|\phi\ra$ is real and positive. Then $U_{\psi\to \phi}\, |\psi\ra=|\phi\ra$.
 \end{lemma}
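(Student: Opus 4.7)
The plan is to reduce everything to an explicit two-dimensional calculation. The key observation is that both $R_\psi$ and $R_\phi$ act as the identity on the orthogonal complement of $W := \mathrm{span}(\psi,\phi)$, so $R_\phi R_\psi$ preserves $W$ and $W^\perp$, acts as $I$ on $W^\perp$, and contributes eigenvalue $1$ (not on the branch cut) from every vector in $W^\perp$. Hence the entire question is about the $2\times 2$ block on $W$.

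On $W$, I would pick the orthonormal basis obtained by fixing $|e_1\rangle = |\psi\rangle$ and letting $|e_2\rangle$ be the unit vector in $W$ orthogonal to $\psi$ chosen so that $|\phi\rangle = \cos\theta\, |e_1\rangle + \sin\theta\, |e_2\rangle$ with $\sin\theta \ge 0$. The phase convention $\langle \psi|\phi\rangle \in \mathbb{R}_{>0}$ makes $\cos\theta > 0$, so $\theta \in [0,\pi/2)$ — this is exactly where non-orthogonality of $\psi,\phi$ enters. A direct computation gives
\[
R_\psi|_W = \begin{pmatrix} -1 & 0 \\ 0 & 1 \end{pmatrix}, \qquad
R_\phi|_W = \begin{pmatrix} -\cos 2\theta & -\sin 2\theta \\ -\sin 2\theta & \cos 2\theta \end{pmatrix},
\]
so $R_\phi R_\psi|_W$ is the planar rotation by angle $2\theta$, with eigenvalues $e^{\pm 2i\theta}$.

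Since $2\theta \in [0,\pi)$, these eigenvalues lie strictly off the negative real axis, which combined with the eigenvalue $1$ on $W^\perp$ proves the first assertion and guarantees that the principal square root (with the chosen branch) is well defined. The principal square root of the rotation by $2\theta$ is the rotation by $\theta$, so
\[
U_{\psi\to\phi}|_W = \begin{pmatrix} \cos\theta & -\sin\theta \\ \sin\theta & \cos\theta \end{pmatrix},
\]
and applying this to $|\psi\rangle = |e_1\rangle$ yields $\cos\theta\, |e_1\rangle + \sin\theta\, |e_2\rangle = |\phi\rangle$, proving the second assertion.

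The only subtle point is consistency of the branch choice: I need to verify that taking the principal square root of $R_\phi R_\psi$ and then restricting to $W$ agrees with taking the square root of the $2\times 2$ rotation by $2\theta$ and interpreting it as rotation by $\theta$ rather than by $\theta - \pi$. This is ensured because the eigenvalues $e^{\pm 2i\theta}$ lie in the open right/upper half-plane relative to the cut (specifically, their arguments are in $(-\pi,\pi)$), so the principal branch picks out the unique square roots with arguments in $(-\pi/2,\pi/2)$, namely $e^{\pm i\theta}$, which reassemble to the rotation by $\theta$. This is the only place where one must be careful; everything else is a straightforward matrix calculation.
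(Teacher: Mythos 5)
Your proof is correct and follows essentially the same route as the paper: reduce to the two-dimensional span of $\psi$ and $\phi$, compute that $R_\phi R_\psi$ restricted there is the rotation by $2\theta$ with $2\theta\in[0,\pi)$ (the paper writes this as $\exp(2i\theta\sigma^y)$ in a Pauli basis), and conclude that the principal square root is the rotation by $\theta$, which carries $\psi$ to $\phi$. Your closing paragraph on branch consistency just makes explicit a point the paper passes over silently, and is a welcome addition rather than a deviation.
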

\begin{proof}
Since $U_{\psi\to \phi}$ acts as the identity on the subspace orthogonal to $\psi$ and $\phi$, it suffices
to consider the case $\calH=\CC^2$. Without loss of generality $|\psi\ra=|1\ra$
and $|\phi\ra=\sin{(\theta)}\, |0\ra + \cos{(\theta)}\, |1\ra$, where, by assumption, $0\le \theta<\pi/2$.
Using the definition Eq.~(\ref{Refl}) one gets
\[
R_\psi=\sigma^z \quad \mbox{and} \quad  R_\phi=\cos{(2\theta)}\, \sigma^z - \sin{(2\theta)}\, \sigma^x.
\]
It yields
\[
R_\phi R_\psi=\cos{(2\theta)}\, I + i \sin{(2\theta)} \, \sigma^y = \exp{(2i\theta \sigma^y)}.
\]
Since $0\le 2\theta<\pi$, no eigenvalue of $R_\phi R_\psi$ lies on the negative real axis
and thus
\[
U_{\psi\to \phi} =  \sqrt{R_\phi R_\psi} =  \exp{(i\theta \sigma^y)}
\]
is uniquely defined.  We get $U_{\psi\to \phi} \, |1\ra = \sin{(\theta)}\, |0\ra + \cos{(\theta)}\, |1\ra$,
that is, $U_{\psi\to \phi}\, |\psi\ra =|\phi\ra$.
\end{proof}
Later on we shall need the following property of the direct rotation.
\begin{corol}
\label{cor:1d}
Let $\psi,\phi \in \calH$ be any non-orthogonal states,
$P=|\psi\ra\la \psi|$ and $P_0=|\phi\ra\la \phi|$. The direct
rotation from $\psi$ to $\phi$ can be written as
$U_{\psi\to \phi}=\exp{(S)}$ where $S$ is an anti-hermitian operator
with the following properties:
\begin{itemize}
\item $PSP=P_0 S P_0=(I-P)S(I-P)=(I-P_0)S(I-P_0)=0$,
\item $\|S\|<\pi/2$.
\end{itemize}
\end{corol}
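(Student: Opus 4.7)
The plan is to piggyback on the two-dimensional analysis already carried out in the proof of Lemma~\ref{lemma:1d} and exhibit $S$ explicitly. Since $R_\phi R_\psi$ acts as the identity on the orthogonal complement of the plane $\calM=\mathrm{span}(\psi,\phi)$, so does $U_{\psi\to\phi}$; hence the natural candidate is $S=0$ on $\calM^\perp$ and $S=i\theta\,\sigma^y$ on $\calM$, where $\theta\in[0,\pi/2)$ is the angle between $\psi$ and $\phi$ and $\sigma^y$ is defined in the orthonormal basis $\{|\psi^\perp\rangle,|\psi\rangle\}$ of $\calM$ used in the lemma. Then $S$ is anti-hermitian by construction, $\exp(S)$ agrees with $U_{\psi\to\phi}$ both on $\calM$ (this is exactly what was computed in the lemma) and on $\calM^\perp$, and $\|S\|=\theta<\pi/2$, which takes care of the norm bound for free.

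Next I would verify the four algebraic identities on $S$. Since $S$ vanishes on $\calM^\perp$, every identity reduces to a statement inside the two-dimensional plane $\calM$. In the basis $\{|\psi^\perp\rangle,|\psi\rangle\}$ the operator $\sigma^y$ is purely off-diagonal, so $\langle\psi|S|\psi\rangle=0$ and $\langle\psi^\perp|S|\psi^\perp\rangle=0$; multiplying by the appropriate rank-one projectors inside $\calM$ (and recalling $S=0$ on $\calM^\perp$) gives $PSP=0$ and $(I-P)S(I-P)=0$. The conditions on $P_0$ are the same statement after rewriting $\sigma^y$ in the basis $\{|\phi^\perp\rangle,|\phi\rangle\}$.

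The key point—and the only place where anything needs to be checked—is that $\sigma^y$ remains off-diagonal in this rotated basis. Here I would use the fact that the change of basis from $\{|\psi^\perp\rangle,|\psi\rangle\}$ to $\{|\phi^\perp\rangle,|\phi\rangle\}$ is precisely a rotation about the $y$-axis (this is visible from $|\phi\rangle=\sin\theta|\psi^\perp\rangle+\cos\theta|\psi\rangle$ together with the analogous formula for $|\phi^\perp\rangle$ obtained by orthogonality within $\calM$). Such a rotation commutes with $\sigma^y$, so the matrix of $\sigma^y$ is invariant under the change of basis and in particular still has vanishing diagonal entries; this yields $P_0SP_0=0$ and $(I-P_0)S(I-P_0)=0$.

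The main obstacle, modest as it is, is bookkeeping: making sure that the two-dimensional calculation is transported cleanly to the full space, so that the four conditions are interpreted with $P,P_0$ as rank-one projectors on $\calH$ rather than on $\calM$, and checking that the cross terms between $\calM$ and $\calM^\perp$ in expressions like $(I-P)S(I-P)$ vanish because $S$ annihilates $\calM^\perp$. Once the ambient orthogonality is handled, the corollary is essentially a restatement of the explicit formula $U_{\psi\to\phi}=\exp(i\theta\sigma^y)$ obtained inside $\calM$.
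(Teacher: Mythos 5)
Your proposal is correct and takes essentially the same approach as the paper: both exhibit $S$ explicitly as $i\theta\,\sigma^y$ on $\mathrm{span}(\psi,\phi)$ and $0$ on its orthogonal complement, get $\|S\|=\theta<\pi/2$ for free, and verify the four block conditions by reducing everything to the two-dimensional plane. The only cosmetic difference is that you deduce $\langle\phi|S|\phi\rangle=\langle\phi^\perp|S|\phi^\perp\rangle=0$ from the fact that the basis change $\exp(i\theta\sigma^y)$ commutes with $\sigma^y$, whereas the paper simply computes these diagonal matrix elements directly.
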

\begin{proof}
Indeed, we can assume that $|\psi\ra=|1\ra$
and $|\phi\ra=\sin{(\theta)}\, |0\ra + \cos{(\theta)}\, |1\ra$ for some $\theta\in [0,\pi/2)$.
Choose $S=i\theta \sigma^y$ and  $S=0$ in the orthogonal complement
to $\psi$ and $\phi$. By assumption, $\|S\|=|\theta|<\pi/2$.
A simple algebra yields $\la \psi|S|\psi\ra=\la \phi|S|\phi\ra=0$.
Restricting all operators on the two-dimensional subspace spanned by $\psi$ and $\phi$
one gets $(I-P)S(I-P)=\la \psi^\perp|S|\psi^\perp\ra$ and $(I-P_0)S(I-P_0)=\la \phi^\perp|S|\phi^\perp\ra$,
where $|\psi^\perp\ra=|0\ra$ and $|\phi^\perp\ra=\cos{(\theta)}\, |0\ra-\sin{(\theta)}\, |1\ra$.
A simple algebra yields $\la \psi^\perp|S|\psi^\perp\ra=\la \phi^\perp|S|\phi^\perp\ra=0$.
\end{proof}
\subsection{Rotation of arbitrary subspaces}

Let $\calP,\calP_0 \subseteq \calH$ be a pair of linear subspaces of the same dimension
and $P$, $P_0$ be the orthogonal projectors onto $\calP,\calP_0$.
We would like to define a canonical unitary operator $U$ mapping $\calP$ to $\calP_0$.
By analogy with the non-orthogonality constraint used in the one-dimensional case we shall impose a
constraint
\be
\label{distance1}
\| P - P_0\|<1.
\ee
The meaning of this constraint is clarified by the following simple fact.
\begin{prop}
Condition $\| P - P_0\|<1$ holds iff no vector in $\calP$ is orthogonal to $\calP_0$
and vice verse.   In particular, $\| P - P_0\|<1$ implies that $\dim{\calP}=\dim{\calP_0}$.
\end{prop}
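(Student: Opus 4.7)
The plan is to handle the iff first, with the dimension statement falling out as an easy corollary.

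For the easy direction of the equivalence (by contrapositive), suppose some nonzero $\psi\in\calP$ is orthogonal to $\calP_0$. Then $P\psi=\psi$ and $P_0\psi=0$, so $(P-P_0)\psi=\psi$ and $\|P-P_0\|\ge 1$. The symmetric statement with the roles of $\calP$ and $\calP_0$ exchanged is identical.

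For the reverse direction I would exploit that $A:=P-P_0$ is Hermitian, so $\|A\|$ is the maximum of $|\lambda|$ over eigenvalues of $A$, and moreover $\|A\|\le 1$ always (since $P$ and $P_0$ are contractions and $(P-P_0)=(P-P_0)^*$ with $-I \preceq A \preceq I$ as one sees from $P\preceq I$, $P_0\preceq I$). It therefore suffices to show that neither $+1$ nor $-1$ is an eigenvalue of $A$ under the hypothesis. Assume $A\psi=\psi$ for some nonzero $\psi$. Left-multiplying by $P$ yields $P\psi-PP_0\psi=P\psi$, so $PP_0\psi=0$, meaning the vector $P_0\psi\in\calP_0$ is orthogonal to $\calP$. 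If $P_0\psi\ne 0$, this already contradicts the hypothesis. If $P_0\psi=0$, then $\psi=A\psi+P_0\psi=P\psi\in\calP$ and $\psi\perp\calP_0$, again contradicting the hypothesis. The eigenvalue $-1$ case is handled by interchanging $P$ and $P_0$ (equivalently, by applying the same argument to $-A$).

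For the dimension statement, if $\dim\calP>\dim\calP_0$ (the other inequality is symmetric), then the restriction $P_0|_\calP\colon\calP\to\calP_0$ has nontrivial kernel by a dimension count, producing a nonzero $\psi\in\calP$ with $P_0\psi=0$; by the forward direction already proved, this forces $\|P-P_0\|\ge 1$. Contrapositively, $\|P-P_0\|<1$ implies $\dim\calP=\dim\calP_0$.

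No step looks genuinely hard. The only thing to be careful about is the case split on whether $P_0\psi$ vanishes in the eigenvector argument — this is what allows the single hypothesis ``no vector in $\calP$ is orthogonal to $\calP_0$ \emph{and vice versa}'' to rule out both of the extreme eigenvalues $\pm 1$ of the Hermitian operator $P-P_0$.
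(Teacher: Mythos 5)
Your proof is correct and takes essentially the same approach as the paper: both arguments reduce the condition $\|P-P_0\|=1$ to the existence of an eigenvector of the Hermitian operator $P-P_0$ with eigenvalue $\pm 1$, and then show that such an eigenvector produces a vector of one subspace orthogonal to the other. The only differences are cosmetic --- the paper extracts the orthogonal vector by noting that $\la \psi|P|\psi\ra - \la \psi|P_0|\psi\ra = \pm 1$ forces $\la \psi|P|\psi\ra =1$ and $\la \psi|P_0|\psi\ra =0$ (or vice versa), so the eigenvector itself lies in $\calP\cap\calP_0^\perp$ or in $\calP_0\cap\calP^\perp$, whereas you multiply the eigenvalue equation by $P$ and case-split on whether $P_0\psi$ vanishes (your first case is in fact vacuous for exactly this reason), and you spell out the kernel-dimension count for the final claim that the paper leaves implicit.
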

\begin{proof}
For any choice of the projectors $P,P_0$ the spectrum of $P-P_0$ lies on the interval $[-1,1]$
which implies $\|P-P_0\|\le 1$. Suppose $\|P-P_0\|=1$. Then there exists $\psi$
such that $(P-P_0)\, |\psi\ra =\pm |\psi\ra$. This is possible only if
$\la \psi | P | \psi\ra =1$, $\la \psi|P_0|\psi\ra =0$ or vice verse. Thus $\psi\in \calP$ and $\psi\in \calP_0^\perp$,
or vice verse. Conversely, if such a vector $\psi$ exists, then $(P-P_0)\, |\psi\ra =\pm |\psi\ra$, that is,
$\|P-P_0\|=1$. By the same token, if we assume that $\dim{\calP}>\dim{\calP_0}$ (or vice verse) there must exist
a vector $\psi\in \calP$ that is orthogonal to $\calP_0$ (or vice verse), that is, $\|P-P_0\|=1$.
\end{proof}

For any linear subspace $\calP$
define a reflection operator $R_{\calP}$ that flips  the sign of all vectors in $\calP$ and acts trivially on the orthogonal
complement to $\calP$, that is,
\be
\label{Refl1}
R_{\calP}=2P-I.
\ee
Following~\cite{DavisKahan69} let us define a direct rotation between a pair of subspaces
as follows.
\begin{dfn}
Let $\sqrt{z}$ be the square-root function defined on a complex plane with a branch cut along the negative real axis
and such that $\sqrt{1}=1$.
A unitary operator
\be
\label{U}
U=\sqrt{R_{\calP_0} R_\calP}
\ee
is called a direct rotation from $\calP$ to $\calP_0$.
\end{dfn}
\begin{lemma}
\label{lemma:d>1}
Suppose $\|P-P_0\|<1$.
Then no eigenvalue of $R_{\calP_0} R_{\calP}$ lies  on the negative real axis, so that $U$ is uniquely defined by Eq.~(\ref{U})
and
\be
\label{rotation}
U P  U^\dag =P_0.
\ee
\end{lemma}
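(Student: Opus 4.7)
The plan is to reduce the multidimensional case to the one-dimensional case already handled by Lemma~\ref{lemma:1d}, via a simultaneous block decomposition of $\calH$ under the pair of projectors $P$ and $P_0$ (the Halmos two-projections decomposition). I would first establish an orthogonal decomposition
\[
\calH = \calH_{11} \oplus \calH_{00} \oplus \calH_{10} \oplus \calH_{01} \oplus \bigoplus_k \calH_k,
\]
where $\calH_{11}=\calP\cap\calP_0$, $\calH_{00}=\calP^\perp\cap\calP_0^\perp$, $\calH_{10}=\calP\cap\calP_0^\perp$, $\calH_{01}=\calP^\perp\cap\calP_0$, and each $\calH_k$ is a two-dimensional subspace invariant under both $P$ and $P_0$ on which each of them acts as a rank-one projector onto one of a pair of non-orthogonal unit vectors. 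The cleanest way to produce this decomposition is to diagonalize the self-adjoint operator $P+P_0$: eigenvectors with eigenvalue $2$ span $\calH_{11}$, those with eigenvalue $0$ span $\calH_{00}$, those with eigenvalue $1$ split into $\calH_{10}\oplus\calH_{01}$ (according to whether $P$ or $P_0$ acts as the identity), and the remaining eigenspaces, with eigenvalues strictly between $0$ and $2$ different from $1$, pair up under $P_0$ to form the two-dimensional blocks $\calH_k$.

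Second, I would invoke the Proposition just above to conclude that the hypothesis $\|P-P_0\|<1$ forces $\calH_{10}=\calH_{01}=\{0\}$: a nonzero vector in either of these subspaces would lie in $\calP$ and be orthogonal to $\calP_0$ (or vice versa), contradicting $\|P-P_0\|<1$. On $\calH_{11}$ both reflections equal $-I$, and on $\calH_{00}$ both equal $+I$, so in both cases $R_{\calP_0}R_\calP=I$ with eigenvalue $1$, the principal square root is trivially well-defined and equals $I$, and $UPU^\dag=P=P_0$ on these blocks.

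The last step would be to restrict to each two-dimensional block $\calH_k$. There the restrictions of $P$ and $P_0$ are rank-one projectors onto some non-orthogonal unit vectors $\psi_k,\phi_k\in\calH_k$, and the restrictions of $R_\calP$ and $R_{\calP_0}$ coincide exactly with the one-dimensional reflections $R_{\psi_k}$ and $R_{\phi_k}$ from Section~\ref{subs:1d}. Lemma~\ref{lemma:1d} then gives that $R_{\phi_k}R_{\psi_k}$ has no eigenvalue on the negative real axis and that its principal square root sends $\psi_k$ to $\phi_k$ (after an appropriate choice of phases), which is exactly the intertwining $UPU^\dag=P_0$ on the block. Assembling the blocks via an orthogonal direct sum preserves both the absence of negative eigenvalues and the intertwining property, yielding Eq.~(\ref{rotation}) globally.

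The main obstacle is the block decomposition itself: verifying that the eigenspaces of $P+P_0$ corresponding to intermediate eigenvalues really do group into two-dimensional joint invariants. I would handle this by showing that $P_0$ maps each such eigenspace of $P+P_0$ to the eigenspace with the complementary eigenvalue, so that a pair of complementary eigenspaces of equal dimension spans a joint invariant that further decomposes into 2D irreducibles. Once this is in place, everything else follows by direct computation on each block and by invoking the one-dimensional lemma.
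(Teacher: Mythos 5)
Your proposal follows the same route as the paper's own proof: simultaneously block-diagonalize the pair $(P,P_0)$ into $1\times 1$ and $2\times 2$ blocks on which $P,P_0$ act as rank-one projectors, kill the problematic blocks using $\|P-P_0\|<1$, and invoke Lemma~\ref{lemma:1d} block by block. The only difference is that the paper imports the simultaneous block-diagonalization from Chapter~7.1 of~\cite{Bhatia}, whereas you attempt to construct it by hand from the spectral decomposition of $S=P+P_0$ --- and it is precisely in that construction, the step you yourself flag as the main obstacle, that there is a genuine error.

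The claim that $P_0$ maps the $\lambda$-eigenspace $E_\lambda$ of $S=P+P_0$ into the eigenspace $E_{2-\lambda}$ is false: the image of $P_0$ lies in the range of $P_0$, whose nonzero vectors are in general not eigenvectors of $S$ at all. This already fails inside a single $2\times 2$ block: take $P=|1\ra\la 1|$ and $P_0=|\phi\ra\la\phi|$ with $|\phi\ra=\sin\theta\,|0\ra+\cos\theta\,|1\ra$, $0<\theta<\pi/2$. The eigenvector $v_+\propto(\sin\theta,\,1+\cos\theta)$ of $S$ with eigenvalue $1+\cos\theta$ satisfies $P_0 v_+=(1+\cos\theta)\,|\phi\ra\ne 0$, while $S|\phi\ra=(\sin\theta,\,2\cos\theta)$ is not proportional to $|\phi\ra$, so $P_0 v_+\notin E_{1-\cos\theta}$. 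The correct intertwiner is the difference $D=P-P_0$: from $SD+DS=2D$ one gets $SD=D(2I-S)$, so $D$ maps $E_\lambda$ into $E_{2-\lambda}$, and the companion identity $D^2=S(2I-S)$ shows that $D^2$ acts on $E_\lambda$ as the nonzero scalar $\lambda(2-\lambda)$ for $\lambda\in(0,2)$, so this map is a bijection. Then for any unit vector $u\in E_\lambda$ the subspace $\mathrm{span}\{u,Du\}$ is invariant under $S$ and $D$, hence under $P=\frac12(S+D)$ and $P_0=\frac12(S-D)$, and an orthonormal basis of $E_\lambda$ yields mutually orthogonal such blocks exhausting $E_\lambda\oplus E_{2-\lambda}$. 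Two further details you assert without proof are also supplied by the same identity: first, eigenvalue-$1$ vectors of $S$ satisfy $D^2\psi=\psi$, hence decompose into $(\pm 1)$-eigenvectors of $D$, which is exactly the statement that the eigenvalue-$1$ eigenspace equals $\calH_{10}\oplus\calH_{01}$ (and is zero when $\|P-P_0\|<1$); second, on each two-dimensional block both restrictions are genuinely rank-one, because their ranks sum to $\trace(S|_{\mathrm{block}})=\lambda+(2-\lambda)=2$, while rank $0$ or $2$ for either projector would force $S|_{\mathrm{block}}=I$, contradicting $\lambda\ne 1$. With $P_0$ replaced by $P-P_0$ and these two verifications added, your argument becomes a complete, self-contained proof of the decomposition that the paper cites from the literature, and the rest of your proof goes through as written.
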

It is worth pointing out that the direct rotation $U$ can also be defined as
the `minimal'  rotation that maps $\calP$ to $\calP_0$. More specifically,
among all unitary operators $V$ satisfying $VPV^\dag = P_0$ the direct rotation  differs least  from the identity in the Frobenius norm, see~\cite{DavisKahan69}.
If, in addition, $\| P-P_0\| <\sqrt{3}/2$, the Frobenius norm can be replaced by the operator norm~\cite{DavisKahan69}.
\begin{proof}[\bf Proof of Lemma~\ref{lemma:d>1}]
It is well-known that any pair of projectors can be simultaneously block-diagonalized with
blocks of size $2\times 2$ and $1\times 1$, and such that all $2\times 2$ blocks are rank-one
projectors, see Chapter~7.1 of~\cite{Bhatia}.
It follows that all operators $P,P_0,R_{\calP},R_{\calP_0}, U$ can be simultaneously block-diagonalized
with blocks of size $2\times 2$ and $1\times 1$.

Let us first consider some $1\times 1$ block. The restriction of $P$ and $P_0$ on this block
are scalars.  The condition $\|P-P_0\|<1$  implies that  $P=P_0=0$ or $P=P_0=1$. Thus
$R_{\calP}=R_{\calP_0}=\pm 1$.
In both case $R_{\calP_0} R_{\calP} =1$ and thus $U=1$.
Thus the identity $UPU^\dag=P_0$  holds for any $1\times 1$ block.

Let us now consider some $2\times 2$ block. In this block one has
$P=|\psi\ra\la \psi|$ and $P_0=|\phi\ra\la \phi|$ for some one-qubit states
$\psi,\phi \in \CC^2$. The condition $\|P-P_0\|<1$ implies that
$\psi$ and $\phi$ are non-orthogonal. In addition, the restriction of $U$
onto the considered block coincides with the direct rotation
$U_{\psi\to\phi}$, see Eq.~(\ref{canon1d}). Therefore Lemma~\ref{lemma:1d} implies that
$UPU^\dag =P_0$ in any $2\times 2$ block.
\end{proof}
For the later use let us point out one extra property of the direct rotation
which follows directly from the above proof and Corollary~\ref{cor:1d}.
\begin{corol}
\label{cor:d>1}
The direct
rotation from $\calP$ to $\calP_0$ can be written as
$U=\exp{(S)}$ where $S$ is an anti-hermitian operator
with the following properties:
\begin{itemize}
\item $PSP=P_0 S P_0=(I-P)S(I-P)=(I-P_0)S(I-P_0)=0$,
\item $\|S\|<\pi/2$.
\end{itemize}
\end{corol}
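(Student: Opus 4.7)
The plan is to reuse the simultaneous block-diagonalization employed in the proof of Lemma~\ref{lemma:d>1} and to define $S$ separately in each block, invoking Corollary~\ref{cor:1d} on the $2\times 2$ blocks and setting $S=0$ on the $1\times 1$ blocks. Since the four vanishing conditions and anti-hermiticity are ``local'' with respect to this decomposition (each of $P$, $P_0$, $I-P$, $I-P_0$ is itself block-diagonal in the same basis), they are preserved under direct summation, so it suffices to check them in each block individually.

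First I would invoke the structural fact from Chapter~7.1 of~\cite{Bhatia}: any pair of orthogonal projectors $P,P_0$ can be simultaneously block-diagonalized into $1\times 1$ and $2\times 2$ blocks, with the $2\times 2$ blocks being rank-one on both sides. Consequently $R_\calP$, $R_{\calP_0}$, and $U=\sqrt{R_{\calP_0}R_\calP}$ are all block-diagonal in the same basis, so $U$ decomposes as $U=\bigoplus_\alpha U_\alpha$. In each $1\times 1$ block the hypothesis $\|P-P_0\|<1$ forces $P=P_0\in\{0,1\}$, hence $U_\alpha=1$; I set $S_\alpha=0$ there, which trivially satisfies all four vanishing conditions. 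In each $2\times 2$ block, $P$ and $P_0$ are rank-one projectors onto non-orthogonal unit vectors $\psi_\alpha,\phi_\alpha$, and $U_\alpha$ coincides with the one-dimensional direct rotation $U_{\psi_\alpha\to\phi_\alpha}$; Corollary~\ref{cor:1d} then supplies an anti-hermitian $S_\alpha$ with $\|S_\alpha\|<\pi/2$ satisfying the required vanishing relations with respect to $|\psi_\alpha\rangle\langle\psi_\alpha|$ and $|\phi_\alpha\rangle\langle\phi_\alpha|$.

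Next I would assemble $S=\bigoplus_\alpha S_\alpha$ and check that the global properties follow. Anti-hermiticity is immediate, and $U=\exp(S)$ holds block-by-block. Since $P=\bigoplus_\alpha P_\alpha$ and similarly for $P_0,I-P,I-P_0$, each of the products $PSP$, $P_0SP_0$, $(I-P)S(I-P)$, $(I-P_0)S(I-P_0)$ decomposes as a direct sum of the corresponding block products, all of which vanish by construction. Finally, $\|S\|=\max_\alpha \|S_\alpha\|$, and since there are only finitely many blocks (the Hilbert space is finite-dimensional) and each $\|S_\alpha\|<\pi/2$, this maximum is attained and is strictly less than $\pi/2$.

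I do not anticipate a genuine obstacle: the corollary is essentially a bookkeeping consequence of the $1$-dimensional case (Corollary~\ref{cor:1d}) once the block decomposition is in hand. The only subtle point is ensuring that the strict inequality $\|S\|<\pi/2$ survives the direct sum, which requires only the finite-dimensionality of $\calH$ (and would otherwise require separately ruling out a supremum of principal angles approaching $\pi/2$, which is exactly what $\|P-P_0\|<1$ excludes via $\|P-P_0\|=\sin\theta_{\max}$).
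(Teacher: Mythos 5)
Your proposal is correct and follows essentially the same route as the paper, whose proof is precisely the one-line instruction to apply Corollary~\ref{cor:1d} blockwise in the simultaneous block-diagonalization of $P$, $P_0$, and $U$ used in Lemma~\ref{lemma:d>1}. You have merely spelled out the bookkeeping the paper leaves implicit, including the (correct) observation that finite-dimensionality guarantees the blockwise strict bounds $\|S_\alpha\|<\pi/2$ survive as a strict bound on $\|S\|$.
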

\begin{proof}
Indeed, apply Corollary~\ref{cor:1d} to each block in the decomposition of $P,P_0$, and $U$.
\end{proof}

\subsection{Generator of the direct rotation}
\label{subs:S}
The explicit formula Eq.~(\ref{U}) for the direct rotation $U$ is not very useful if one needs to compute
$U$ perturbatively. In this section we describe an alternative definition of the direct rotation
in terms of its generator, that is, an antihermitian operator $S$ such that $U=\exp{(S)}$.
We shall see later that the Taylor coefficients in the perturbative series for $S$ can be computed
using a simple inductive formula.
\begin{lemma}
\label{lemma:S}
Suppose $\|P-P_0\|<1$. Then there exists a unique
anti-hermitian operator $S$ with the following properties:\\
(1) $\exp{(S)} P \exp{(-S)} = P_0$,\\
(2) $S$ is block-off-diagonal with respect to $P_0$, that is,
\be
P_0 S P_0 =0 \quad \mbox{and} \quad (I-P_0) S (I-P_0)=0.
\ee
(3) $\|S\|<\pi/2$.\\
The unitary operator $U=\exp{(S)}$ coincides with the direct rotation from $\calP$ to
$\calP_0$.
\end{lemma}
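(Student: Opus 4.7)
My plan is to obtain existence essentially for free from the material already in this section, and then to handle uniqueness in two steps: first extract a clean equation for $U'=\exp(S')$ from the off-diagonality condition~(2), and then use the norm bound~(3) to pin down which square root $U'$ must be.

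Existence should come directly from Corollary~\ref{cor:d>1}, which produces an anti-hermitian $S$ with $U=\exp(S)$, $P_0SP_0=(I-P_0)S(I-P_0)=0$, and $\|S\|<\pi/2$; these give properties~(2) and~(3). Property~(1) is Lemma~\ref{lemma:d>1}. The last sentence of the lemma, that $U=\exp(S)$ coincides with the direct rotation, is then automatic once uniqueness is in hand, because the witness supplied by Corollary~\ref{cor:d>1} is by construction the logarithm of the direct rotation.

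For uniqueness, suppose $S'$ is any anti-hermitian operator satisfying (1)--(3) and put $U'=\exp(S')$. My first step is to translate~(2) into the anticommutation relation $R_{\calP_0}S'+S'R_{\calP_0}=0$: writing $R_{\calP_0}=2P_0-I$ together with $S'=P_0S'(I-P_0)+(I-P_0)S'P_0$, this falls out by inspection. Anticommutation propagates through the exponential series to give $R_{\calP_0}U'=(U')^{-1}R_{\calP_0}$, and combining this with the reflection-form identity $U'R_{\calP}=R_{\calP_0}U'$ coming from property~(1) (and $R_{\calP}^2=I$) yields $(U')^2=R_{\calP_0}R_{\calP}$.

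My second step is to match $U'$ with the principal square root appearing in the definition of $U$. Because $S'$ is anti-hermitian with $\|S'\|<\pi/2$, each eigenvalue of $U'$ has the form $e^{i\alpha}$ with $|\alpha|<\pi/2$, so eigenvalues of $(U')^2$ have argument strictly in $(-\pi,\pi)$ and miss the branch cut of $\sqrt{z}$; functional calculus on the unitary $R_{\calP_0}R_{\calP}$ then forces $\sqrt{(U')^2}=U'$, while by definition the left-hand side equals $U$, so $U'=U$. To conclude $S=S'$ from $U=U'$ I then work eigenspace by eigenspace of $U$: on each eigenspace with eigenvalue $e^{i\gamma}$, both $S$ and $S'$ must act as multiplication by the unique imaginary number in $i(-\pi/2,\pi/2)$ whose exponential is $e^{i\gamma}$. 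The delicate point, and in my view the main obstacle, is the branch-cut bookkeeping: the strict inequality $\|S'\|<\pi/2$ is what keeps the eigenvalues of $(U')^2$ off the negative real axis and selects a single square root, so any slackness in the norm bound would break uniqueness.
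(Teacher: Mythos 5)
Your proof is correct, but your uniqueness argument takes a genuinely different route from the paper's. The paper writes $S$ in block form with respect to $P_0$, computes $U=\exp(S)$ explicitly to find that its diagonal blocks are $\cos\bigl(\sqrt{S_{1,2}S_{1,2}^\dag}\bigr)$ and $\cos\bigl(\sqrt{S_{1,2}^\dag S_{1,2}}\bigr)$, hence positive definite because $\|S\|<\pi/2$; any other candidate $\tilde{U}$ then satisfies $\tilde{U}=LU$ for a block-diagonal unitary $L$ (since $\tilde{U}U^\dag$ commutes with $P_0$), and positivity of the diagonal blocks of both $U$ and $\tilde{U}$ forces $L=I$ via Proposition~\ref{prop:1}. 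You instead convert condition (2) into the anticommutation relation $R_{\calP_0}S'+S'R_{\calP_0}=0$, propagate it through the exponential to get $R_{\calP_0}U'=(U')^{-1}R_{\calP_0}$, and combine this with $U'R_{\calP}=R_{\calP_0}U'$ (which follows from condition (1)) to obtain $(U')^2=R_{\calP_0}R_{\calP}$; the strict bound $\|S'\|<\pi/2$ keeps the spectrum of $(U')^2$ off the branch cut, so the principal square root appearing in Eq.~(\ref{U}) returns $U'$ itself, identifying $U'$ with the direct rotation. Your route buys something the paper obtains only indirectly: uniqueness of $U'$ and its identification with $\sqrt{R_{\calP_0}R_{\calP}}$ fall out of a single computation, and the role of the strict norm bound in selecting one branch of the square root is made completely transparent. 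The paper's block computation, in exchange, produces the explicit cosine form of the diagonal blocks of $U$ (a CS-decomposition-type fact) and needs no spectral-mapping bookkeeping. One point worth making explicit in your final step: since $S'$ is anti-hermitian (hence normal), it is simultaneously diagonalizable with $U'=\exp(S')$, and injectivity of $\beta\mapsto e^{i\beta}$ on $(-\pi/2,\pi/2)$ ensures the spectral projectors of $S'$ coincide with those of $U'$; this is precisely what licenses your eigenspace-by-eigenspace recovery of $S'$ from $U'$, and it is the same fact the paper invokes by saying that $\exp$ is invertible on operators of norm less than $\pi/2$.
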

\begin{proof}
We already know that  there exists at least one operator $S$
with the desired properties, see
Corollary~\ref{cor:d>1}.
Let us  show that   it is unique.
Indeed, write $S$ as
\be
\label{SW1}
S=\left( \ba{cc} 0 & S_{1,2}  \\
                             -S_{1,2}^\dag & 0 \\
                               \ea \right),
\ee
where the first and the second block correspond to the subspaces $\calP_0$ and $\calP_0^\perp$
respectively.
Computing the exponent $\exp{(S)}$ yields
\[
U=\exp{(S)}=\left( \ba{cc} U_{1,1} & U_{1,2}  \\
                             -U_{1,2}^\dag & U_{2,2} \\
                               \ea \right)
\]
where
\[
U_{1,1}=\cos{(A)}, \quad A\equiv \sqrt{S_{1,2} S_{1,2}^\dag} \quad
\mbox{and} \quad
U_{2,2}=\cos{(B)}, \quad B=\sqrt{S_{1,2}^\dag S_{1,2}}.
\]
Note that $\|A\| = \| B\| = \|S\| <\pi/2$. Hence $U_{1,1}$ and $U_{2,2}$ are hermitian positive-definite operators,
\be
\label{pos}
U_{1,1}>0 \quad \mbox{and} \quad U_{2,2}>0.
\ee
Let $\tilde{S}$ be any anti-hermitian operator satisfying the three conditions of the lemma
and let $\tilde{U}=\exp{(\tilde{S})}$. We have to prove that $\tilde{U}=U$ and $\tilde{S}=S$. Indeed, using the identity
\[
(\tilde{U} U^\dag)P_0(\tilde{U} U^\dag)^\dag=\tilde{U}P\tilde{U}^\dag =P_0
\]
we conclude that $\tilde{U} U^\dag$ commutes with $P_0$. This is possible only if $\tilde{U} U^\dag=L$
for some block-diagonal unitary $L$, that is,
\[
L=\left( \ba{cc} L_1 & 0  \\
                             0 & L_2 \\
                               \ea \right), \quad L_1 L_1^\dag=I, \quad L_2^\dag L_2=I.
\]
It follows that $\tilde{U}=LU$, that is $\tilde{U}_{1,1}=L_1 U_{1,1}$ and $\tilde{U}_{2,2}=L_2 U_{2,2}$.
Combining it with Eq.~(\ref{pos}) we arrive at $L_1=I$ and $L_2=I$ as follows from the following proposition.
\begin{prop}
\label{prop:1}
Let $H$ be a positive operator and $L$ be a unitary operator.
Suppose that $L H$ is also a positive operator. Then $L=I$.
\end{prop}
\begin{proof}
Indeed, consider the eigenvalue decomposition $L=\sum_{\alpha} e^{i\theta_\alpha} |\alpha\ra\la \alpha|$.
Since $L H>0$ one gets $\la \alpha |L H |\alpha\ra =  e^{i\theta_\alpha} \la \alpha |H|\alpha\ra > 0$.
Since $\la \alpha |H|\alpha\ra >0$ we conclude that $e^{i\theta_\alpha}>0$, that is, $e^{i\theta_\alpha}=1$.
Hence $L=I$.
\end{proof}
To summarize, we have shown that $\tilde{U}=U$.
Since the matrix exponential function $\exp{(M)}$ is invertible on the subset of matrices
satisfying $\|M\|<\pi/2$, we conclude that $\tilde{S}=S$.
\end{proof}

\subsection{Weak multiplicativity of the direct rotation}
\label{subs:mult}
Consider a bipartite system of Alice and Bob with a Hilbert space
$\calH=\calH^A \otimes \calH^B$.
Let $P^A,P^A_0$ be a pair of Alice's projectors acting on $\calH^A$.
Similarly, let $P^B,P^B_0$ be a pair of Bob's projectors acting on $\calH^B$.
We shall assume that
\be
\label{AcloseBclose}
\| P^A - P^A_0\| <1 \quad \mbox{and} \quad \|P^B-P^B_0\|<1
\ee
such that one can define local direct rotations $U^{A}$ and $U^{B}$,
\[
U^{A} P^A (U^{A})^\dag =P_0^A \quad \mbox{and} \quad U^{B} P^B (U^{B})^\dag = P_0^B.
\]
Consider also the global direct rotation $U^{AB}$ mapping $P^A\otimes P^B$ to $P^A_0 \otimes P^B_0$, that is,
\[
U^{AB} (P^A \otimes P^B) (U^{AB})^\dag = P^A_0\otimes P^B_0.
\]
It is worth pointing out that  in general $U^{AB}\ne U^{A} \otimes U^{B}$.
Indeed,  consider a special case when $P^A$ and $P^B$ project onto
one-dimensional subspaces of Alice and Bob.
Then clearly $P^A\otimes P^B$ is a rank-one projector. In this case $U^{AB}$ is a rotation in  some two-dimensional subspace of $\calH$, see Section~\ref{subs:1d}. On the other hand $U^{A} \otimes U^{B}$ is
a tensor product of a rotation in some two-dimensional subspace of $\calH^A$
and a rotation in some two-dimensional subspace of $\calH^B$. Therefore $U^{A}\otimes U^{B}$
acts non-trivially on some four-dimensional subspace of $\calH$ and hence
$U^{AB}\ne U^{A}\otimes U^{B}$.
This example shows that the direct rotation is not multiplicative under the tensor product.
Nevertheless, it features a certain weaker form of multiplicativity which  we derive in this section.

To start with, we need to check that the direct rotation $U^{AB}$ is well-defined.
It follows from the following proposition.
\begin{prop}
\label{prop:ABclose}
Suppose Alice's projectors  $P^A,P^A_0$ and Bob's projectors $P^B,P^B_0$  satisfy non-orthogonality constraints
Eq.~(\ref{AcloseBclose}).
Then
\be
\label{ABclose}
\| P^A \otimes P^B - P^A_0 \otimes P^B_0 \|<1.
\ee
\end{prop}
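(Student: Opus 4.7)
The plan is to use the characterization established in the earlier Proposition: $\|P - Q\| < 1$ if and only if no nonzero vector in the range of $P$ is orthogonal to the range of $Q$, and vice versa. So it suffices to show that no nonzero $\psi$ in $\mathrm{range}(P^A \otimes P^B) = \mathrm{range}(P^A) \otimes \mathrm{range}(P^B)$ is annihilated by $P^A_0 \otimes P^B_0$, and symmetrically with the roles of the two pairs of projectors swapped. Note that the naive triangle inequality bound $\|P^A\otimes P^B - P^A_0 \otimes P^B_0\| \le \|P^A - P^A_0\| + \|P^B - P^B_0\|$ is useless here because it only yields a bound of $2$, so a genuinely different argument is needed.

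The key observation I would extract from Eq.~(\ref{AcloseBclose}) is a linear-independence statement. Since $\|P^A - P^A_0\| < 1$, the Proposition implies $\dim\mathrm{range}(P^A) = \dim\mathrm{range}(P^A_0)$ and moreover that $P^A_0$ has trivial kernel when restricted to $\mathrm{range}(P^A)$; hence the restriction
\[
P^A_0\big|_{\mathrm{range}(P^A)} : \mathrm{range}(P^A) \to \mathrm{range}(P^A_0)
\]
is a bijection. In particular, for any linearly independent family $\{|a_i\rangle\} \subset \mathrm{range}(P^A)$, the images $\{P^A_0|a_i\rangle\}$ are linearly independent. The analogous statement holds on Bob's side.

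Now assume for contradiction that some $0 \ne \psi \in \mathrm{range}(P^A) \otimes \mathrm{range}(P^B)$ satisfies $(P^A_0 \otimes P^B_0)\psi = 0$. I would use the Schmidt decomposition of $\psi$ within the subspace $\mathrm{range}(P^A)\otimes\mathrm{range}(P^B)$:
\[
|\psi\rangle = \sum_i \sqrt{\lambda_i}\, |a_i\rangle \otimes |b_i\rangle,
\qquad \lambda_i > 0,
\]
where $\{|a_i\rangle\}$ is orthonormal in $\mathrm{range}(P^A)$ and $\{|b_i\rangle\}$ is orthonormal in $\mathrm{range}(P^B)$. Applying $P^A_0 \otimes P^B_0$ yields
\[
0 = (P^A_0 \otimes P^B_0)|\psi\rangle = \sum_i \sqrt{\lambda_i}\,\bigl(P^A_0|a_i\rangle\bigr) \otimes \bigl(P^B_0|b_i\rangle\bigr).
\]
By the previous paragraph, $\{P^A_0|a_i\rangle\}_i$ and $\{P^B_0|b_j\rangle\}_j$ are each linearly independent, and hence the product family $\{(P^A_0|a_i\rangle) \otimes (P^B_0|b_j\rangle)\}_{i,j}$ is linearly independent in $\mathrm{range}(P^A_0)\otimes\mathrm{range}(P^B_0)$. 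The displayed sum involves only the diagonal $i = j$ terms, so each coefficient $\sqrt{\lambda_i}$ must vanish, contradicting $\lambda_i > 0$. Swapping the roles of $(P^A,P^B)$ and $(P^A_0,P^B_0)$ gives the other direction, and invoking the Proposition again finishes the proof.

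The only genuinely subtle step is the third paragraph: recognizing that Schmidt decomposition plus the linear-independence consequence of $\|P^A - P^A_0\|<1$ is exactly what replaces the failed triangle inequality. Everything else is direct bookkeeping.
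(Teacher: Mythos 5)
Your proof is correct, but it takes a genuinely different route from the paper's. The paper also argues by contradiction, positing a nonzero $\psi$ with $(P^A\otimes P^B)\,|\psi\ra=0$ and $(P^A_0\otimes P^B_0)\,|\psi\ra=|\psi\ra$ (or vice versa), but it derives the contradiction from operator inequalities rather than linear independence: $\|P^A-P^A_0\|<1$ gives $P^A-P^A_0\le (1-\alpha)I$ for some $\alpha>0$, and sandwiching with $P^A$ yields $P^A P^A_0 P^A\ge \alpha P^A$; similarly $P^B P^B_0 P^B \ge \beta P^B$. Tensoring these two positive semi-definite inequalities and taking the expectation value in $\psi$ gives $0\ge \alpha\beta$, a contradiction (the paper's displayed inequality has a minor index slip --- the right-hand side should involve the unsubscripted projectors, or equivalently one should sandwich with $P^A_0, P^B_0$ instead --- but the intended argument is clear and sound). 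The two approaches buy different things. The paper's is effectively quantitative: since $\|P(I-Q)\|^2=\|P-PQP\|$ for orthogonal projectors, the sandwich inequality upgrades the conclusion to an explicit bound of the form $\|P^A\otimes P^B - P^A_0\otimes P^B_0\|^2 \le 1-\alpha\beta$ with $\alpha=1-\|P^A-P^A_0\|$ and $\beta=1-\|P^B-P^B_0\|$, and it extends verbatim to any number of tensor factors and to infinite dimensions. Your argument is more elementary --- only the Schmidt decomposition and the fact that tensor products of linearly independent families are linearly independent --- and it exposes the structural reason the statement holds: injectivity of $P^A_0$ restricted to $\mathrm{range}(P^A)$ (which is all you actually use; the full bijection is not needed) tensorizes. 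Its one limitation is that the step ``a vanishing diagonal sum of linearly independent product vectors has vanishing coefficients'' relies on the Schmidt rank being finite, so unlike the operator-inequality route it does not carry over directly to infinite-dimensional factors; this is harmless here, since the paper works with finite-dimensional Hilbert spaces throughout.
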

\begin{proof}
Let us assume that Eq.~(\ref{ABclose}) is false. Then there exists a state $\psi\in \calH$
such that $P^A \otimes P^B \, |\psi\ra =0$ and $P^A_0\otimes  P^B_0\, |\psi\ra=|\psi\ra$
(or vice verse). Below we show that it leads to a contradiction.
Indeed, the condition $\|P^A - P^A_0\|<1$ implies that there exists $\alpha>0$ such that
$P^A - P^A_0 \le (1-\alpha)\, I$. Multiplying this inequality on both sides by $P^A$ we arrive at
$P^A - P^A P^A_0 P^A \le (1-\alpha)P^A$, that is, $P^A P^A_0 P^A\ge \alpha P^A$.
Similarly, there exists $\beta>0$ such that $P^B P^B_0 P^B\ge \beta P^B$.
Thus
\be
\label{ABABAB}
(P^A\otimes P^B) (P^A_0\otimes P^B_0 ) (P^A\otimes P^B) = (P^A P^A_0 P^A)\otimes (P^B P^B_0 P^B)
\ge \alpha\beta P^A_0\otimes P^B_0.
\ee
Here we used the fact that a tensor product of positive semi-definite operators
is a positive semi-definite operator.
Computing the expectation value of Eq.~(\ref{ABABAB}) on $\psi$ one gets
$0\ge \alpha \beta$ which is a contradiction.
\end{proof}
We conclude that the global direct rotation $U^{AB}$ mapping $P^A\otimes P^B$
to $P^A_0 \otimes P^B_0$ is well-defined.
The relationship between the global and the local rotations which we shall call
a {\em weak multiplicativity} is established by the
following lemma.
\begin{lemma}[\bf Weak multiplicativity]
\label{lemma:mult}
Let $U^{A},U^{B}$, and $U^{AB}$ be the direct rotations defined above. Then
\be
\label{mult}
U^{AB}\, (P^A \otimes P^B) = (U^{A} \otimes U^{B} ) (P^A\otimes  P^B).
\ee
\end{lemma}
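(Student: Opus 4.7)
My plan is to reduce the statement to a direct computation on a basis adapted to the simultaneous block-diagonalization of a pair of projectors, as used in the proof of Lemma~\ref{lemma:d>1}. I would first pick orthonormal bases $\{u_i^A\}$ of the range of $P^A$ and $\{v_i^A\}$ of the range of $P_0^A$ satisfying
\[
\la u_i^A | v_j^A \ra = \cos \theta_i^A \, \delta_{ij}, \qquad \theta_i^A \in [0, \pi/2),
\]
with $\theta_i^A = 0$ labeling the $1 \times 1$ blocks (so $u_i^A = v_i^A$) and $\theta_i^A > 0$ labeling the $2 \times 2$ blocks. By the 2D analysis of Lemma~\ref{lemma:1d}, the local direct rotation $U^A$ acts in each such block by sending $u_i^A \mapsto v_i^A$, and as the identity on the orthogonal complement. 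I would introduce analogous bases $\{u_j^B\}, \{v_j^B\}$ and angles $\theta_j^B$ for Bob.

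The next step is to analyze the tensor product bases $\{u_i^A \otimes u_j^B\}$ and $\{v_i^A \otimes v_j^B\}$ of the ranges of $P^A \otimes P^B$ and $P_0^A \otimes P_0^B$. Their cross Gram matrix is
\[
\la u_i^A \otimes u_j^B \, | \, v_{i'}^A \otimes v_{j'}^B \ra = \cos\theta_i^A \cos\theta_j^B \, \delta_{i i'} \delta_{j j'},
\]
diagonal with entries in $(0,1]$. I claim this makes the pairs $(u_i^A \otimes u_j^B, v_i^A \otimes v_j^B)$ a simultaneous block-diagonalization of $(P^A \otimes P^B, P_0^A \otimes P_0^B)$ of exactly the form used in Lemma~\ref{lemma:d>1}, with block angles $\theta_{ij}^{AB} = \arccos(\cos\theta_i^A \cos\theta_j^B) \in [0, \pi/2)$ and 2D blocks pairwise orthogonal for distinct $(i,j)$. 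Applying the 2D argument from Lemma~\ref{lemma:1d} blockwise to $U^{AB}$ then forces $U^{AB}(u_i^A \otimes u_j^B) = v_i^A \otimes v_j^B$.

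To finish, I would observe that by construction $(U^A \otimes U^B)(u_i^A \otimes u_j^B) = (U^A u_i^A) \otimes (U^B u_j^B) = v_i^A \otimes v_j^B$, so both unitaries agree on an orthonormal basis of the range of $P^A \otimes P^B$, giving Eq.~(\ref{mult}). The main obstacle is the middle step: verifying that the tensor product of local block data really does assemble into a valid block decomposition for the bipartite projectors with all angles strictly below $\pi/2$. The Gram computation above handles the diagonal structure, while Proposition~\ref{prop:ABclose} --- which establishes $\|P^A \otimes P^B - P_0^A \otimes P_0^B\| < 1$ --- ensures that none of the bipartite block angles degenerate to $\pi/2$.
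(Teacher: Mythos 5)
Your proposal is correct and takes essentially the same route as the paper's own proof: both rely on the simultaneous block-diagonalization of each local pair of projectors (as in Lemma~\ref{lemma:d>1}) and then apply Lemma~\ref{lemma:1d} to the resulting rank-one tensor-product blocks, the key point in both being that the overlap $\cos\theta_i^A\cos\theta_j^B$ is real and strictly positive, which pins down the action of $U^{AB}$ on each block. Your principal-angle/Gram-matrix formulation merely makes explicit the block-assembly step that the paper leaves implicit in its reduction to $\calH^A=\calH^B=\CC^2$.
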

\begin{proof}
Let us perform the simultaneous block-diagonalization of $P^A,P^A_0$ and $P^B,P^B_0$ with blocks
of size $2\times 2$ and $1\times 1$, as in the proof of Lemma~\ref{lemma:d>1}.
Recall that each $2\times 2$ block is a rank-one projector. Hence it suffices
to check Eq.~(\ref{mult}) only for the case when $\calH^A =\calH^B=\CC^2$
and rank-one projectors  $P^A,P^B,P^A_0,P^B_0$.
Choose  normalized states $\psi^A$, $\psi^B$, $\psi^A_0$,
$\psi^B_0$ in the range of the above projectors, and fix their relative phase such that $\la \psi^A|\psi^A_0\ra>0$
and $\la \psi^B|\psi^B_0\ra>0$. Lemma~\ref{lemma:1d} implies that
$U^{A}\, |\psi^A\ra=|\psi^A_0\ra$ and $U^{B} \, |\psi^B\ra =|\psi^B_0\ra$.
However, since $\la \psi^A \otimes \psi^B|\psi^A_0\otimes \psi^B_0\ra>0$,
Lemma~\ref{lemma:1d} also implies that $U^{AB}\, |\psi^A \otimes \psi^B\ra = |\psi^A_0\otimes \psi^B_0\ra$.
This we have an identity $U^{AB}(P^A \otimes P^B)=(U^{A} \otimes U^{B}) (P^A\otimes  P^B)$ in each tensor product of $2\times 2$ blocks. Similar arguments hold if one or both blocks have size $1\times 1$.
\end{proof}

\section{Effective low-energy Hamiltonian}
\label{sec:SW}

\subsection{Schrieffer-Wolff transformation}
Let $\calH$ be a finite-dimensional Hilbert space and
$H_0$ be a hermitian operator on $\calH$.
We shall refer to $H_0$  as an {\em unperturbed Hamiltonian}.
Let $\calI_0\subseteq \RR$ be any  interval containing one or several eigenvalues of $H_0$
and let $\calP_0\subseteq \calH$ be the subspace spanned by all eigenvectors of $H_0$ with eigenvalue
lying in $\calI_0$.
We shall say that $H_0$ has a {\em spectral gap} $\Delta$ iff
for any pair of eigenvalues $\lambda,\eta$ such that $\lambda\in \calI_0$ and $\eta\notin \calI_0$ one has $|\lambda-\eta|\ge \Delta$. In other words, the eigenvalues of $H_0$ lying in $\calI_0$
must be separated from the rest of the spectrum by a gap at least $\Delta$.

Consider now a perturbed Hamiltonian $H=H_0+\epsilon \, V$, where the perturbation $V$ is an arbitrary hermitian
operator on $\calH$.
Let $\calI \subseteq \RR$ be the interval obtained from  $\calI_0$ by adding margins of thickness $\Delta/2$
on the left and on the right of $\calI_0$.
 Let $\calP\subseteq \calH$ be the subspace spanned by all eigenvectors of $H$ with eigenvalue
lying in $\calI$.
In this section we shall only consider sufficiently weak perturbations that do no close the gap
separating the interval $\calI_0$ from the rest of the spectrum. Specifically, we shall assume that
$|\epsilon|\le \epsilon_c$, where
\be
\label{epsilon_c}
\epsilon_c=\frac{\Delta}{2\|V\|}.
\ee
Since the perturbation shifts any eigenvalue at most by $\|\epsilon V\|$, we conclude that
the eigenvalues of $H$ lying in $\calI$ are separated from the rest of the spectrum of $H$
by a positive gap as long as $|\epsilon|<\epsilon_c$.
In particular, it implies that $\calP_0$ and $\calP$ have the same dimension.
Let $P_0$ and $P$ be the orthogonal projectors onto $\calP_0$ and $\calP$.
Introduce also projectors $Q_0=I-P_0$ and $Q=I-P$.
\begin{lemma}
\label{lemma:per1}
Suppose $\epsilon$ is real and $|\epsilon|<\epsilon_c$. Then $\|P_0-P\|\le 2\|\epsilon V\|/\Delta <1$.
\end{lemma}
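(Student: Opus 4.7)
The plan is to bound $\|P_0-P\|$ by controlling how far vectors in the perturbed low-energy subspace can stick out of the unperturbed low-energy subspace, via the eigenvalue equation for $H=H_0+\epsilon V$. Concretely, let $|\psi\ra\in\calP$ be any eigenvector of $H$ with eigenvalue $\lambda\in\calI$. Since $[H_0,Q_0]=0$, applying $Q_0$ to the identity $(H_0-\lambda)|\psi\ra=-\epsilon V|\psi\ra$ gives
\[
(H_0-\lambda)\,Q_0|\psi\ra = -\epsilon\, Q_0 V |\psi\ra.
\]
The operator $H_0-\lambda$ restricted to the range of $Q_0$ has spectrum $\{\mu-\lambda : \mu\in\spect(H_0),\ \mu\notin\calI_0\}$. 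By the spectral gap hypothesis, every such $\mu$ is separated from $\calI_0$ by at least $\Delta$, while $\lambda\in\calI$ lies within $\Delta/2$ of $\calI_0$; hence $|\mu-\lambda|\ge \Delta/2$, so $(H_0-\lambda)$ is invertible on $\ran Q_0$ with inverse of norm at most $2/\Delta$. This yields
\[
\|Q_0|\psi\ra\| \;\le\; \frac{2}{\Delta}\,\|\epsilon Q_0 V|\psi\ra\| \;\le\; \frac{2\|\epsilon V\|}{\Delta}\,\|\psi\|,
\]
and since this holds for every eigenvector spanning $\calP$, we conclude $\|Q_0 P\|\le 2\|\epsilon V\|/\Delta$.

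To pass from the one-sided bound $\|Q_0 P\|$ to the two-sided bound $\|P-P_0\|$, I would invoke the standard fact that for two orthogonal projectors of equal rank one has
\[
\|P-P_0\| \;=\; \|Q_0 P\| \;=\; \|Q P_0\|.
\]
This follows from the simultaneous $2\times 2$/$1\times 1$ block decomposition of any pair of projectors that was already used in the proof of Lemma~\ref{lemma:d>1}: in each $2\times2$ block the eigenvalues of $P-P_0$ are $\pm\sin\theta$ where $\theta$ is the principal angle, and these are precisely the singular values of $Q_0 P$ and of $Q P_0$ in that block. Combining with the previous step gives $\|P-P_0\|\le 2\|\epsilon V\|/\Delta$, and the strict inequality $2\|\epsilon V\|/\Delta<1$ is just the hypothesis $|\epsilon|<\epsilon_c=\Delta/(2\|V\|)$.

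The only subtle point is step one, namely guaranteeing that an eigenvalue $\mu$ of $H_0$ outside $\calI_0$ is at distance at least $\Delta/2$ from any $\lambda\in\calI$; this uses the fact that the margin of width $\Delta/2$ added on each side of $\calI_0$ to form $\calI$ is no larger than half the spectral gap, so that the gap hypothesis $|\mu-\lambda_i|\ge\Delta$ for $\lambda_i\in\calI_0\cap\spect(H_0)$ propagates to $|\mu-\lambda|\ge\Delta/2$ for $\lambda\in\calI$. Everything else is a direct algebraic manipulation, and I do not expect any serious obstacle beyond keeping track of this geometry.
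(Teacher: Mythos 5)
Your overall strategy (bound one off-diagonal block of $P-P_0$ using the gap, then upgrade to $\|P-P_0\|$ via the identity $\|P-P_0\|=\|Q_0P\|=\|QP_0\|$) parallels the paper's, and the second half is fine: the paper records exactly this identity in a footnote and justifies it, as you do, by the simultaneous $2\times 2$ block decomposition. The genuine gap is in the first half, in the sentence ``since this holds for every eigenvector spanning $\calP$, we conclude $\|Q_0 P\|\le 2\|\epsilon V\|/\Delta$.'' A bound $\|Q_0\psi_i\|\le c$ for each member of an orthonormal eigenbasis $\{\psi_i\}$ of $\calP$ does \emph{not} imply $\|Q_0 P\|\le c$: for a unit vector $\phi=\sum_i c_i\psi_i$ one has $Q_0\phi=\sum_i c_i Q_0\psi_i$, and the vectors $Q_0\psi_i$ need not be orthogonal --- they can all point in nearly the same direction, in which case $\|Q_0\phi\|$ can be as large as $c\sqrt{\dim\calP}$. (Equivalently, $Q_0PQ_0=\sum_i Q_0|\psi_i\ra\la\psi_i|Q_0$ is a sum of positive rank-one operators, each of norm at most $c^2$, which only yields $\|Q_0P\|^2\le c^2\dim\calP$.) Nothing in your setup rules this out, because the resolvents entering $Q_0\psi_i=-\epsilon(H_0-\lambda_i)^{-1}Q_0V\psi_i$ carry different $\lambda_i$ for different eigenvectors, so the per-vector bounds cannot be glued together. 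Since this lemma is later applied to many-body systems where $\dim\calP$ is exponentially large, a $\sqrt{\dim\calP}$ loss is fatal rather than cosmetic.

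The fix is to run your argument at the operator level instead of vector by vector, which is precisely what the paper does: set $X=P_0Q$, $A=P_0H_0P_0$, $B=QHQ$, check that $AX-XB=-\epsilon P_0VQ$, and invoke the Sylvester-equation bound $\|X\|\le \delta^{-1}\|Y\|$ (Theorem~VII.2.11 of~\cite{Bhatia}) for operators whose spectra are separated by an annulus of width $\delta=\Delta/2$. Note the paper's choice of block also sidesteps your ``only subtle point'': for $X=P_0Q$ one has $\spect(A)\subseteq\calI_0$ and $\spect(B)\subseteq\RR\setminus\calI$, and these sets are at distance $\ge\Delta/2$ purely by the construction of $\calI$, with no eigenvalue tracking at all. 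The spectral geometry you discuss (eigenvalues of $H_0$ outside $\calI_0$ being $\Delta/2$-far from every point of $\calI$) is needed only for the block $Q_0P$, and, as you half-acknowledge, it requires slightly more than the literal gap hypothesis --- one needs that eigenvalues of $H$ in $\calI$ lie within $\|\epsilon V\|$ of eigenvalues of $H_0$ in $\calI_0$ (Weyl's inequality together with the dimension count stated before the lemma). That convention is shared with the paper's own ``analogous case,'' so it is not an error on your part, but working with the block $P_0Q$ avoids the issue entirely.
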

\begin{proof}
Let $T=P-P_0$. A simple algebra shows that $T^2$ is block-diagonal with respect
to $P_0$ and $Q_0$, that is $P_0 T^2 Q_0=0$ and $Q_0 T^2 P_0=0$.
The remaining diagonal blocks of $T^2$ are
$P_0 T^2 P_0=P_0 Q P_0$ and $Q_0 T^2 Q_0 = Q_0 P Q_0$.
Since $T$ is hermitian, we have $\|T\| = \sqrt{\|T^2\|}$ and thus
\[
\|T\| = \max{(\sqrt{\| P_0 QP_0\|},\sqrt{\|Q_0 P Q_0\|})}= \max{(\| P_0 Q\| ,\|Q_0 P\|)}.
\]
Let us assume that $\|T\|=\|P_0Q\|$ (the case $\|T\|=\|PQ_0\|$ is dealt with analogously\footnote{It is worth mentioning
that $\|P-P_0\|=\|P_0 Q\|=\|Q_0 P\|$ for any projectors $P,P_0$ satisfying $\|P-P_0\|<1$. These identities can be
easily proved using the simultaneous block-diagonalized form of $P$ and $P_0$.}).
Define auxiliary operators $A=P_0 H_0 P_0$,  $B=QHQ$,
and $X=P_0 Q$. Then one has
\be
\label{Sylv}
AX-XB=P_0 (H_0-H)Q = -\epsilon P_0 V Q\equiv Y.
\ee
The equation $AX-XB=Y$ is known as the Sylvester equation. In particular, it is known
that $\|X\| \le \delta^{-1} \|Y\|$ whenever spectrums of $A$ and $B$ can be separated by
an annulus of width $\delta$ in the complex plane, see for instance, Theorem~VII.2.11 in~\cite{Bhatia}.
In our case the spectrums of $A$ and $B$ considered as operators on $P_0$ and $Q$ respectively
are separated by an annulus of width $\Delta/2$ whenever $|\epsilon|<\epsilon_c$. Thus we arrive at
\[
\|X\| \le (2/\Delta) \, \| Y\| \le  (2/\Delta)\, \| \epsilon V\| <1
\]
as long as $|\epsilon|<\epsilon_c$.
\end{proof}
Lemma~\ref{lemma:per1} implies that the direct rotation $U$ from $\calP$ to $\calP_0$
is well-defined as long as $|\epsilon|<\epsilon_c$. By definition, it satisfies $UPU^\dag=P_0$
and $UQU^\dag =Q_0$.
In addition, since the perturbed Hamiltonian
$H=H_0+\epsilon V$ is block-diagonal with respect to $P$ and $Q$, we conclude that the transformed
Hamiltonian $UHU^\dag$ is block-diagonal with respect to $P_0$ and $Q_0$.
In almost all applications of perturbation theory the interval $\calI_0$ includes only the smallest eigenvalue
of $H_0$ or all sufficiently small eigenvalues. By a slight abuse of notations we shall
adopt the standard terminology and refer to $\calP_0$ as the {\em low-energy}
subspace of $H_0$. However, one should keep in mind that all results of this section remain valid
for an arbitrary choice of the interval $\calI_0$.
\begin{dfn}
The direct rotation $U$  from $\calP$ to $\calP_0$ is called the Schrieffer-Wolff (SW) transformation
for the unperturbed Hamiltonian $H_0$, perturbation $\epsilon V$, and the low-energy subspace $\calP_0$.
The operator $H_{\eff}=P_0 U (H_0+\epsilon\, V)U^\dag P_0$
is called an effective low-energy Hamiltonian.
\end{dfn}

\subsection{Derivation of the perturbative series}
\label{subs:series}
Let $U$ be the Schrieffer-Wolff transformation constructed for some unperturbed Hamiltonan $H_0$,
a perturbation $\epsilon V$, and the low-energy subspace $\calP_0$. In this section we always assume that
$|\epsilon|<\epsilon_c$, so $U$ is well-defined, see Lemma~\ref{lemma:per1}.
In most of applications  the explicit formula Eq.~(\ref{U}) for the SW transformation cannot be used
directly because  the low-energy subspace $\calP$ of the perturbed Hamiltonian  $H_0+\epsilon V$
is unknown. In this section we explain how one can  compute the transformation $U$ and the effective low-energy
Hamiltonian $H_{\eff}$ perturbatively. Truncating the series for $H_{\eff}$ at some finite order one obtains an
effective low-energy theory describing properties of the perturbed system.
Throughout this section all series are treated as formal series. Their convergence will be proved
later in Section~\ref{subs:convergence}.

We begin by introducing some notations. The space of linear operators acting on $\calH$ will be denoted
$\LL(\calH)$. A linear map $\calO\,: \, \LL(\calH)\to \LL(\calH)$ will be referred to as a {\em superoperator}.
We shall often use  superoperators
\[
\calO(X)=P_0 X Q_0 + Q_0 X P_0 \quad  \mbox{and} \quad \calD(X)=P_0 X P_0 + Q_0 X Q_0.
\]
We shall say that an operator $X$ is {\em block-off-diagonal}  iff $\calO(X)=X$.
An operator $X$ is called {\em block-diagonal} iff $\calD(X)=X$.
Decompose the perturbation $V$ as $V=\Vd+\Vod$, where
\[
\Vd=\calD(V) \quad \mbox{and} \quad \Vod=\calO(V)
\]
are block-diagonal and block-off-diagonal parts of $V$.
Given any operator $Y\in \LL(\calH)$ define a superoperator  $\hat{Y}$
describing the adjoint action of $Y$, that is, $\hat{Y}(X)=[Y,X]$.

Recall that the SW transformation can be uniquely represented as $U=\exp{(S)}$ where
$S$ is an anti-hermitian generator which is block-off-diagonal and $\|S\|<\pi/2$, see Lemma~\ref{lemma:S},
whereas the transformed Hamiltonian $e^S(H_0+\epsilon V)e^{-S}$ is block-diagonal.
Combining these conditions would yield Taylor series for $S$, which, in turn,
yields Taylor series for $H_{\eff}$. We begin by rewriting the transformed Hamiltonian as
\be
\label{transformed}
\exp{(\hat{S})}(H_0+\epsilon V)=\cosh{(\hat{S})} (H_0+\epsilon \Vd) + \sinh{(\hat{S})}(\epsilon \Vod)+
\sinh{(\hat{S})}(H_0 +\epsilon \Vd) + \cosh{(\hat{S})}(\epsilon \Vod).
\ee
Taking into account that $S$ is block-off-diagonal,  we conclude that the first and the second
terms in the righthand side of Eq.~(\ref{transformed}) are block-diagonal, while the third and the fourth terms are block-off-diagonal.
In order for the transformed Hamiltonian to be block-diagonal, $S$ must obey
\[
\sinh{(\hat{S})}(H_0 +\epsilon \Vd) + \cosh{(\hat{S})}(\epsilon \Vod)=0.
\]
Since our goal is to derive formal Taylor series, $S$ can be regarded as an infinitesimally small operator.
In this case  the superoperator $\cosh{(\hat{S})}$ is invertible and thus the above condition
can be rewritten as
\be
\label{tanhS}
\tanh{(\hat{S})}(H_0+\epsilon \Vd) + \epsilon \Vod =0.
\ee
Now we can rewrite the transformed Hamiltonian as
\bea
\exp{(\hat{S})}(H_0+\epsilon V) &=& \cosh{(\hat{S})} (H_0+\epsilon \Vd) + \sinh{(\hat{S})}(\epsilon \Vod) \nn \\
&=& H_0+\epsilon \Vd + (\cosh{(\hat{S})} - 1)(H_0+\epsilon \Vd) +  \sinh{(\hat{S})}(\epsilon \Vod) \nn \\
&=& H_0+\epsilon \Vd +\frac{(\cosh{(\hat{S})} - 1)}{\tanh{(\hat{S})}} \tanh{(\hat{S})} (H_0+\epsilon \Vd)  +  \sinh{(\hat{S})}(\epsilon \Vod) \nn.
\eea
Note that $(\cosh{(\hat{S})} - 1)/\tanh{(\hat{S})}$ is well defined for infinitesimally small $S$
by its Taylor series. Using Eq.~(\ref{tanhS}) we arrive at
\[
\exp{(\hat{S})}(H_0+\epsilon V) = H_0+\epsilon \Vd  + F(\hat{S})(\epsilon \Vod),
\]
where
\[
F(x) = \sinh{(x)} - \frac{\cosh{(x)} -1}{\tanh{(x)}}.
\]
A simple algebra shows that $F(x)=\tanh{(x/2)}$, so we finally get
\be
\label{transformed1}
\exp{(\hat{S})}(H_0+\epsilon V)  = H_0 + \epsilon \Vd +  \tanh{(\hat{S}/2)}(\epsilon \Vod).
\ee
In order to solve Eq.~(\ref{tanhS}) for $S$ let us introduce some more notations.
Let $\{|i\ra\}$ be an orthonormal eigenbasis of $H_0$ such that $H_0\,|i\ra = E_i\, |i\ra$ for
all $i$. We shall use notation $i\in \calI_0$ as a shorthand for $E_i\in \calI_0$.
Define a superoperator
\be
\label{L0inv}
\calL(X)=\sum_{i,j}
\frac{\la i|\calO(X)|j\ra}{E_i - E_j}\, |i\ra\la j|.
\ee
Note that $\la i|\calO(X)|j\ra=0$ whenever  $i,j\in \calI_0$ or  $i,j\notin \calI_0$.
Let us agree that the sum in Eq.~(\ref{L0inv}) includes only the terms
with $i\in \calI_0$, $j\notin \calI_0$ or vice verse. In this case the energy denominator
can be bounded as $|E_i-E_j|\ge \Delta$.
One can easily check that
\be
\label{L0inv'}
\calL([H_0,X]) = [H_0,\calL(X)] =\calO(X)
\ee
for any operator $X\in \LL(\calH)$. It is worth mentioning that $\calL$
maps hermitian operators to anti-hermitian operators and vice verse.

Treating $S$ as an  infinitesimally small operator we can rewrite
Eq.~(\ref{tanhS}) as
\[
\hat{S}(H_0+\epsilon \Vd) + \hat{S} \coth{(\hat{S})}(\epsilon \Vod)=0.
\]
Using Eq.~(\ref{L0inv'}) and the fact that $S$ is block-off-diagonal one gets
\be
\label{Seq1}
S=\calL\,  \hat{S}(\epsilon \Vd) + \calL\,  \hat{S} \coth{(\hat{S})}(\epsilon \Vod).
\ee
We can solve Eq.~(\ref{Seq1}) in terms of Taylor series,
\be
\label{Sseries}
S=\sum_{n=1}^\infty S_n\, \epsilon^n, \quad S_n^\dag=-S_n.
\ee
Let us also  agree that $S_0=0$. We shall need  the Taylor series
\be
\label{coth}
x \coth{(x)} = \sum_{n=0}^\infty a_{2n} x^{2n}, \quad  a_{m} = \frac{2^{m} B_{m}}{m!},
\ee
where $B_{m}$ are the Bernoulli numbers.
Then we have
\bea
S_1 &=& \calL(\Vod), \nn \\
S_2 &=& -\calL \hat{\Vd} (S_1), \nn \\
S_n &=& -\calL\hat{\Vd}(S_{n-1})  + \sum_{j\ge 1}  a_{2j} \calL\,  \hat{S}^{2j} (\Vod)_{n-1} \quad \quad \mbox{for $n\ge 3$}.
\label{Seq2}
\eea
Here we used a shorthand
\be
\label{Seq3}
\hat{S}^k(\Vod)_{m} = \sum_{\substack{n_1,\ldots,\, n_k\ge 1\\ \\ n_1 + \ldots + n_k =m\\}} \; \; \hat{S}_{n_1} \cdots \hat{S}_{n_k}(\Vod).
\ee
Note that the righthand side of Eq.~(\ref{Seq2}) depends only on $S_1,\ldots,S_{n-1}$.
Hence Eq.~(\ref{Seq2}) provides an inductive rule for computing the Taylor coefficients $S_n$.
Projecting Eq.~(\ref{transformed1}) onto the low-energy subspace one gets
\be
\label{Heff}
H_{\eff}=H_0 P_0 + \epsilon P_0 V P_0 + \sum_{n=2}^\infty \epsilon^n H_{\eff,n},
 \quad H_{\eff,n} = \sum_{j\ge 1}  b_{2j-1} P_0 \, \hat{S}^{2j-1}(\Vod)_{n-1} P_0.
\ee
Here $b_{2j-1}$  are the Taylor coefficients of the function $\tanh{(x/2)}$.
More explicitly,
\be
\label{tanh}
\tanh{(x/2)} = \sum_{n=1}^\infty b_{2n-1} x^{2n-1}, \quad b_{2n-1} = \frac{2 (2^{2n}-1) B_{2n}}{(2n)!}.
\ee
Note that $H_{\eff,n}$ depends only upon $S_1,\ldots,S_{n-1}$.
To illustrate the method let us compute the Taylor coefficients $H_{\eff,n}$ for small values of $n$
(a systematic way of computing $H_{\eff,n}$ in terms of diagrams is described in the next section).
From Eq.~(\ref{Seq2}) one easily finds
\bea
S_3 &=& -\calL \hat{\Vd} (S_2)  + a_2 \calL\, \hat{S}_1^2(\Vod), \nn \\
S_4 &=& -\calL\hat{\Vd}(S_3)  + a_2 \calL\, (\hat{S}_1 \hat{S}_2 + \hat{S}_2 \hat{S}_1)(\Vod). \nn
\eea
From Eq.~(\ref{Heff}) one gets
\bea
H_{\eff,2} &=& b_1 P_0 \hat{S}_1(\Vod) P_0, \label{H1} \\
H_{\eff,3} &=& b_1 P_0 \hat{S}_2(\Vod) P_0, \label{H2} \\
H_{\eff,4} &=& b_1 P_0 \hat{S}_3(\Vod) P_0 + b_3 P_0 \hat{S}_1^3(\Vod) P_0, \label{H3} \\
H_{\eff,5} &=& b_1 P_0 \hat{S}_4(\Vod) P_0 + b_3 P_0 (\hat{S}_2 \hat{S}_1^2 +
\hat{S}_1 \hat{S}_2 \hat{S}_1 + \hat{S}_1^2 \hat{S}_2)(\Vod) P_0.\label{H4}
\eea
Let us obtain explicit formulas for $H_{\eff,3}$ and $H_{\eff,4}$ that include only $S_1$.
Using the identity $\hat{S}_2(\Vod)=-\hat{V}_{\mathrm{od}}(S_2)$ we get
\be
\label{Heff3}
H_{\eff,3}=b_1 P_0 \hat{V}_{\mathrm{od}} \calL \hat{\Vd}(S_1) P_0
\ee
and
\[
S_3=(\calL \hat{V}_{\mathrm{d}})^2(S_1)  + a_2 \calL \hat{S}_1^2(\Vod).
\]
Substituting $S_3$ into the expression for $H_{\eff,4}$ and  using the identity
$\hat{S}_3(\Vod)=-\hat{V}_{\mathrm{od}}(S_3)$ one gets
\be
\label{Heff4}
H_{\eff,4}=P_0\left[ -b_1 \hat{V}_{\mathrm{od}} (\calL \hat{V}_d)^2(S_1)   -
b_1 a_2 \hat{V}_{\mathrm{od}} \calL \hat{S}_1^2(\Vod)  +
b_3 \hat{S}_1^3 (\Vod)\right] P_0.
\ee
The explicit values of the Taylor coefficients $a_{2n}$ and $b_{2n-1}$
are listed in Table~\ref{fig:ab}.
\begin{figure}
\centerline{
\begin{tabular}{|c|c|c|c|}
\hline
$a_0$ & $a_2$ & $a_4$ & $a_6$ \\
\hline
$1$ & $1/3$ & $-1/{45}$ &  ${2}/{945}$ \\
\hline
\end{tabular}
\ \ \ \
\begin{tabular}{|c|c|c|c|}
\hline
$b_1$ & $b_3$ & $b_5$ & $b_7$ \\
\hline
$1/2$ & $-1/24$ & $1/{240}$ &  $-{17}/{40320}$ \\
\hline
\end{tabular}
}
\caption{The value of the Taylor coefficients $a_{2n}$ and $b_{2n-1}$ for small $n$.}
\label{fig:ab}
\end{figure}
The formula for $H_{\eff,4}$ can be slightly simplified in  the special case
when $\calI_0$ contains a single eigenvalue of $H_0$, that is, the restriction of $H_0$ onto
$P_0$ is proportional to the identity operator. In this special case one can use an identity
\be
\label{nice_ident}
P_0 [\calL(X),\calO(Y)]P_0 =-P_0[\calO(X),\calL(Y)]P_0
\ee
which holds for any operators $X,Y$. One can easily check Eq.~(\ref{nice_ident}) by computing
matrix elements $\la i|\cdot |j\ra$ of both sides for $i,j\in \calI_0$.
Using Eq.~(\ref{nice_ident}) and explicit values of $a$'s and $b$'s one can rewrite $H_{\eff,4}$ as
\be
\label{Heff4simple}
H_{\eff,4}=P_0 \left[
\frac18 \hat{S}_1^3(\Vod) -\frac12 \hat{V}_{\mathrm{od}} (\calL \hat{V}_d)^2(S_1)   \right] P_0.
\ee
To summarize, the Taylor series for the effective low-energy Hamiltonian can be written as
\bea
H_{\eff} &=& H_0 P_0 + \epsilon P_0 V P_0 + \frac{\epsilon^2}2   P_0 \hat{S}_1(\Vod) P_0 +
\frac{\epsilon^3 }2 P_0 \hat{V}_{\mathrm{od}} \calL \hat{\Vd} (S_1) P_0 \nn \\
&& - {\epsilon^4} P_0 \left[
\frac12 \hat{V}_{\mathrm{od}} (\calL \hat{\Vd} )^2(S_1) +
\frac16 \hat{V}_{\mathrm{od}} \calL \hat{S}_1^2 (\Vod)+
\frac1{24} \hat{S}_1^3(\Vod)
\right] P_0 + O(\epsilon^5),
\eea
with the simplification Eq.~(\ref{Heff4simple}) in the case when the restriction of $H_0$
onto the low-energy subspace is proportional to the identity operator.

\subsection{Schrieffer-Wolff diagram technique}
\label{subs:diagram}
In this section we develop a diagram technique for the effective low-energy Hamiltonian.
By analogy with the Feynman-Dyson diagram technique, we shall represent
the $n$-th order Taylor coefficient $H_{\eff,n}$ as a weighted sum over certain class of admissible diagrams. In our case $n$-th order diagrams will be trees with $n$ nodes obeying certain restrictions imposed on the node degrees.
Any such diagram represents a linear
operator acting on the low-energy subspace $\calP_0$. Loosely speaking, every edge of the tree stands for a commutator, while every node of the tree stands for $\Vd$ or $\Vod$. The weight associated with a diagram depends only
on node degrees and can be easily expressed in terms of the Bernoulli numbers.

Let us now proceed with formal definitions. Let $T$ be a connected tree
with $n$ nodes. Each node $u\in T$ has at most one parent node and
zero or more children nodes. The children of any node are ordered in some fixed way.
We shall denote children of a node $u$ as $u(1),\ldots,u(k)$, where $k$ is the number of children  (NOC) of $u$.
Nodes having no children are called {\em leaves}.
There is a unique {\em root} node  which has no parent node.

For any node $u\in T$ we shall define a linear operator $O_u$ acting on the full Hilbert space
$\calH$. The definition is inductive, so we shall start from the leaves and move up towards the root.
\begin{itemize}
\item If $u$ is a leaf then $O_u=S_1\equiv \calL(\Vod)$.
\item If $u\ne \mathrm{root}$ has $k\ge 2$ children and $k$ is even then  $O_u=\calL \, \hat{O}_{u(1)} \cdots \hat{O}_{u(k)}(\Vod)$.
\item If $u\ne \mathrm{root}$ has exactly one child  then $O_u=-\calL\hat{\Vd}(O_{u(1)})$.
\item If $u=\mathrm{root}$ has $k$ children with odd $k$ then $O_u=P_0 \, \hat{O}_{u(1)} \cdots \hat{O}_{u(k)}(\Vod) P_0$.
\item In all remaining cases $O_u=0$.
\end{itemize}
Recall that $\hat{O}\equiv \mathrm{ad}_O$ describes the adjoint action of $O$.
Define an operator $O(T)$ associated with $T$ as  $O(T)=O_r$, where $r$ is the root of $T$.
It is clear from the above definition that $O(T)$ is an operator acting on the low-energy
subspace $\calP_0$ and $O(T)$ has degree
$n$ in the perturbation $V$. Furthermore, $O(T)=0$ unless
the root of $T$ has odd NOC and every node different from the root
with more than one child has even NOC.
\begin{dfn}
A connected tree $T$ is called admissible iff
\begin{itemize}
\item The root of $T$ has odd NOC,
\item  Every node different from the root
has either even NOC or NOC=1.
\end{itemize}
\end{dfn}
A complete list of admissible trees (modulo isomorphisms) with $n=3,4,5,6$ nodes
is shown on Fig.~\ref{fig:diagrams}.
\begin{figure}
\centerline{
\includegraphics[height=10cm]{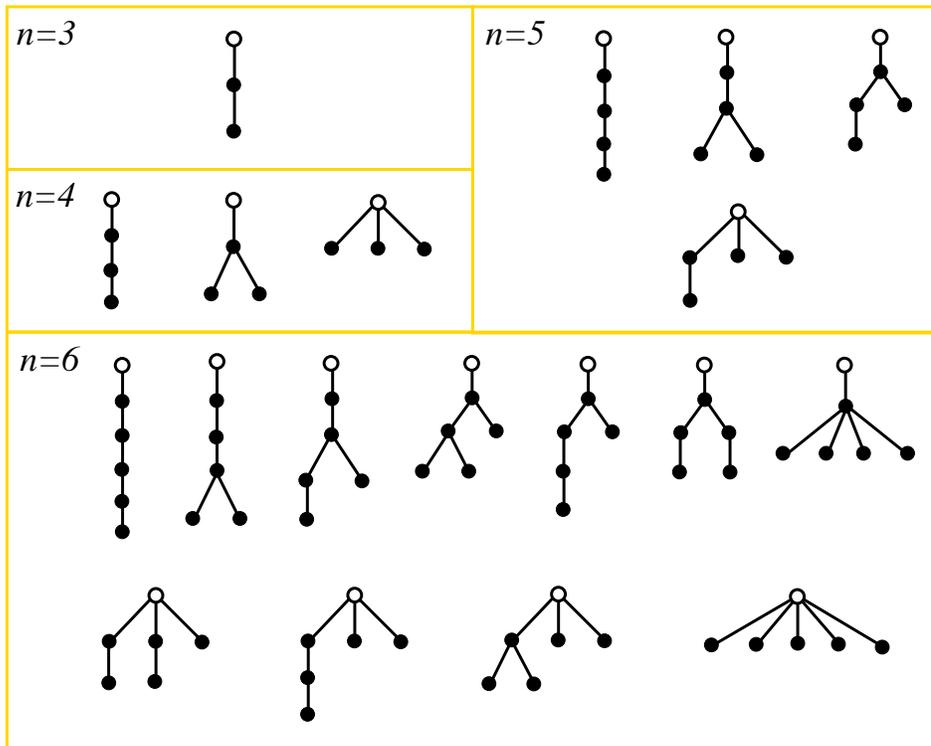}}
\caption{A set of admissible trees $\calT(n)$ with $n=3,4,5,6$ nodes. Only one representative
for each class of isomorphic trees is shown. The total number of admissible
trees (counting isomorphic trees) is $|\calT(n)|=1,3,7,20$ for $n=3,4,5,6$ respectively, }
\label{fig:diagrams}
\end{figure}
To illustrate the definition of operators $O(T)$ let us consider fourth-order diagrams.
Let   $T_1,T_2,T_3$ be the admissible trees with four nodes
shown on Fig.~\ref{fig:diagrams} (listed from the left to the right).
Then one has
\bea
O(T_1) &=&  -P_0 \hat{V}_{\mathrm{od}} (\calL \hat{\Vd} )^2(S_1) P_0,\nn \\
O(T_2) &=& -P_0 \hat{V}_{\mathrm{od}} \calL \hat{S}_1^2 (\Vod) P_0,\nn \\
O(T_3) &=& P_0 \hat{S}_1^3(\Vod)P_0. \label{T123}
\eea
For any tree $T$ define a weight
\be
\label{weight}
w(T)=\prod_{u\in T} w(u),
\ee
where the product is over all nodes of $T$ and
\be
w(u)=\left\{ \ba{rcl} 1 &\mbox{if} & \mbox{$u\ne \mathrm{root}$ has exactly one child}, \\
a_{k} &\mbox{if} & \mbox{$u\ne \mathrm{root}$ has $k$ children for some even $k$},\\
b_k &\mbox{if} & \mbox{$u=\mathrm{root}$ has $k$ children for some odd $k$},\\
0 && \mbox{otherwise}. \\
\ea
\right.
\ee
Recall that
\[
a_{2n} = \frac{2^{2n} B_{2n}}{(2n)!} \quad \mbox{and} \quad b_{2n-1} = \frac{2 (2^{2n}-1) B_{2n}}{(2n)!},
\]
where $B_{2n}$ are the Bernoulli numbers, see also Table~\ref{fig:ab}.
The main result of this section is the following lemma.
\begin{lemma}
The Taylor series for $H_{\eff}$ can be written as
\be
\label{Hdiag}
H_{\eff}= H_0 P_0 + \epsilon P_0 V P_0 + \sum_{n=2}^\infty \; \; \sum_{T\in \calT(n)} \epsilon^n \, w(T) \, O(T),
\ee
where $\calT(n)$ is the set of all admissible trees with $n$ nodes.
\end{lemma}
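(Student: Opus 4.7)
The plan is to prove the lemma by induction, first building a tree representation for each generator $S_n$ and then combining it with the formula for $H_{\eff,n}$ from Eq.~(\ref{Heff}). To carry this out cleanly, I introduce an auxiliary class of \emph{sub-admissible} trees $\calS(n)$, defined exactly like $\calT(n)$ except that the root is required to have either $\mathrm{NOC}=1$ or even $\mathrm{NOC}\ge 0$ (a single-node tree being allowed). On such trees I extend the definitions of $O_u$ and $w(u)$ by applying the same rules as for non-root nodes also at the root. The claim I want to prove by induction on $n$ is that
\begin{equation}
S_n = \sum_{T\in \calS(n)} w(T)\, O(T), \qquad n\ge 1.
\label{eq:Sntree}
\end{equation}

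For $n=1$ the only tree in $\calS(1)$ is a single leaf, which carries weight $a_0=1$ and operator $\calL(\Vod)=S_1$, matching Eq.~(\ref{Sseries}). For the inductive step, I invoke the recursion Eq.~(\ref{Seq2}), which naturally splits $S_n$ into two groups of contributions: (i) the term $-\calL\hat{\Vd}(S_{n-1})$, which I identify with trees in $\calS(n)$ whose root has a single child, noting that the root-weight $1$ and the operator formula $O_u=-\calL\hat{\Vd}(O_{u(1)})$ exactly reproduce this term once $S_{n-1}$ is expanded by the inductive hypothesis; and (ii) the sum $\sum_{j\ge 1} a_{2j}\,\calL\,\hat S^{2j}(\Vod)_{n-1}$, which I identify with trees in $\calS(n)$ whose root has $2j$ ordered children, each being an element of $\calS(n_i)$ with $n_1+\cdots+n_{2j}=n-1$. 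The factor $a_{2j}$ becomes the root weight, and the product $\hat{O}(T_1)\cdots\hat{O}(T_{2j})$ applied to $\calL(\Vod)$ assembles into $\calL\,\hat{S}_{n_1}\cdots\hat{S}_{n_{2j}}(\Vod)$ after using Eq.~(\ref{Sseries}) term by term. Summing over the two groups reconstructs exactly the right-hand side of Eq.~(\ref{Seq2}), completing the induction.

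Once Eq.~(\ref{eq:Sntree}) is established, I substitute it into the formula for $H_{\eff,n}$ from Eq.~(\ref{Heff}), namely $H_{\eff,n}=\sum_{j\ge 1} b_{2j-1} P_0\,\hat S^{2j-1}(\Vod)_{n-1} P_0$. Expanding the shorthand Eq.~(\ref{Seq3}) and then using Eq.~(\ref{eq:Sntree}) for each $S_{n_i}$ produces an ordered sum over tuples $(T_1,\ldots,T_{2j-1})$ of sub-admissible trees with $n_1+\cdots+n_{2j-1}=n-1$. But any admissible tree in $\calT(n)$ is uniquely of the form \emph{root with odd number $2j-1$ of ordered children, each a sub-admissible tree}, and conversely every such tuple yields a unique element of $\calT(n)$. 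Under this bijection, the root contributes the factor $b_{2j-1}$ and the surrounding $P_0\cdots P_0$ projectors, while the children contribute $\prod_i w(T_i)\,\hat{O}(T_i)$; this is exactly $w(T)\,O(T)$ for the reassembled tree. Summing $\epsilon^n H_{\eff,n}$ over $n\ge 2$ yields Eq.~(\ref{Hdiag}).

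The main technical obstacle is keeping the ordered-tree bookkeeping consistent with the ordered products $\hat{S}_{n_1}\cdots\hat{S}_{n_k}$ in Eq.~(\ref{Seq3}): because children of each node are ordered, no symmetry factor appears, and the decomposition ``admissible tree $\leftrightarrow$ root plus ordered list of sub-admissible subtrees'' is a clean bijection. A secondary point is to verify that the $O_u=0$ default case in the definition does not suppress any genuine contribution: by construction, the only trees entering the expansion of $S_n$ (via the recursion) and of $H_{\eff,n}$ (via $\hat S^{2j-1}$) satisfy the admissibility constraints on children degrees, so the default case is never invoked nontrivially. With these combinatorial checks in place, Eq.~(\ref{Hdiag}) follows directly.
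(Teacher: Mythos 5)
Your proof is correct and follows essentially the same route as the paper's: your \emph{sub-admissible} trees are exactly the paper's $S$-admissible trees, and your inductive derivation of the tree expansion for $S_n$ from the recursion Eq.~(\ref{Seq2}) is precisely the paper's Lemma~\ref{lemma:Sdiag}. The final step --- substituting that expansion into Eq.~(\ref{Heff}) and matching the root factor $b_{2j-1}$ with odd-degree roots of admissible trees --- is also the paper's argument, with the ordered-children bijection merely made more explicit.
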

\begin{proof}
Let us first develop a diagram technique for the generator $S$.
Let $T$ be any connected tree.
For any node $u\in T$ we shall define a linear operator $O_u'$ acting on the full Hilbert space
$\calH$. The definition is inductive, so we shall start from the leaves and move up towards the root.
\begin{itemize}
\item If $u$ is a leaf then $O_u'=S_1\equiv \calL(\Vod)$.
\item If $u$ has exactly one child then $O_u'=-\calL\hat{\Vd}(O_{u(1)}')$.
\item If $u$ has $k$ children and $k$ is even then  $O_u'=\calL \, \hat{O}_{u(1)}' \cdots \hat{O}_{u(k)}'(\Vod)$.
\item In all remaining cases $O_u'=0$.
\end{itemize}
Define an operator $O'(T)$ associated with $T$ as  $O'(T)=O_r'$, where $r$ is the root of $T$.
It is clear from the above definition that $O'(T)$ is a block-off-diagonal operator and $O'(T)$ has degree
$n$ in the perturbation $V$. Furthermore, $O'(T)=0$ unless all nodes of $T$ with more than one child
have even NOC.
\begin{dfn}
A tree $T$ is called $S$-admissible iff any node of $T$ with more than one child has even number
of children.
\end{dfn}
Define a {\em weight} of a tree $T$ as
\be
\label{weight'}
w'(T)=\prod_{u\in T} w'(u),
\ee
where the product is over all nodes of $T$ and
\be
w'(u)=\left\{ \ba{rcl} 1 &\mbox{if} & \mbox{$u$ has exactly one child}, \\
a_{k} &\mbox{if} & \mbox{$u$ has $k$ children for some even $k$},\\
0 && \mbox{otherwise}. \\
\ea
\right.
\ee
\begin{lemma}
\label{lemma:Sdiag}
The generator of the SW transformation can be written as
\be
\label{Sdiag}
S=\sum_{n=1}^\infty \;  \; \sum_{T\in \calT'(n)} \epsilon^n\,  w'(T)\, O'(T).
\ee
where $\calT'(n)$ is the set of all $S$-admissible trees with $n$ nodes.
\end{lemma}
\begin{proof}
We can use induction in $n$
to show that $S_n$ is the sum of $w'(T) O'(T)$ over all $S$-admissible trees with $n$ nodes.
 The base of induction is $n=1$ in which case
there is only one $S$-admissible tree $T$ (a single node) and $O'(T)=S_1$.
To prove the induction hypothesis for general $n$ we can use Eq.~(\ref{Seq2}) and Eq.~(\ref{Seq3}).
Repeatedly expanding $\hat{S}^{2j}$  in Eq.~(\ref{Seq2}) one gets a sum over  $S$-admissible  trees, where
the operators $O_u'$ correspond to $S_{n_i}$ encountered in the course of the expansion.
\end{proof}
Substituting the Taylor series Eq.~(\ref{Sdiag})  into Eq.~(\ref{Heff}) one
arrives at Eq.~(\ref{Hdiag}).
\end{proof}

\subsection{Convergence radius}
\label{subs:convergence}
In this section we analyze convergence of the formal series
$S=\sum_{j=1}^\infty S_j \, \epsilon^j$ and $H_{\eff}=\sum_{j=1}^\infty H_{\eff,j}\,\epsilon^j$.
\begin{lemma}
The series for $S$ and $H_{\eff}$ converge absolutely  in the disk $|\epsilon|<\rho_c$ where
\[
\rho_c=\frac{\epsilon_c}{8\left(1+ \frac{2|\calI_0|}{\pi \Delta}\right)}.
\]
Here $\epsilon_c=\Delta/(2\|V\|)$ and $|\calI_0|$ is the width of the interval $\calI_0$.
\end{lemma}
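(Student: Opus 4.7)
The plan is to majorize $\|S_n\|$ and $\|H_{\eff,n}\|$ by the Taylor coefficients of a convergent scalar series. Three ingredients are needed: a superoperator norm bound on $\calL$, a scalar inequality derived from the recursion Eq.~(\ref{Seq2}) for $S_n$, and the classical fact that $\sum_{n\ge 0}|a_{2n}|x^{2n}=2-x\cot x$ and $\sum_{n\ge 1}|b_{2n-1}|x^{2n-1}=\tan(x/2)$ both have radius of convergence $\pi$ (by analytic continuation from the Taylor series for $x\coth x$ and $\tanh(x/2)$, since the Bernoulli coefficients $a_{2n}$ and $b_{2n-1}$ alternate in sign for $n\ge 1$).

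The first step is the superoperator bound. Writing
\[
P_0\calL(X)Q_0 \;=\; \frac{1}{2\pi i}\oint_\Gamma (zI-H_0)^{-1}\,(P_0 X Q_0)\,(zI-H_0)^{-1}\,dz,
\]
with $\Gamma$ a rectangular contour encircling $\calI_0$ at distance $\Delta/2$ from $\sigma(H_0)$, the two resolvents are bounded by $2/\Delta$ on $\Gamma$ and the perimeter is $2|\calI_0|+O(\Delta)$; after a minor optimization of the contour this yields $\|\calL(X)\|\le\kappa\|X\|$ on block-off-diagonal $X$ with $\kappa=(1+2|\calI_0|/(\pi\Delta))/\Delta$.

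The second step turns the recursion Eq.~(\ref{Seq2}) into a scalar inequality. Set $\sigma(t)=\sum_{n\ge 1}\|S_n\|t^n$ and use $\|\hat{\Vd}\|\le 2\|V\|$, $\|\hat{S}_k(\cdot)\|\le 2\|S_k\|\|\cdot\|$, together with the Cauchy product identity
\[
\sum_{n\ge 2}t^n\!\!\sum_{\substack{n_1+\cdots+n_{2j}=n-1\\n_i\ge 1}}\prod_i\|S_{n_i}\|\;=\;t\,\sigma(t)^{2j}.
\]
This produces the majorizing inequality
\[
\sigma(t)\;\le\;\kappa\|V\|\,t\Bigl[\,1+2\sigma(t)+\sum_{j\ge 1}|a_{2j}|\bigl(2\sigma(t)\bigr)^{2j}\Bigr]
\;=\;\kappa\|V\|\,t\,\bigl(2+2\sigma(t)-2\sigma(t)\cot(2\sigma(t))\bigr),
\]
valid as long as $2\sigma<\pi$. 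A monotone fixed-point iteration seeded at $\sigma\equiv 0$ then forces $\sigma(|\epsilon|)$ to be bounded by the smallest positive root $y_*<\pi$ of $y=2\kappa\|V\||\epsilon|(2+y-y\cot y)$ whenever this root exists, and locating the $|\epsilon|$-threshold at which the root ceases to exist yields the claimed $\rho_c$. Once $\sigma_*:=\sigma(|\epsilon|)$ is controlled one automatically has $\|\hat S\|\le 2\sigma_*<\pi$, so Eq.~(\ref{Heff}) together with the convergence of $\tan(\cdot/2)$ on the disk of radius $\pi$ majorizes the operator series for $H_{\eff}$ by the convergent scalar series $\tan(\hat S/2)$ term by term, and $H_{\eff}$ converges in the same disk.

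The main obstacle is arithmetic bookkeeping rather than new ideas: recovering the exact prefactor $1/(8(1+2|\calI_0|/(\pi\Delta)))$ requires both the careful choice of $\Gamma$ that gives the stated $\kappa$ in the first step, and pinpointing the threshold of the scalar fixed-point equation in the second. Either step carried out more loosely would still produce a finite but somewhat larger $\rho_c$.
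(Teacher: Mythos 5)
Your proposal is correct in substance, but it is a genuinely different proof from the one in the paper. The paper never returns to the recursion Eq.~(\ref{Seq2}): it exploits the exact, non-perturbative characterization of the SW transformation, $U=\sqrt{R_{\calP_0}R_{\calP}}=\sqrt{I+Z}$ with $Z=-2(P_0Q+Q_0P)$, so that $S=\frac12\log(I+Z)$; analyticity of $Z$ in $\epsilon$ comes from the Riesz-projector formula Eq.~(\ref{Panl}), and the same stadium contour and resolvent estimates you invoke for $\calL$ (including the factor $1+2|\calI_0|/(\pi\Delta)$) are used instead to show $\|P_0Q\|,\|Q_0P\|<1/4$, hence $\|Z\|<1$, for $|\epsilon|<\rho_c$; absolute convergence of the $\log$ series then finishes the proof. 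Your route majorizes the recursion coefficient-by-coefficient, using the identities $\sum_{j\ge 0}|a_{2j}|x^{2j}=2-x\cot x$ and $\sum_{j\ge 1}|b_{2j-1}|x^{2j-1}=\tan(x/2)$ (both correct, both of radius $\pi$), and your monotone iteration is rigorizable verbatim at the level of partial sums of the majorant series: if $\sigma_{N-1}(t)$ lies below the smallest positive root $y_*$ of the scalar fixed-point equation, monotonicity of the majorant function forces $\sigma_N(t)\le y_*$ as well. What the paper's route buys: brevity, and the fact that the series is automatically the Taylor series of the genuine generator $S=\log U$; in your scheme, identifying the sum of the formal series with $\log U$ requires an extra appeal to the uniqueness statement of Lemma~\ref{lemma:S}, which your bound $\hat\sigma<\pi/2$ makes available. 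What your route buys: it treats $H_{\eff}$ explicitly via Eq.~(\ref{Heff}) and the $\tan(x/2)$ majorant (the paper's proof only discusses $S$, leaving the convergence of $H_{\eff}$ implicit), and it makes no use of the exact definition of the direct rotation at all.

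Two quantitative corrections, neither fatal. First, your $\kappa$ is optimistic: the contour representation gives $\|P_0\calL(X)Q_0\|\le \frac{|\Gamma|}{2\pi}\cdot\frac{4}{\Delta^2}\|X\| = \frac{2}{\Delta}\left(1+\frac{2|\calI_0|}{\pi\Delta}\right)\|X\|$, and moving the contour removes the factor of $2$ only when $|\calI_0|\lesssim\Delta$; for wide $\calI_0$ the honest constant is about twice yours. Second, you will not recover the prefactor $1/8$, and you should not try: the paper's $\rho_c$ is an artifact of its particular argument (a convenient value guaranteeing $\|P_0Q\|+\|Q_0P\|<1/2$), not a canonical quantity, and a majorant analysis produces a structurally different threshold. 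What saves your proof is the margin: with the honest $\kappa=\frac{2}{\Delta}\left(1+\frac{2|\calI_0|}{\pi\Delta}\right)$, the certified radius is $\left(\sup_{0<y<\pi/2}\frac{y}{2+2y-2y\cot(2y)}\right)(\kappa\|V\|)^{-1}\approx 0.22\,\epsilon_c\left(1+\frac{2|\calI_0|}{\pi\Delta}\right)^{-1}$, which strictly exceeds $\rho_c=0.125\,\epsilon_c\left(1+\frac{2|\calI_0|}{\pi\Delta}\right)^{-1}$, so the lemma follows a fortiori. Finally, your closing remark has the inequality backwards: carrying out either step more loosely yields a \emph{smaller} certified radius, not a larger one, and it is only the roughly factor-$1.8$ margin above that makes such slack affordable.
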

\begin{figure}
\centerline{
\includegraphics[height=3cm]{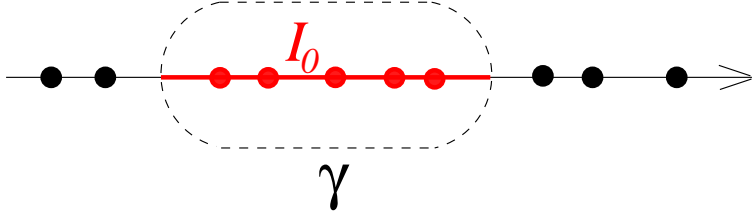}}
\caption{The contour $\gamma$ (dashed black curve) encircling the interval $\calI_0$ (bold red). Eigenvalues of
$H_0$ are indicates by solid dots.}
\label{fig:contour}
\end{figure}
\begin{proof}
Define projectors $Q_0=I-P_0$ and $P=I-P$.  Simple algebra shows that
\[
R_{P_0} R_{P} = I+Z, \quad \mbox{where} \quad Z=-2(P_0 Q + Q_0 P).
\]
For any $\rho>0$ define a disk $D(\rho)=\{ \epsilon \in \CC\, : \, |\epsilon|<\rho\}$.
It is well-known that  $P=P(\epsilon)$ has analytic continuation to $D(\epsilon_c)$.
Indeed, choose a contour $\gamma$ in the complex plane as the boundary
of the $(\Delta/2)$-neighborhood of the interval $\calI_0$, see Fig.~\ref{fig:contour}.
Then any eigenvalue of $H_0$ has distance at least $\Delta/2$ from $\gamma$
and $\| (zI-H_0)^{-1}\| \le 2/\Delta$ for all $z\in \gamma$. It implies that for any $z\in \gamma$
the resolvent $(zI-H)^{-1}$ is analytic in $D(\epsilon_c)$.
Thus we can define
\be
\label{Panl}
P=\frac1{2\pi i}\, \int_\gamma dz (zI - H_0-\epsilon V)^{-1}
\ee
and the Taylor series $P=P_0+\sum_{j=1}^\infty P_j \, \epsilon^j$ converges absolutely in the disk $D(\epsilon_c)$.
Furthermore, one can easily check that $P$ is a projector, $P^2=P$, for all $\epsilon\in D(\epsilon_c)$.
Note however that $P$ is not hermitian for complex $\epsilon$ and thus one may have $\|P\|>1$
(clearly,  $\|P\|\ge 1$ for any non-zero projector $P$ since $\la \psi|P|\psi\ra=1$ for some state $\psi$).

Using the standard perturbative expansion of the resolvent in Eq.~(\ref{Panl}) one gets a Taylor series
\[
P=P_0 + \sum_{j=1}^\infty P_j \,\epsilon^j,\quad P_j =  \frac1{2\pi i}\, \int_\gamma dz (zI-H_0)^{-1} \left( V (zI-H_0)^{-1} \right)^j.
\]
Noting that the contour $\gamma$ has length  $|\gamma| = \pi \Delta + 2|\calI_0|$, we can bound the norm of $P_j$ for $j\ge 1$ as
\[
\|P_j\, \epsilon^j\| \le \left( 1+\frac{2|\calI_0|}{\pi \Delta}\right) \cdot \left(\frac{2\|\epsilon V\|}{\Delta}\right)^j
= \left( 1+\frac{2|\calI_0|}{\pi \Delta}\right) \cdot (|\epsilon|/\epsilon_c)^j.
\]
It follows that
\[
\|P_0 Q\|=\| P_0 - P_0 P\| \le \sum_{j=1}^\infty \|P_0 P_j\, \epsilon^j\|
\le \sum_{j=1}^\infty \|P_j\, \epsilon^j\|
  \le  \left( 1+\frac{2|\calI_0|}{\pi \Delta}\right)
\cdot \frac{|\epsilon|}{\epsilon_c - |\epsilon|}<\frac14
\]
where the last inequality holds for $|\epsilon|<\rho_c$. The same bound applies
to  $\|Q_0 P\|$.
It follows that $Z$ is analytic in the disk $D(\rho_c)$ and $\|Z\|\le 2(\|P_0 Q\| + \|Q_0 P\|)<1$.
Thus the series for $S=\log{(U)}=(1/2)\log{(I+Z)}$
converges absolutely  in the disk $|\epsilon|<\rho_c$.
\end{proof}

\subsection{Additivity of the effective Hamiltonian}
\label{subs:add}
Consider a bipartite system of Alice and Bob with a Hilbert space $\calH=\calH^A \otimes \calH^B$.
Assume that both $H_0$ and $V$ can be represented as a sum of local terms acting
non-trivially only on Alice or Bob,
\[
H_0=H_0^A\otimes I^B + I^A\otimes H_0^B \quad \mbox{and} \quad V=V^A\otimes I^B + I^A \otimes V^B.
\]
In this section we choose the low-energy subspaces $\calP^A_0\subseteq \calH^A$
and $\calP^B_0\subseteq \calH^B$ as the ground state subspaces of Alice's
and Bob's Hamiltonians $H^A_0$ and $H^B_0$ respectively.
Let $\calP^A \subseteq \calH^A$ and $\calP^B\subseteq \calH^B$ be the ground state
subspaces of the perturbed Alice's and Bob's Hamiltonians $H_0^A + \epsilon V^A$
and $H_0^B + \epsilon V^B$ respectively.
Let $U^A$ and $U^B$ be the local SW-transformations on Alice's and Bob's subsystems such that
\be
\label{papb}
U^A P^A (U^A)^\dag = P^A_0 \quad \mbox{and} \quad U^B P^B (U^B)^\dag =P^B_0.
\ee
Applying the SW-transformation independently to Alice's and Bob's subsystems one obtains a pair
of effective Hamiltonians
\[
H_{\eff}^A=P^A_0 U^A (H_0^A + \epsilon V^A)(U^A)^\dag P^A_0 \quad
\mbox{and} \quad
H_{\eff}^B=P^B_0 U^B (H_0^B + \epsilon V^B)(U^B)^\dag P^B_0.
\]
Since there are no interactions between the two subsystems, the combined low-energy
effective Hamiltonian can be defined as $H_{\eff}^A\otimes I^B + I^A \otimes H_{\eff}^B$.

On the other hand, one can consider  the global SW-transformation $U^{AB}$ acting the full Hilbert space
$\calH=\calH^A\otimes \calH^B$ and mapping the ground subspace of $H_0+\epsilon V$
to the ground subspace of $H_0$ such that
\be
\label{pab}
U^{AB} (P^A \otimes P^B) (U^{AB})^\dag =P^A_0\otimes P^B_0.
\ee
It generates the effective low-energy Hamiltonian
\[
H_{\eff}^{AB}=(P^A_0\otimes P^B_0) U^{AB} (H_0+\epsilon V) (U^{AB})^\dag  (P^A_0\otimes P^B_0).
\]
As was pointed out in Section~\ref{subs:mult}, in general $U^{AB}\ne U^A \otimes U^B$.
Nevertheless it turns out that the effective low-energy Hamiltonians
obtained `globally' and `locally' are the same. We shall refer to this property as
{\em additivity} of the effective Hamiltonian. The additivity will play the key role in Section~\ref{sec:manybody}
in which we prove that the effective low-energy Hamiltonians describing weakly interacting
spin systems inherit locality properties of the original Hamiltonian (the so called linked cluster property).
\begin{lemma}[\bf Additivity]
\label{lemma:add}
Let $H_{\eff}^A$, $H_{\eff}^B$, and $H_{\eff}^{AB}$ be the low-energy effective Hamiltonians
defined above. Then
\be
\label{additivity0}
H_{\eff}^{AB}=H_{\eff}^A \otimes I^B + I^A \otimes H_{\eff}^B.
\ee
\end{lemma}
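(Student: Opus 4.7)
The plan is to reduce the global SW transformation $U^{AB}$ to the tensor product $U^A \otimes U^B$ everywhere it matters, using Lemma~\ref{lemma:mult} (weak multiplicativity), and then exploit the bipartite structure of $H_0+\epsilon V$ together with the block-diagonality of the transformed Alice/Bob Hamiltonians.

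First I would rewrite the projector sandwich on the left using the defining identity Eq.~(\ref{pab}):
\[
(P_0^A\otimes P_0^B)\, U^{AB} \;=\; U^{AB}(P^A\otimes P^B)(U^{AB})^\dag\, U^{AB} \;=\; U^{AB}(P^A\otimes P^B).
\]
By Lemma~\ref{lemma:mult} this equals $(U^A\otimes U^B)(P^A\otimes P^B)$, and the adjoint identity $(U^{AB})^\dag(P_0^A\otimes P_0^B)=(P^A\otimes P^B)(U^A\otimes U^B)^\dag$ follows by taking Hermitian conjugates. Substituting both into the definition of $H_{\eff}^{AB}$ yields
\[
H_{\eff}^{AB}= (U^A\otimes U^B)\bigl[(P^A\otimes P^B)(H_0+\epsilon V)(P^A\otimes P^B)\bigr](U^A\otimes U^B)^\dag,
\]
so the non-tensor-product unitary $U^{AB}$ disappears entirely from the calculation.

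Next I would use the additive structure $H_0+\epsilon V = H^A\otimes I^B + I^A\otimes H^B$ with $H^X \equiv H_0^X+\epsilon V^X$. The cross-terms split under the sandwich $(P^A\otimes P^B)(\cdot)(P^A\otimes P^B)$ into $(P^A H^A P^A)\otimes P^B$ and $P^A\otimes (P^B H^B P^B)$. Since $\calP^A$ is spanned by eigenvectors of $H^A$, we have $P^A H^A P^A = H^A P^A$, and similarly for $B$. Applying $U^A\otimes U^B$ on the left and its adjoint on the right to the first term gives
\[
\bigl(U^A H^A P^A (U^A)^\dag\bigr)\otimes \bigl(U^B P^B (U^B)^\dag\bigr) = \bigl(U^A H^A (U^A)^\dag\, P_0^A\bigr)\otimes P_0^B,
\]
using Eq.~(\ref{papb}) and the intertwining $U^A P^A = P_0^A U^A$. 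Block-diagonality of $U^A H^A (U^A)^\dag$ with respect to $P_0^A$ then gives $U^A H^A (U^A)^\dag P_0^A = H_{\eff}^A$, and the first term becomes $H_{\eff}^A\otimes P_0^B$. The second term is treated symmetrically, yielding $P_0^A\otimes H_{\eff}^B$. Since $H_{\eff}^{AB}$ acts on $\calP_0^A\otimes \calP_0^B$ where $P_0^A$ and $P_0^B$ act as identity operators, this is exactly Eq.~(\ref{additivity0}).

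The only subtle step is the first one: one must be careful that $U^{AB}\ne U^A\otimes U^B$ in general (as emphasized in Section~\ref{subs:mult}), but weak multiplicativity guarantees they agree once composed with the relevant projector, which is precisely how $H_{\eff}^{AB}$ is defined. Beyond that, everything reduces to commuting projectors past spectral operators and invoking the intertwining relations $U^X P^X = P_0^X U^X$; no computation with Taylor coefficients is needed, which is exactly the advantage of having the exact definition of $U$ available.
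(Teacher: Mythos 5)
Your proof is correct and follows essentially the same route as the paper: both hinge on weak multiplicativity (Lemma~\ref{lemma:mult}) combined with the defining relations Eq.~(\ref{papb}) and Eq.~(\ref{pab}) to replace $U^{AB}$ by $U^A\otimes U^B$ inside the projector sandwich, after which the tensor-product structure of $H_0+\epsilon V$ gives additivity. The only difference is that the paper states the two intertwining identities and leaves the final computation implicit (``which implies Eq.~(\ref{additivity0})''), whereas you spell out that computation --- including the block-diagonality step $U^A H^A (U^A)^\dag P_0^A = H_{\eff}^A$ --- explicitly and correctly.
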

\begin{proof}
The additivity of $H_{\eff}$ follows directly from the weak multiplicativity of the direct rotation, see
Lemma~\ref{lemma:mult}. Indeed, using Eq.~(\ref{mult}), Eq.~(\ref{papb}), and Eq.~(\ref{pab}) one gets
\[
(P^A_0 \otimes P^B_0) U^{AB} = (P^A_0 \otimes P^B_0) (U^A \otimes U^B)
\quad \mbox{and} \quad
(U^{AB})^\dag (P^A_0 \otimes P^B_0)= (U^A \otimes U^B)^\dag (P^A_0 \otimes P^B_0)
\]
which implies Eq.~(\ref{additivity0}).
\end{proof}
It is worth mentioning that additivity of $H_{\eff}^{AB}$ does not apply to individual diagrams in
its perturbative expansion, see Section~\ref{subs:diagram}.
The simplest counter-example is a system of two qutrits (spins $S=1$), that is,
$\calH^A=\calH^B=\CC^3$. Choosing a basis of $\CC^3$ as $\{ |0\ra, |1\ra, |2\ra \}$,
we can define Hamiltonians $H^A_0=H^B_0=|2\ra\la 2|$ such that
$\calP_0^A=\calP_0^B=\CC^2$ is spanned by $|0\ra$ and $|1\ra$.
Choosing $V^A$ and $V^B$ as generic (random) $3\times 3$ hermitian matrices
we computed numerically individual diagrams $O(T_1), O(T_2), O(T_3)$
that contribute to
the fourth-order effective Hamiltonian $H_{eff,4}^{AB}$, see Eqs.~(\ref{T123}).
Note that $O(T_i)$ are two-qubit operators acting on
$\calP^A_0\otimes \calP^B_0 =\CC^2\otimes \CC^2$.
We observed numerically that $\la 0,0| O(T_i)|1,1\ra\ne 0$
for all $i=1,2,3$.
It shows that individual diagrams $O(T_i)$ contain interactions between $A$ and $B$.

\section{Schrieffer-Wolff theory for quantum many-body systems}
\label{sec:manybody}

In this section we specialize the SW method to many-body systems such as spin chains and spin
lattices. The Hilbert space describing these systems has a  a tensor
product structure $\calH=\bigotimes_{a=1}^N \calH_a$, where $\calH_a$ is a local Hilbert
space describing the $a$-th spin and $N$ is the total number of spins.
 We shall assume that the unperturbed Hamiltonian $H_0$ is a sum
of single-spin operators, so that its low-energy subspace $\calP_0$ by itself  has a tensor product structure,
$\calP_0=\bigotimes_{a=1}^N \calP_{0,a}$, where $\calP_{0,a}\subseteq \calH_a$ is the low-energy subspace
of the $a$-th spin. A perturbation $V$ will be chosen as a sum of two-spin interactions
such that each spin can interact only with $O(1)$ other spins.
Our assumption on $H_0$ and $V$ are formally stated in Section~\ref{subs:defs}.

The main technical difficulty with applying general  techniques of Section~\ref{sec:SW}
to many-body systems such as spin lattices is that the norm $\|\epsilon V\|$
is typically a macroscopic quantity proportional to the total number of spins $N$. On the other
hand, the spectral gap $\Delta$ of the unperturbed Hamiltonian $H_0$ does not grow with $N$.
It means that $\|\epsilon V\|\sim N \gg \Delta$ and thus the SW transformation is not well-defined,
see Lemma~\ref{lemma:per1}. By the same token, the Taylor series for the SW generator $S$ and the effective Hamiltonian $H_{\eff}$ derived in Section~\ref{subs:series} could be divergent. We shall get around this difficulty by
considering the Taylor series $H_{\eff}=\sum_{q\ge 0} H_{\eff,q} \, \epsilon^q$ truncated at some fixed order $n$,
\be
\label{Htruncated}
H_{\eff}^{\la n\ra}  = P_0 H_0  + \epsilon P_0 V P_0 + \sum_{q=2}^n H_{\eff,q} \, \epsilon ^q.
\ee
Note that $H_{\eff}^{\la n\ra}$ is a polynomial in $\epsilon$ of finite degree, so it is well defined even if the
series for $H_{\eff}$ is divergent.

Since the low-energy subspace $\calP_0$ has a tensor product structure, one can ask
whether the truncated Hamiltonian $H_{\eff}^{\la n\ra}$ can be written as a sum of
local interactions. In Section~\ref{subs:lct} we use additivity of $H_{\eff}$
to show that a Taylor coefficient $H_{\eff,q}$ includes only interactions among subsets of spins that are spanned by
{\em connected} clusters of $q$ edges in the interaction graph, see Theorem~\ref{thm:lct}.
It demonstrates that the SW transformation maps a high-energy Hamiltonian with local interactions to
an effective low-energy Hamiltonian with (approximately) local interactions.

Our main result is the following theorem.
\begin{theorem}
\label{thm:global}
There exists a constant threshold $\epsilon_c>0$ depending only on $n$, $\Delta$, and the maximum
degree of the interaction graph such that for all $|\epsilon|<\epsilon_c$ the ground state
energy of $H_{\eff}^{\la n\ra}$ approximates the ground state energy of $H_0+\epsilon V$
with an error at most $\delta_n=O(N|\epsilon|^{n+1})$.
\end{theorem}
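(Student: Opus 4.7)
The plan is to work with the local SW method of Datta et al.\ rather than the standard SW transformation, exploiting the fact that the local SW generator $\tilde S$ is a sum of operators supported on small connected clusters. Since the paper establishes that the standard and local truncated effective Hamiltonians are unitarily equivalent on $\calP_0$ up to an error of size $O(N|\epsilon|^{n+1})$, it suffices to prove the theorem for the local version $\tilde H_{\eff}^{\la n\ra}$.

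First, I would form the truncated local-SW unitary $\tilde U^{\la n\ra} = \exp(\tilde S^{\la n\ra})$ with $\tilde S^{\la n\ra} = \sum_{q=1}^{n} \tilde S_q \, \epsilon^q$ and expand
\[
\tilde U^{\la n\ra}(H_0 + \epsilon V)(\tilde U^{\la n\ra})^\dag = H_{\mathrm{bd}} + \epsilon^{n+1} W ,
\]
where $H_{\mathrm{bd}}$ is block-diagonal with respect to $P_0$ and its restriction to $\calP_0$ equals $\tilde H_{\eff}^{\la n\ra}$ by construction of the local SW. The term $\epsilon^{n+1} W$ collects all contributions of order at least $n+1$ arising from the Magnus/BCH-type expansion of the conjugated Hamiltonian. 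The critical estimate, and the technical heart of the argument, is $\|W\| \le c_n N$. To prove this I would write every term in the expansion as a nested commutator of local pieces of $\tilde S^{\la n\ra}$ and $V$, and then apply the linked cluster theorem of Section~\ref{subs:lct}: each such commutator is itself a sum of operators supported on connected clusters of bounded size (controlled by $n$ and the graph degree), and the number of such connected clusters is $O(N)$.

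Second, I would show that the ground state of $H_{\mathrm{bd}}$ lies in $\calP_0$. Writing $H_{\mathrm{bd}} = H_0 + V'$ with $V'$ a block-diagonal sum of local terms of norm $O(|\epsilon|)$ per local term, Lemma~\ref{lemma:new} on stability against block-diagonal perturbations guarantees that for $|\epsilon|$ below a constant threshold depending only on $\Delta$, the maximum graph degree, and the interaction strength, the ground state of $H_{\mathrm{bd}}$ lies entirely in $\calP_0$. Hence the ground-state energy of $H_{\mathrm{bd}}$ coincides with that of $\tilde H_{\eff}^{\la n\ra}$. Finally, since unitary conjugation preserves the spectrum, $H_0+\epsilon V$ has the same ground-state energy as $H_{\mathrm{bd}} + \epsilon^{n+1} W$, and the Weyl inequality yields $|\lambda_0(H_0+\epsilon V) - \lambda_0(H_{\mathrm{bd}})| \le \|\epsilon^{n+1} W\| = O(N|\epsilon|^{n+1})$. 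Combining the two steps gives the bound for $\tilde H_{\eff}^{\la n\ra}$, and the unitary equivalence with the standard SW transfers it to $H_{\eff}^{\la n\ra}$.

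The main obstacle is the extensive norm bound $\|W\| = O(N)$. Naively, a nested commutator of operators of norm $O(N)$ blows up polynomially in $N$; only by carefully tracking the cluster structure of $\tilde S^{\la n\ra}$ and of $V$, and then invoking the linked cluster theorem to show that disconnected diagrams cancel order by order, does one recover the linear-in-$N$ scaling. This combinatorial bookkeeping is precisely what makes the local SW method essential here --- the standard SW generator $S=\log U$ is non-local already at lowest order in $\epsilon$ and would not admit such a bound directly, which is why the proof must be routed through $\tilde S$ and then pulled back via the established equivalence.
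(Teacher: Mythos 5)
Your proposal follows the same route as the paper: the paper's proof of Theorem~\ref{thm:global} is precisely the combination of Theorem~\ref{thm:local} (the same statement for the local-SW effective Hamiltonian, proved via the block-diagonal-plus-garbage splitting, the stability result of Lemma~\ref{lemma:new} and Corollary~\ref{corol:overall}, and unitary invariance of the spectrum) with Theorem~\ref{thm:iso} (the unitary $e^K$ on $\calP_0$ relating the local and global truncated Hamiltonians up to $O(N|\epsilon|^{n+1})$). You have simply inverted which half you re-derive and which half you cite; the architecture, including the final Weyl-inequality step, is identical.

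The one place where your sketch, as written, would not go through is the extensive bound $\|W\|\le c_n N$ on the garbage term. Two issues. First, the supports of the tail terms are \emph{not} of size bounded by a function of $n$ and the graph degree alone: at order $j>n$ the nested commutators $\hat{T}_{j_1}\cdots \hat{T}_{j_q}(H_0)$ and $\hat{T}_{j_1}\cdots \hat{T}_{j_q}(V)$ with $j_1+\cdots+j_q \approx j$ live on connected clusters whose size grows linearly in $j$, and the number of such clusters per site grows exponentially in $j$. Summing the infinite tail therefore requires a bound on the interaction strengths $\|V^{(j)}\|_1$ that is geometric in $j$; this is exactly the paper's Lemma~\ref{lemma:v}, proved by a generating-function (Cauchy-integral) argument, and it is what forces the threshold to shrink as $\epsilon_c=\Omega(\Delta n^{-2})$. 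Your per-order cluster counting gives extensivity of each term but by itself says nothing about convergence of the sum over orders $j\ge n+1$. Second, no appeal to the linked cluster theorem (Theorem~\ref{thm:lct}) or to ``cancellation of disconnected diagrams'' is needed or available here: commutators of disjointly supported local operators vanish identically, so the tail is automatically a sum of connected local terms. Theorem~\ref{thm:lct} is a statement about the Taylor coefficients of the effective Hamiltonian on $\calP_0$; in the paper it enters not in the garbage bound but inside Theorem~\ref{thm:iso}, to establish the linked-cluster property and $O(1)$ strength of the generator $K$ --- i.e., inside the half of the argument you cite rather than prove. With Lemma~\ref{lemma:v} substituted for your bounded-cluster-size claim, your plan becomes the paper's proof.
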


We begin by proving the theorem for the special case of so-called block-diagonal perturbations,
see Section~\ref{subs:bd}. These are perturbations composed of local interactions preserving the
low-energy subspace $\calP_0$. For block-diagonal perturbations the SW transformation is trivial,
that is, $S=0$, see Section~\ref{subs:series}. It follows that
$H_{\eff}=P_0 H_0 + \epsilon P_0 V P_0$. Clearly, $H_{\eff}$ can reproduce ground state properties
of the exact Hamiltonian $H_0+\epsilon V$ only if at least one ground state of $H_0+\epsilon V$
belongs to the low-energy subspace $\calP_0$.
We prove that this is indeed the case if $|\epsilon|<\epsilon_c'$ for some
constant $\epsilon_c'>0$ that depends only on the gap $\Delta$, maximum degree of the interaction
graph, and the strength of interactions in $V$, see Lemma~\ref{lemma:new}.

The role of the SW transformation is to map generic perturbations to block-diagonal perturbations.
Unfortunately, the SW transformation cannot
be used directly since the transformed truncated Hamiltonian $U(H_0+\epsilon V)U^\dag$
becomes local only when restricted to  $\calP_0$, while Lemma~\ref{lemma:new} requires locality on the
entire Hilbert space. In addition, the Taylor coefficients $S_p$ of the SW generator
are highly non-local operators which complicates an  analysis of the error obtained by truncating
the series at some finite order.
We resolve this problem by employing an auxiliary perturbative expansion
due to Datta et al~\cite{DFFR} which we refer to as a {\em local Schrieffer-Wolff} transformation,
see Section~\ref{subs:DFFR}.
It defines an effective low-energy Hamiltonian
\[
H_{\eff,\loc}=P_0\,  e^T (H_0+\epsilon V)e^{-T} P_0
\]
for some unitary transformation $e^T$.
This transformation is constructed such that
the transformed Hamiltonian $e^T (H_0+\epsilon V)e^{-T}$ is
block-diagonal, that is, it can be written as a sum of (approximately) local interactions
preserving $\calP_0$. Here the locality of interactions holds on the full Hilbert space
which allows us to apply Lemma~\ref{lemma:new}.
 In Section~\ref{subs:DFFR} we prove
an analogue of Theorem~\ref{thm:global} for the $n$-th order effective Hamiltonian constructed using the local SW method.
The proof uses some of techniques developed by Datta et al~\cite{DFFR} to bound the error
resulting from truncating the series for the generator $T$.

Finally, we construct  a unitary transformation $e^K$ acting on the low-energy subspace $P_0$
that maps the effective low-energy
Hamiltonians obtained using the standard and the local SW methods to each other,
that is, $H_{\eff}=e^K H_{\eff,\loc} e^{-K}$, see Section~\ref{subs:equiv}.
The relationship between low-energy theories
obtained using the standard and the local SW methods is illustrated on Fig.~\ref{fig:iso}.
We use weak multiplicativity of the SW transformation proved in Section~\ref{subs:mult}
to show that the perturbative expansion of $K$ contains only linked clusters.
This linked cluster property of $K$ allows us to bound the error resulting from truncating
the series for $K$ at the $n$-th order by $O(N|\epsilon|^{n+1})$.
Combining the two errors we obtain the desired bound on $\delta_n$ in Theorem~\ref{thm:global}.

\subsection{The unperturbed Hamiltonian and the perturbation}
\label{subs:defs}
Let $\Lambda$ be a lattice or a graph with $N$ sites such that each site $u\in \Lambda$ is occupied by
a finite-dimensional particle (spin) with a local Hilbert space $\calH_u$. Accordingly, the full Hilbert space is
\[
\calH=\bigotimes_{u\in \Lambda} \calH_u.
\]
We will choose  the unperturbed Hamiltonian $H_0$ as a sum of single-spin operators,
\be
H_0=\sum_{u\in \Lambda} H_{0,u},
\ee
where $H_{0,u}$ acts non-trivially only on a spin $u$.
By performing an overall energy shift we can always assume that each term
$H_{0,u}$ is a positive semi-definite operator and the ground state energy of $H_{0,u}$ is
zero. Let $\calP_{0,u} \subseteq \calH_u$ be the ground  subspace of a spin $u$
and $P_{0,u}$ be the tensor product of the projector onto $\calP_{0,u}$ and the identity operator
on all other spins.
Then the ground subspace of the entire Hamiltonian $H_0$ is
\[
\calP_0=\bigotimes_{u\in \Lambda} \calP_{0,u}
\]
and the projector onto $\calP_0$ can be represented  as
\[
P_0=\prod_{u\in  \Lambda} P_{0,u}.
\]
Let $Q_0=I-P_0$ be the projector onto $\calP_0^\perp$ and $Q_{0,u}= I-P_{0,u}$.
We shall assume that each term $H_{0,u}$ has a spectral gap at least $\Delta$ above the ground state.
Thus the full Hamiltonian $H_0$ also has a spectral gap at least $\Delta$.
In general, the maximal norm of the local terms $\|H_{0,u}\|$ is a parameter independent of $\Delta$.
However, for the sake of simplicity we shall assume that
\[
\|H_{0,u}\|=O(\Delta) \quad \mbox{for all $u\in \Lambda$}.
\]

It is worth mentioning that if we allow each spin to have two or more distinct levels in the low-energy subspace
$\calP_{0,u}$, the full Hamiltonian might not have a spectral gap between the subspaces $\calP_0$ and $\calP_0^\perp$.
Consider as an example the case when each spin has $3$ basis states $|0\ra, |1\ra,|2\ra$ which have energy
$0$, $\delta>0$, and $\Delta\gg \delta$ respectively such that the low-energy subspace $\calP_{0,u}$ is spanned by the states
$|0\ra$ and $|1\ra$. Then the state $|1^{\otimes N}\ra\in \calP_0$ has energy $N\delta$ while the state
$|2,0^{\otimes (N-1)}\ra\in \calP_0^\perp$ has energy $\Delta$. Thus for any constant $\delta$ and $\Delta$ the spectral
gap between $\calP_0$ and $\calP_0^\perp$ closes if $N\ge \Delta \delta^{-1}$.

We will choose the perturbation $V$ as a sum of two-spin interactions between nearest neighbor sites,
\[
V=\sum_{(u,v)\in \calE} V_{u,v}.
\]
Here $\calE$ is the set of edges of the underlying lattice (which could be an arbitrary graph)
and $V_{u,v}$ is an arbitrary hermitian operator acting on a pair of sites $u,v$.
To describe the effective low-energy theory we
will have to consider {\em $k$-local} perturbations,
that is, Hamiltonians with at most $k$-spin interactions. A $k$-local perturbation $V$
can be written as
\[
V=\sum_{A\subseteq \Lambda} V_A,  \quad \quad \mbox{$V_A=0$ unless $|A|\le k$}.
\]
Here $V_A$ is an arbitrary hermitian operator acting on a subset of spins $A$.
Define a {\em strength}  of the perturbation as
\be
\label{norm}
J=\max_{u\in \Lambda} \; \sum_{A\subseteq \Lambda\, : \, A\ni u} \|V_A\|.
\ee
The strength of the perturbation provides an upper bound on the combined norm of all interactions affecting any selected spin.
If $\Lambda$ is a regular lattice in a space of bounded dimension then $J$
coincides up to a constant factor with the maximum norm of interactions, $J\sim \max_A \|V_A\|$.
We will always assume that $J$ and $\Delta$ are constants independent of $N$.
Let us say that a perturbation $V=\sum_{A\subseteq \Lambda} V_A$ is {\em block-diagonal}
iff $V_A$ preserves the low-energy subspace $\calP_0$, or equivalently, $[V_A,P_0]=0$.

\subsection{Block-diagonal perturbations}
\label{subs:bd}

Recall that the goal of the SW transformation $U$ is to transform the perturbed Hamiltonian
$H_0+\epsilon V$ into a block-diagonal form such that the transformed Hamiltonian
$U(H_0+\epsilon V)U^\dag$ preserves the low-energy subspace $\calP_0$.
In this section we analyze the special case when the perturbed Hamiltonian $H_0+\epsilon V$ is already
block-diagonal, so that $U=I$. In this case the effective low-energy Hamiltonian is simply
$H_{\eff}=P_0 (H_0+\epsilon V)P_0$. Since we assumed that $H_0 P_0=0$, one has
$H_{\eff}=\epsilon P_0 V P_0$. A natural question is whether the ground state energy of $H_{\eff}$
coincides with the ground state energy of $H_0+\epsilon V$. Clearly, this happens iff at least one
 ground state of $H_0+\epsilon V$
belongs to the subspace $\calP_0$. In this section we prove that this is indeed the case provided that the strength
of the perturbation $\epsilon V$ is below certain constant threshold depending only on $\Delta$ and $k$.
Intuitively it follows from the fact that exciting any subset of $w$ spins increases the energy of the term $H_0$
by at least $w\Delta$, while the energy of the term $\epsilon V$ can go down at most by $2w |\epsilon| J$. Hence exciting
spins can only increase the energy provided that $|\epsilon|<\epsilon_c\sim \Delta/J$.
Unfortunately this simple argument does not tell anything about states that contain superpositions of different excited subsets
with different $w$, and  one may expect that exciting subsets of spins ``in a superposition" can
help to reduce the overall energy. We analyze this general situation in the following lemma
(to simplify notations we include the factor $\epsilon$ into the definition of $V$).
\begin{lemma}
\label{lemma:new}
Let $V=\sum_{A\subseteq \Lambda} V_A$ be a $k$-local perturbation with strength $J$.
Suppose that each term $V_A$ is block-diagonal, that is,
$V_A$ preserves the subspace $\calP_0$.  Assume also that  $2^{k+2} J <\Delta$.
Then the subspace $\calP_0$ contains at least one ground state of $H_0+V$.
\end{lemma}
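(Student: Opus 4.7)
The plan is to show that every eigenvalue of $H_0 + V$ on $\calP_0^\perp$ is at least $E_0 := \lambda_{\min}\!\left((P_0 V P_0)|_{\calP_0}\right)$. Since each $V_A$ commutes with $P_0$, so does $V$; combined with $[H_0, P_0] = 0$, this means $H_0 + V$ decouples into two blocks, and on $\calP_0$, using $H_0 P_0 = 0$, the block is just $P_0 V P_0$ with minimum eigenvalue $E_0$. If the $\calP_0^\perp$ block is also bounded below by $E_0$, the overall ground-state energy equals $E_0$ and is attained by a vector in $\calP_0$, proving the lemma.

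Fix $\psi \in \calP_0^\perp$ with $\|\psi\|=1$. For each $S \subseteq \Lambda$ let $\Pi_S = \prod_{u \in S} Q_{0,u} \prod_{u \notin S} P_{0,u}$ be the projector onto states in which exactly the spins in $S$ are excited; these are mutually orthogonal and sum to $I$. Write $\psi = \sum_S \psi_S$ with $\psi_S = \Pi_S \psi$ and $a_S = \|\psi_S\|$, noting $a_\emptyset = 0$. Because $H_0 = \sum_u H_{0,u}$ is a sum of single-site terms satisfying $H_{0,u} \geq \Delta Q_{0,u}$, and each $H_{0,u}$ preserves every $\Pi_S$, there are no cross-terms and $\langle \psi | H_0 | \psi \rangle \geq \Delta \sum_S |S|\, a_S^2$.

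The heart of the argument is a matching lower bound $\langle \psi | V | \psi \rangle \geq E_0 - c_k J \sum_S |S|\, a_S^2$ with $c_k = O(2^k)$. For the diagonal piece $\langle \psi_S | V | \psi_S\rangle$ the reduced state $\rho_{S^c} = \mathrm{Tr}_S |\psi_S\rangle\langle \psi_S|$ is supported on $\bigotimes_{u \in S^c}\calP_{0,u}$; choosing any state $\tau$ on the (nonempty) subspace $\bigotimes_{u \in S} \calP_{0,u}$ and setting $\sigma = \rho_{S^c} \otimes \tau$ produces an operator on $\calP_0$ with $\mathrm{Tr}(V \sigma) \geq E_0\, a_S^2$. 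Splitting $V = \sum_A V_A$ into terms with $A \subseteq S^c$, whose contribution to $\mathrm{Tr}(V \sigma)$ equals that to $\langle \psi_S | V | \psi_S\rangle$, and terms with $A \cap S \neq \emptyset$, each bounded in absolute value by $\|V_A\|\, a_S^2$, then using the strength estimate $\sum_{A : A \cap S \neq \emptyset} \|V_A\| \leq J |S|$, gives $\langle \psi_S | V | \psi_S\rangle \geq (E_0 - 2 J |S|)\, a_S^2$. For the off-diagonal cross-terms $\langle \psi_{S'} | V_A | \psi_S \rangle$ with $S \neq S'$, the key observation is that each $V_A$ preserves both $\calP_{0,A}$ and $\calP_{0,A}^\perp$ (since $[V_A, P_{0,A}] = 0$), forcing the term to vanish unless $S \triangle S' \subseteq A$ and both $S \cap A$, $S' \cap A$ are nonempty. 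A Cauchy–Schwarz sum over the at most $2^{|A|} - 1$ nonempty subsets of $A$, followed once more by the strength bound, controls the full off-diagonal contribution in magnitude by $O(2^k) J \sum_S |S|\, a_S^2$.

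Combining these three estimates yields $\langle \psi | H_0 + V | \psi \rangle \geq E_0 + (\Delta - c_k J) \sum_S |S|\, a_S^2$. Since $a_\emptyset = 0$ forces $\sum_{S \neq \emptyset} |S|\, a_S^2 \geq 1$, the hypothesis $2^{k+2} J < \Delta$ comfortably dominates the $O(2^k) J$ correction, giving $\langle \psi | H_0 + V | \psi \rangle \geq E_0$ and completing the proof. I expect the main obstacle to be the off-diagonal bookkeeping: verifying the selection rule stating when $V_A$ can connect $\Pi_S$ to $\Pi_{S'}$, and arranging the Cauchy–Schwarz summation so that the final constant multiplies $\sum_S |S|\, a_S^2$ (which is absorbed by $\langle \psi | H_0 | \psi \rangle$) rather than a larger quantity such as $\sum_S 2^{|S|}\, a_S^2$; the diagonal bound and the $H_0$ estimate follow fairly directly from block-diagonality and the single-site gap.
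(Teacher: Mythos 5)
Your proof is correct, but it follows a genuinely different route from the paper's. Both arguments start from the same decomposition of a state in $\calP_0^\perp$ into excitation sectors and the same selection rule (a block-diagonal $V_A$ can connect two sectors only if the sectors agree outside $A$ and $V_A$ overlaps an excited spin in each), but what is built on top differs. The paper argues by contradiction: it assumes a ground state $|\psi\ra\in\calP_0^\perp$, singles out the sector $x^*$ of maximal weight $\alpha_{x^*}$, and applies single-spin ``annihilation'' TPCP maps to that one sector, producing a mixed comparison state whose energy is strictly smaller once $2^{k+2}J<\Delta$; cross terms are tamed by Cauchy--Schwarz together with the maximality bound $\sqrt{\alpha_{x^*}\alpha_x}\le\alpha_{x^*}$. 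You instead prove the direct variational statement that every unit vector of $\calP_0^\perp$ has energy at least $E_0=\lambda_{\min}\bigl((P_0VP_0)|_{\calP_0}\bigr)$: diagonal terms are compared sector-by-sector against the states $\tau\otimes\rho_{S^c}$ supported in $\calP_0$, off-diagonal terms are spread by AM--GM, and both corrections are absorbed into the gain $\Delta\sum_S|S|\,a_S^2$ coming from $H_0$. Your bookkeeping closes comfortably: $2J|S|$ per sector from the diagonal part, at most $2^{|A|}-2$ partners per pair $(S,A)$ from the off-diagonal part, so the total correction is $c_k\le 2+2^k<2^{k+2}$, while $a_\emptyset=0$ gives $\sum_S|S|\,a_S^2\ge 1$. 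What your route buys: no contradiction scaffolding, no TPCP maps or mixed states, and a free quantitative bonus --- the bottom of the $\calP_0^\perp$ block lies at least $\Delta-c_kJ$ above $E_0$ (this does not contradict the paper's remark following the lemma, which concerns separation of the full block spectra, not of their minima; the $\calP_0$ block spectrum has width of order $NJ$). What the paper's maximality trick buys: it only ever modifies a single sector and never has to distribute cross-term weight over all sectors. One step you should spell out: the hypothesis is $[V_A,P_0]=0$, whereas your selection rule uses commutation with the local projector $P_{0,A}=\prod_{u\in A}P_{0,u}$; the two are equivalent because $V_A$ acts trivially outside $A$, so that $[V_A,P_0]=[V_A,P_{0,A}]\,\prod_{u\notin A}P_{0,u}$, and the last product is a nonzero operator on the complementary factors.
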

Let us emphasize that the lemma does not tell anything about the spectral gap of the perturbed Hamiltonian
$H_0+V$ separating the subspaces $\calP_0$ and $\calP_0^\perp$.
We believe that this gap can be closed by perturbations with a strength of order $1/N$.
\begin{proof}
Let us assume that $H_0+V$ has a ground state $|\psi\ra\in \calP_0^\perp$. We will show that this assumption leads to a contradiction. Indeed, we can represent such a ground state $|\psi\ra$ as  a sum over $2^N-1$ configurations of excited  spins,
\be
\label{sumx}
|\psi\ra=\sum_{x\in \{0,1\}^N} |\psi_x\ra,
\ee
where $x$ is a binary string of length $N$ such that
$x_u=1$ indicates that a spin $u$ is excited ($P_{0,u}\, |\psi_x\ra = 0$) and $x_u=0$
indicates that a spin $u$ is not excited ($P_{0,u}\, |\psi_x\ra=|\psi_x\ra$).
More formally,
\[
|\psi_x\ra = R(x)\, |\psi\ra, \quad \mbox{where} \quad R(x)=\prod_{u\in \Lambda} ((1-x_u) P_{0,u} + x_u Q_{0,u}).
\]
We note that $|\psi_x\ra$ may be a highly entangled state.
Since we assumed $|\psi\ra\in \calP^\perp$, the configuration with no excited spins does not appear
in $|\psi\ra$, that is,
\[
|\psi_{0^N}\ra=0.
\]
Let us  choose a configuration $x^*\in \{0,1\}^N$  with the largest amplitude
\[
\alpha_x=\la \psi_x|\psi_x\ra,
\]
that is,
\be
\label{alpha_max}
\alpha_{x^*}\ge \alpha_x \quad \mbox{for all $x$}.
\ee
The key idea is to transform $|\psi\ra$ to  a new state $\eta$ (which is generally mixed)
by  ``annihilating" all excitations in $|\psi_{x^*}\ra$ while leaving all other components
$|\psi_x\ra$, $x\ne x^*$, untouched.
We will show that for sufficiently small strength of $V$ the energy of  $\eta$ is smaller than the
energy of $|\psi\ra$ thus arriving at a contradiction.

For any spin $u\in \Lambda$ choose an arbitrary ground state $|\omega_u\ra \in \calP_{0,u}$.
Define a trace preserving completely positive (TPCP) map $\calE_u$
that erases the state of the spin $u$ and replaces it by the chosen  ground state $|\omega_u\ra$.
More formally, for any (mixed) state $\tau$ describing the entire system one has
\[
\calE_u(\tau) =|\omega_u\ra\la \omega_u| \otimes \trace_u (\tau).
\]
Note that $\calE_u(\tau)$ may be mixed even if $\tau$ is a pure state.
The map $\calE_u$ can be regarded as an ``annihilation superoperator".
Obviously, annihilation superoperators on different spins commute with each other.
Let $\rho$ be a state obtained from $|\psi_{x^*}\ra$
by applying the annihilation superoperator to every excited spin,  that is,
\[
\rho=\left( \prod_{u\, : \, x^*_u=1} \calE_u\right) (|\psi_{x^*}\ra\la \psi_{x^*}|).
\]
Clearly, $\calE_u$ reduces the energy of the term $H_{0,u}$ at least by
$\Delta \alpha_{x^*}$ for any excited spin $u$. Thus if we denote the
number of excited spins (the Hamming weight of the string $x^*$) by $w$ then
\be
\label{reduce1}
\trace{(\rho \, H_0)} \le \la \psi_{x^*}|H_0|\psi_{x^*}\ra  -w\Delta\, \alpha_{x^*}.
\ee
Now consider a state obtained from $|\psi\ra$ be removing the largest-amplitude term,
\[
|\psi_{\els}\ra=\sum_{x\ne x^*} |\psi_x\ra.
\]
Denote
\be
\label{new_state}
\eta=\rho+|\psi_{\els}\ra\la \psi_{\els}|.
\ee
Since $\calE_u$ are trace preserving maps, we have $\trace\rho =\la \psi_{x^*}|\psi_{x^*}\ra$
and thus $\trace\eta = \la \psi|\psi\ra=1$.
Let us show that for sufficiently small strength of $V$ the energy of $\eta$ is smaller than the
energy of $|\psi\ra$, thus getting a contradiction. Indeed, since $H_0$ cannot change
a configuration of excited spins we have
\be
\label{reduction1}
\trace{(\eta H_0)} - \la \psi|H_0|\psi\ra = \trace{(\rho \, H_0)} -\la \psi_{x^*}|H_0|\psi_{x^*}\ra \le -w\Delta\, \alpha_{x^*},
\ee
see Eq.~(\ref{reduce1}). It remains to bound the contribution to the energy
coming from the perturbation $V$.  Simple algebra leads to
\be
\label{Wcon}
\trace{(\eta V)} -\la \psi|V|\psi\ra =\sum_{A\subseteq \Lambda}
\trace{(\rho V_A)} - \la \psi_{x^*} |V_A|\psi_{x^*}\ra - 2\mbox{Re}(\la \psi_{x^*}|V_A|\psi_{else}\ra).
\ee
Note that  the states $\rho$ and $|\psi_{x^*}\ra$ have the same reduced density matrices on any subset
$A\subseteq \Lambda$  that contains no excited spins, that is, $x^*_u=0$ for all $u\in A$.
It follows from the fact that annihilation superoperators applied outside $A$ cannot change
the reduced state of $A$.
Therefore
\be
\label{zero_con1}
\trace{(\rho V_A)} -\la \psi_{x^*} |V_A|\psi_{x^*}\ra =0 \quad \mbox{if $x^*_u=0$ for all $u\in A$}.
\ee
Taking into account that any interaction $V_A$ is block-diagonal, the operator $V_A$
can change a configuration of excited spins only if at least one spin in $A$ is already excited.
It means that
\be
\label{zero_con2}
\la \psi_{x^*}|V_A|\psi_{else}\ra=0 \quad \mbox{if $x^*_u=0$ for all $u\in A$}.
\ee
Using Eq.~(\ref{zero_con1}) we get a bound
\be
\label{Wbound1}
\left| \sum_{A\subseteq \Lambda} \trace{(\rho V_A)} - \la \psi_{x^*} |V_A|\psi_{x^*}\ra  \right|
\le
\sum_{u\, : \, x^*_u=1} \sum_{A\ni u}
\left| \trace{(\rho V_A)} - \la \psi_{x^*} |V_A|\psi_{x^*}\ra  \right|\le 2w\alpha_{x^*} J.
\ee
It bounds the contribution to the energy from the first and the second terms in Eq.~(\ref{Wcon}).
Now let us bound the contribution from the last term in Eq.~(\ref{Wcon}).  Using Eq.~(\ref{zero_con2})
we get
\bea
\label{Wbound2}
\left| \sum_{A\subseteq \Lambda}
2\mbox{Re}(\la \psi_{x^*}|V_A|\psi_{else}\ra) \right|  &\le&
\sum_{u\, : \, x^*_u=1}\,  \sum_{A\ni u} \, \sum_{x\ne x^*} 2|\la \psi_{x^*} |V_A|\psi_x\ra| \nn \\
&\le &
\sum_{u\, : \, x^*_u=1}\,  \sum_{A\ni u}  2^{k+1} \|V_A\| \sqrt{\alpha_{x^*} \alpha_x} \nn \\
&\le & 2\cdot 2^k w \alpha_{x^*}  J.
\eea
Here the first line follows from Eq.~(\ref{zero_con2}) and definition of $|\psi_{else}\ra$.
The second line follows from  the Cauchy-Schwarz inequality and the fact that $V_A$ acts
non-trivially on at most $k$ spins, so the number of possible transitions $x^*\to x$ is upper
bounded by $2^k$. The third line follows from the
maximality condition Eq.~(\ref{alpha_max}).
Combining bounds Eq.~(\ref{Wbound1},\ref{Wbound2}) we arrive at
\be
\label{Wbound}
\left| \trace{(\eta V)} - \la \psi|V|\psi\ra \right| \le 2^{k+2} w\alpha_{x^*} J.
\ee
Combining it with Eq.~(\ref{reduction1}) we finally get
\be
\trace{(\eta H)} - \la \psi|H|\psi\ra \le (-\Delta + 2^{k+2} J) w\alpha_{x^*}.
\ee
Since by assumption $x^*$ contains at least one excitation, we have $w>0$.
It means that the energy of $\eta$ is strictly smaller than the one of $|\psi\ra$
if $2^{k+2} J<\Delta$. However this is impossible since $|\psi\ra$ is a ground state.
Thus our assumption $|\psi\ra\in \calP^\perp$ leads to a contradiction.
\end{proof}
 Lemma~\ref{lemma:new}
can be easily generalized to the case when $V$ contains interactions among
arbitrarily large number of spins, but the magnitude of $k$-spin interactions
decays exponentially with $k$.
\begin{corol}
\label{corol:overall}
Consider a Hamiltonian $V=\sum_{a=1}^n  V_a$
such that $V_a$ is a $k(a)$-local perturbation with strength $J(a)$.
Suppose that all interactions in each perturbation $V_a$ are block-diagonal.
Assume also that
\be
\label{overall}
\sum_{a=1}^n 2^{k(a)+2}\, J(a) <\Delta.
\ee
Then the subspace $\calP_0$ contains at least one ground state  of $H_0+V$.
\end{corol}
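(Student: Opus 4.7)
The plan is to adapt the proof of Lemma~\ref{lemma:new} essentially verbatim, treating each $V_a$ separately and summing the contributions. Assume for contradiction that every ground state of $H_0 + V$ lies in $\calP_0^\perp$, pick such a ground state $\psi$, and decompose it as $|\psi\ra=\sum_x |\psi_x\ra$ by excitation patterns using the operators $R(x)=\prod_u ((1-x_u)P_{0,u}+x_u Q_{0,u})$. Choose $x^*$ maximizing $\alpha_x=\la \psi_x|\psi_x\ra$, and form the trial state $\eta=\rho + |\psi_\els\ra\la \psi_\els|$ exactly as in the lemma, where $\rho$ is obtained from $|\psi_{x^*}\ra$ by applying the annihilation superoperators $\calE_u$ on every excited site (those with $x^*_u=1$), and $|\psi_\els\ra=\sum_{x\ne x^*}|\psi_x\ra$.

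The analysis of the unperturbed part is unchanged: since $H_0$ cannot change the excitation pattern and each $\calE_u$ on an excited site saves at least $\Delta \alpha_{x^*}$ of energy, one still gets
\[
\trace(\eta H_0)-\la \psi|H_0|\psi\ra \le -w\Delta\,\alpha_{x^*},
\]
where $w$ is the Hamming weight of $x^*$, which is positive because $\psi\in \calP_0^\perp$. For the perturbation, the key observation is that the bound Eq.~(\ref{Wbound}) derived in the proof of Lemma~\ref{lemma:new} only uses two facts about $V$: that each term is block-diagonal with respect to $P_0$, and that the interactions have bounded locality $k$ with a given strength $J$. Both hypotheses hold for each $V_a$ individually with $k=k(a)$ and $J=J(a)$. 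Applying that bound to each summand separately yields
\[
|\trace(\eta V_a)-\la \psi|V_a|\psi\ra| \le 2^{k(a)+2}\, w\,\alpha_{x^*}\, J(a).
\]

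Summing over $a=1,\ldots,n$ and using linearity of the trace and inner product, the total perturbation contribution is bounded by $\left(\sum_a 2^{k(a)+2} J(a)\right) w\,\alpha_{x^*}$, which by the hypothesis Eq.~(\ref{overall}) is strictly less than $\Delta w\,\alpha_{x^*}$. Combined with the $H_0$ bound this gives $\trace(\eta (H_0+V)) < \la \psi|H_0+V|\psi\ra$, contradicting the assumption that $\psi$ is a ground state. Hence some ground state must lie in $\calP_0$.

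There is no real obstacle here; the only thing to verify is that the per-term bounds Eq.~(\ref{zero_con1}) and Eq.~(\ref{zero_con2}) in the proof of Lemma~\ref{lemma:new} do not rely on any global structure of $V$ beyond those used in the strength $J$. Once this is noted, the argument splits cleanly across the summands $V_a$ and the triangle inequality delivers the generalized threshold.
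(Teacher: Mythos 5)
Your proof is correct and follows exactly the route the paper intends: Corollary~\ref{corol:overall} is stated there without proof as an easy generalization of Lemma~\ref{lemma:new}, and your key observation---that the trial state $\eta$ is built from $\psi$ and $H_0$ alone, so the bound Eq.~(\ref{Wbound}) can be applied to each $V_a$ separately (using only block-diagonality, $k(a)$-locality, and strength $J(a)$ of that summand) and the contributions summed against the single gain $-w\Delta\,\alpha_{x^*}$ from $H_0$---is precisely that generalization. Nothing further is needed.
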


\subsection{Linked cluster theorem}
\label{subs:lct}

Let $\epsilon_{u,v}$ be formal variables associated with the edges of the lattice $(u,v)\in \calE$.
Consider a perturbation
\[
V=\sum_{(u,v)\in \calE} \epsilon_{u,v} V_{u,v}.
\]
Using the perturbative expansion described in Section~\ref{subs:series}
one can represent the effective low-energy Hamiltonian $H_{\eff}$ acting on $\calP_0$
as a formal operator-valued multivariate series in variables $\epsilon_{u,v}$.
Clearly any Taylor coefficient $H_{\eff,n}$ is a homogeneous multivariate polynomial
in $\epsilon_{u,v}$ of degree $n$. One can uniquely represent $H_{\eff,n}$ as a sum of monomials
\[
K \, \prod_{(u,v)\in C} \epsilon_{u,v}^{m_{u,v}}, \quad m_{u,v}\ge 1, \quad \sum_{(u,v)\in C} m_{u,v}=n,
\]
where $K$ is some operator acting on $\calP_0$ and $C\subseteq \calE$ is a subset of edges of the lattice.
We shall refer to $C$ as the {\em support} of the above monomial. Thus we have
\be
\label{monomials}
H_{\eff,n}=\sum_{C\subseteq \calE} K_{n,C},
\ee
where $K_{n,C}$ is defined as the sum of all monomials
in $H_{\eff}$ that have support $C$ and degree $n$.
Given any set of edges $C\subseteq \calE$ let
$\Lambda(C)\subseteq \Lambda$ be the corresponding set of sites,
such that $u\in \Lambda(C)$ iff $C$ contains an edge incident to $u$.

Let us point out that the only purpose of introducing the variables $\epsilon_{u,v}$ is to
define the operators $K_{n,C}$ and  the decomposition Eq.~(\ref{monomials}).
Once this decomposition is defined, we can set $\epsilon_{u,v}=\epsilon$
for all edges.  Also note that we do not make any assumptions about convergence of $H_{\eff}$ considered as
a multivariate series\footnote{Let us remark that
absolute convergence of the univariate series $H_{\eff}=\sum_{n=1}^\infty H_{\eff,n} \epsilon^n$ does not imply absolute convergence
of the corresponding multivariate series, since different terms in the multivariate series may cancel out each other when combined into a single term $H_{\eff,n}$.}
since we only consider some fixed order $n$.
The main result of this section is the following theorem.
\begin{theorem}[\bf linked cluster theorem]
\label{thm:lct}
The operator $K_{n,C}$ acts only on the subset of spins $\Lambda(C)$.
In addition, $K_{n,C}=0$ unless $C$ is a connected subset of edges and $|C|\le n$.
\end{theorem}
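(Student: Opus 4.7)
The degree bound $|C|\le n$ is immediate from the definition of $K_{n,C}$: every contributing monomial has the form $K\prod_{(u,v)\in C}\epsilon_{u,v}^{m_{u,v}}$ with each $m_{u,v}\ge 1$ and $\sum_{(u,v)\in C}m_{u,v}=n$, so $|C|\le n$. The substantive content is the locality of $K_{n,C}$ on $\Lambda(C)$ and the connectivity of $C$ whenever $K_{n,C}\ne 0$, both of which I plan to derive from the additivity of $H_{\eff}$ (Lemma~\ref{lemma:add}) combined with a simple polynomial extraction (Möbius inversion) argument on the edge variables $\epsilon_{u,v}$.

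For locality, fix $C\subseteq\calE$ and perform the substitution $\epsilon_{u,v}=0$ for every edge $(u,v)\notin C$. The remaining perturbation $V(C)=\sum_{(u,v)\in C}\epsilon_{u,v}V_{u,v}$ is supported on $A:=\Lambda(C)$, so setting $B:=\Lambda\setminus A$ the Hamiltonian $H_0+V(C)$ decouples between $A$ and $B$: no term has non-trivial support on both sides. Since $\calP_0$ is literally the tensor product of the single-site ground subspaces (see Section~\ref{subs:defs}), Lemma~\ref{lemma:add} applies and gives $H_{\eff}=H_{\eff}[A]\otimes I^B+I^A\otimes H_{\eff}[B]$, where $H_{\eff}[B]=P_0^B H_0^B P_0^B=0$ because the local $H_{0,u}$'s are positive semi-definite with zero ground-state energy and $B$ carries no perturbation. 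Hence every homogeneous-$n$ component of $H_{\eff}$, evaluated at this substitution, acts only on $\Lambda(C)$. Since $H_{\eff,n}\big|_{\epsilon_{u,v}=0\;\forall(u,v)\notin C}=\sum_{C'\subseteq C}K_{n,C'}$, inclusion–exclusion over subsets of $C$ expresses $K_{n,C}$ as an alternating sum of such restrictions and inherits the same locality on $\Lambda(C)$.

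For connectivity, assume $C$ splits as $C=C_1\sqcup C_2$ with $\Lambda(C_1)\cap\Lambda(C_2)=\emptyset$. Applying the same substitution and partitioning $\Lambda$ into the three mutually non-interacting blocks $\Lambda(C_1)$, $\Lambda(C_2)$, and the remainder, two applications of Lemma~\ref{lemma:add} decompose $H_{\eff}$ into a sum of three operators, each depending only on $\epsilon$-variables from one of the blocks (the remainder contributes $0$ as in the previous step). Consequently no monomial in the resulting multivariate series can contain variables from both $C_1$ and $C_2$, but any monomial with support exactly $C$ must do so; combined with the Möbius inversion above this forces $K_{n,C}=0$. The only point that requires real care is the bookkeeping of the Möbius inversion together with the verification that additivity is applicable in the form needed; the rest is essentially the additivity identity rewritten at the level of formal multivariate Taylor coefficients, and I do not anticipate any further obstacle.
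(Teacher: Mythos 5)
Your proposal is correct and takes essentially the same route as the paper's own proof: set $\epsilon_{u,v}=0$ on edges outside $C$, invoke the additivity Lemma~\ref{lemma:add} (with the trivial subsystem contributing nothing), and conclude locality and the vanishing of $K_{n,C}$ for disconnected $C$ by extracting monomial coefficients. The only cosmetic difference is that you make the coefficient extraction explicit via M\"obius inversion over subsets of $C$, whereas the paper appeals directly to the uniqueness of the monomial decomposition of the multivariate polynomial $H_{\eff,n}$.
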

\noindent
Thus the $n$-th order effective Hamiltonian includes only interactions among ``linked clusters''
of spins of size at most $n$. However let us point out that the theorem does not say that $K_{n,C}$ acts non-trivially
on {\em all} spins in $\Lambda(C)$.
\begin{proof}
Choose any subset of edges  $C\subseteq \calE$ and set $\epsilon_{u,v}=0$ for all $(u,v)\notin C$.
Partition the lattice as $\Lambda=AB$  where $A=\Lambda(C)$
and $B$ is the complement of $A$. Then we have $V=V^A\otimes I^B$ for some operator $V^A$
acting on $A$. Since the unperturbed Hamiltonian $H_0$ does not include any interactions, additivity
of the effective Hamiltonian implies $H_{\eff,n}=H_{\eff,n}^A \otimes I^B$, where
$H_{\eff,n}^A$ is the effective low-energy Hamiltonian computed using only the subsystem $A$
with the perturbation $V^A$,  see Lemma~\ref{lemma:add} in Section~\ref{subs:add}.
We conclude that all monomials in the decomposition of $H_{\eff,n}$ act trivially on $B$.
By definition,  $K_{n,C}$ does not depend on the variables $\epsilon_{u,v}$
with $(u,v)\notin C$, so we infer that $K_{n,C}$ acts trivially on $B$.

Suppose now that $C$ consists of two disconnected subsets of edges $C_1$ and $C_2$.
Choose $A=\Lambda(C_1)$ and $B=\Lambda(C_2)$. Again we set $\epsilon_{u,v}=0$ for all $(u,v)\notin C$.
Then we have $V=V^A\otimes I^B + I^A \otimes V^B$, where $V^A\otimes I^B$ is obtained from $V$
by setting $\epsilon_{u,v}=0$ for  all $(u,v)\in C_2$.
Similarly, $I^A\otimes V^B$ is obtained from $V$
by setting $\epsilon_{u,v}=0$ for all $(u,v)\in C_1$.
Additivity of the effective Hamiltonian implies $H_{\eff,n}=H_{\eff,n}^A \otimes I^B + I^A \otimes H_{\eff,n}^B$,
where $H_{\eff,n}^A$ and $H_{\eff,n}^B$ are the effective Hamiltonians computed using only the subsystems
$A$ and $B$ with perturbations $V^A$ and $V^B$ respectively.
It means that no monomial in $H_{\eff,n}$ contains variables $\epsilon_{u,v}$ from both $C_1$ and $C_2$
simultaneously.
Since $K_{n,C}$ does not depend on the variables $\epsilon_{u,v}$
with $(u,v)\notin C$, we infer that $K_{n,C}=0$.

The statement $K_{n,C}=0$ for $|C|>n$ follows from the fact that $H_{\eff,n}$ contains only
monomials of degree $n$.
\end{proof}

It is obvious from the proof that the linked cluster theorem applies to any
operator-valued multivariate series that obeys the additivity property.

\subsection{Local Schrieffer-Wolff trasformation}
\label{subs:DFFR}

Applying  the SW transformation described in Section~\ref{sec:SW} to many-body systems
one must keep in mind that its generator $S$ is a highly non-local operator even in the lowest orders of the perturbative expansion.
This follows from the fact that $S$ is a block-off-diagonal operator, that is
$P_0 S P_0 =0$ and $Q_0 S Q_0 =0$, see Lemma~\ref{lemma:S}.
One can easily check that any block-off-diagonal operator must act non-trivially on every spin in the system.
Indeed, block-off-diagonality implies $S=P_0 S+ S P_0$. If one assumes that $S$ acts trivially on some spin $u$,
then  $Q_{0,u} S Q_{0,u}= S Q_{0,u} \ne 0$. On the other hand,
$Q_{0,u} S Q_{0,u}=Q_{0,u} (P_0S +SP_0) Q_{0,u} =0$ since $Q_{0,u} P_0=P_0 Q_{0,u}=0$.
The non-locality of $S$ is not a very serious drawback if the main object one is interested in is the
effective low-energy Hamiltonian $H_{\eff}$. Indeed, the linked cluster theorem proved in the previous section  implies that all low-order
corrections to $H_{\eff}$ are in fact local operators. On the other hand,
the non-locality of $S$ makes it more difficult to analyze properties of $H_{\eff}$.
In this section we get around this difficulty by employing  a local version of the SW transformation
developed by Datta et al~\cite{DFFR}. To avoid a confusion between the two types of the SW
transformation we shall refer to the direct rotation described in Sections~\ref{sec:crot},\ref{sec:SW}
as a {\em global SW transformation}.

The local SW transformation is a unitary operator $U$ that brings the perturbed Hamiltonian $H_0+\epsilon V$
into  an {\em approximately} block-diagonal form. Specifically, for any fixed integer $n\ge 0$ we shall choose
$U=e^T$ where $T$ is a degree-$n$ polynomial in $\epsilon$,
\[
T=\sum_{q=1}^n T_q \, \epsilon^q, \quad T_q^\dag=-T_q.
\]
This polynomial will be chosen such that
 the transformed Hamiltonian
$e^T (H_0+\epsilon V) e^{-T}$ is block-diagonal if truncated at the order $n$.
The definition of $T$ is similar to the perturbative definition of $S$, see Section~\ref{subs:series}.
We use the Taylor expansion of $\exp{(\hat{T})}$  to get
\be
\label{HtransfT}
e^T(H_0+\epsilon V)e^{-T} = H_0 + \sum_{j=1}^n \epsilon^j \,\left( [T_j,H_0] +V^{(j-1)}\right)
+  \sum_{j=n+1}^\infty \epsilon^j\, V^{(j-1)},
\ee
where $V^{(j)}$ is a certain polynomial function of $\hat{T}_1,\ldots,\hat{T}_{\min{(j,n)}}$ applied to $H_0$ or $V$.
The precise form of this polynomial will be given in Eq.~(\ref{V(j-1)T}).
Strictly speaking, $V^{(j)}$ should also carry a label $n$, but since $n$ is fixed throughout this section,
we shall only retain the label $j$.

The main difference between the local and global SW method is the inductive rule
used to  define the generators of the transformation.
In the global SW method the generators $S_j$ are fixed
by demanding  that $S_j$ is block-off-diagonal while
the $j$-th order contribution to the transformed Hamiltonian
 is block-diagonal, see Section~\ref{subs:series}.
In the local SW method the generators $T_j$ are fixed by demanding  that
$T_j$ is a sum  of local interactions  (the range of interactions may depend on $j$)
and each interaction is {\em locally}  block-off-diagonal on the subset of spins
it acts upon.   For each local term in the expansion of $V^{(j-1)}$ we shall introduce
the corresponding local term in $T_j$ chosen such that
the expansion of $[T_j,H_0]+V^{(j-1)}$ contains only  block-diagonal terms.

To define the generators $T_j$ formally let us introduce some notations.
For any subset of spins $A\subseteq \Lambda$  define projectors onto the local  low-energy  and high-energy subspaces
\be
P_{A}=\prod_{u\in A} P_{0,u} \quad \mbox{and} \quad  Q_{A}=I-P_{A}.
\ee
Introduce superoperators  $\calO_A$ and $\calD_A$ such that
\[
\calO_A(X)=P_{A} X\, Q_{A}+ Q_{A} X\,  P_{A} \quad \mbox{and} \quad
 \calD_A(X)=P_{A}  X\,  P_{A}  + Q_{A} X\,  Q_{A}.
\]
Clearly, $X=\calO_A(X) + \calD_A(X)$ for any operator $X$.
We shall also need a local version of the superoperator $\calL$, see Eq.~(\ref{L0inv}).
It can be chosen as
\[
\calL_{A}(X)=\int_0^\infty dt \, e^{-tH_{0,A}} Q_{A} X P_{A}   -\int_0^\infty dt \,
P_{A} X Q_{A} e^{-tH_{0,A}},
\]
where $H_{0,A}=\sum_{u\in A} H_{0,u}$. Taking into account that $\| e^{-tH_{0,A}} Q_{A} \| \le e^{-\Delta t}$
 one can easily check that the integral over $t$ converges and
\be
\label{1/gap}
\|\calL_{A}(X) \| \le \frac{\|X\|}{\Delta}.
\ee
In addition, for any operator $X$ that acts non-trivially only on spins in $A$ one has an identity
\be
\label{L_Aid}
\calL_A([H_0,X])=[H_0,\calL_A(X)]=\calO_A(X).
\ee
Now we are ready to define the generators $T_j$. We begin by representing $V^{(j)}$ as a sum of local interactions,
\[
V^{(j)}=\sum_{A\subseteq \Lambda} V^{(j)}_A.
\]
Here $V^{(j)}_A$ acts non-trivially only on $A$ (for example, one can use the expansion of
$V^{(j)}$ in the generalized Pauli basis). Then we choose
\be
\label{DFFR-method}
T_j=\sum_{A\subseteq \Lambda} \calL_A (V^{(j-1)}_A) \quad \mbox{for all $1\le j\le n$}.
\ee
Using the identity Eq.~(\ref{L_Aid}) one can easily check that for this choice of $T_j$
\[
[T_j,H_0] + V^{(j-1)} = \sum_{A\subseteq \Lambda} \calD_A(V^{(j-1)}_A) \quad \mbox{for all $1\le j\le n$}.
\]
Accordingly, the transformed Hamiltonian can be written as
\be
e^T(H_0+\epsilon V)e^{-T} = H^{\la n\ra} + H_{\garb},
\ee
where
\be
H^{\la n\ra} = H_0+ \sum_{j=1}^n \epsilon^j \sum_{A\subseteq \Lambda} \calD_A(V^{(j-1)}_A)
\ee
is a sum of local block-diagonal interactions, while
\be
H_{\garb}= \sum_{j=n+1}^\infty \epsilon^j\, V^{(j-1)}
\ee
includes all unwanted terms generated at the order $n+1$ and higher.
Finally, we define the $n$-th order effective low-energy Hamiltonian as a restriction
of $H^{\la n\ra}$ onto the subspace $\calP_0$,
\be
H^{\la n\ra}_{\eff}=P_0 H^{\la n\ra} P_0.
\ee
It is obvious from this construction that $H^{\la n\ra}_{\eff}$ is additive for perturbations
describing two non-interacting disjoint systems, see Lemma~\ref{lemma:add}.
It follows that the Taylor coefficients of $H^{\la n\ra}_{\eff}$ obey the linked cluster property of
Theorem~\ref{thm:lct}.

For the later use let us give an explicit formula for the polynomial $V^{(j)}$.
We have $V^{(0)}=V$ and
\be
\label{V(j-1)T}
V^{(j-1)} = \sum_{q=2}^j \frac1{q!} \; \; \sum_{\substack{1\le j_1,\ldots j_q\le n \\ \\  j_1+ \ldots+ j_q=j\\ }} \hat{T}_{j_1} \cdots \hat{T}_{j_q} (H_0) +
 \sum_{q=1}^{j-1} \frac1{q!} \; \; \sum_{\substack{1\le j_1,\ldots j_q\le n \\  \\  j_1+ \ldots+ j_q=j-1\\}} \hat{T}_{j_1} \cdots \hat{T}_{j_q} (V),
\ee
for all $j\ge 2$.
Note that convergence of all above series in the limit $n\to \infty$ does not matter since we always
consider some fixed order $n$. However, it can be shown~\cite{DFFR} that the $n\to \infty$ limits of the series
for $T$ and $H^{\la n\ra}_{\eff}$ converge absolutely in the disk of radius $\epsilon_c\sim 1/N^2$.

Below  we shall prove that the norm of $H_{\garb}$ can be bounded by $O(N|\epsilon|^{n+1})$
for sufficiently small $\epsilon$. Then we shall employ Lemma~\ref{lemma:new} and its Corollary~\ref{corol:overall}
to show that
the ground state of $H^{\la n\ra}$ belongs to the subspace $\calP_0$ for sufficiently small $\epsilon$.
Combining the two results together we shall get the following theorem which can be regarded as
an analogue of Theorem~4.7 in~\cite{DFFR}.
\begin{theorem}
\label{thm:local}
Let $H^{\la n\ra}_{\eff}$ be the $n$-th order low-energy effective Hamiltonian constructed using
the local SW method.
There exists a constant threshold
\be
\label{local_threshold}
\epsilon_c=\Omega(\Delta n^{-2})
\ee
such that for all $|\epsilon|<\epsilon_c$ the ground state energy of $H^{\la n\ra}_{\eff}$
approximates the ground state energy of $H_0+\epsilon V$ with an error $\delta_n=O(N|\epsilon|^{n+1})$.
\end{theorem}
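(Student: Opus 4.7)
The plan is to decompose the error into a "truncation" piece coming from the unused orders in the expansion of $e^T(H_0+\epsilon V)e^{-T}$ and a "confinement" piece ensuring that the ground state of the truncated Hamiltonian actually lives in $\calP_0$. Since $e^T$ is unitary, the ground state energy of $H_0+\epsilon V$ equals the ground state energy of $H^{\la n\ra}+H_{\garb}$. By Weyl's eigenvalue perturbation inequality, this differs from the ground state energy of $H^{\la n\ra}$ by at most $\|H_{\garb}\|$. If we can further establish that some ground state of $H^{\la n\ra}$ lies in $\calP_0$, then the ground state energy of $H^{\la n\ra}$ equals that of $P_0H^{\la n\ra}P_0=H^{\la n\ra}_{\eff}$, closing the chain with total error $\|H_{\garb}\|$. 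So the entire proof reduces to (a) an extensive norm bound $\|H_{\garb}\|\le O(N|\epsilon|^{n+1})$ and (b) a ground-state-location statement for $H^{\la n\ra}$.

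For (a) I would set up an inductive control over the local pieces of the recursively defined operators $V^{(j)}$. Writing $V^{(j)}=\sum_A V^{(j)}_A$ in a fixed local (e.g.\ Pauli) basis, I will prove that the extensive norms $\||V^{(j)}|\|:=\max_{u}\sum_{A\ni u}\|V^{(j)}_A\|$ and the maximum local support $k(j)=\max\{|A|:V^{(j)}_A\ne 0\}$ satisfy recursions of the form $\||V^{(j)}|\|\le(C/\Delta)^{j}\cdot j!\cdot J$ and $k(j)\le 1+cj$. The bound on $\calL_A$ from Eq.~(\ref{1/gap}) contributes a factor $1/\Delta$ at each use, while commuting local terms of bounded range against the $2$-local $V$ or the $1$-local $H_0$ contributes only $O(\text{deg}(\Lambda))$ (coming from the combinatorial sum over overlapping local supports) and does not blow up $N$. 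The crucial point is that each $V^{(j)}_A$ is anchored to $A$, so summing over anchors gives a single factor of $N$. Combined with the explicit formula Eq.~(\ref{V(j-1)T}) and the factorial denominators there, this yields $\|V^{(j-1)}\|\le N\cdot j\cdot(C/\Delta)^{j-1}J^{j-1}\cdot\text{const}$, and hence $\|H_{\garb}\|\le N\sum_{j\ge n+1}|\epsilon|^j\|V^{(j-1)}\|$; the geometric tail of this series is $O(N|\epsilon|^{n+1})$ once $|\epsilon|(CJ/\Delta)$ is small.

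For (b), $H^{\la n\ra}-H_0=\sum_{j=1}^n\epsilon^j\sum_{A}\calD_A(V^{(j-1)}_A)$ is a sum of local interactions each commuting with $P_0$ (since $\calD_A$ commutes with $P_A$ and $P_0$ factorizes across $A$ and its complement), so Corollary~\ref{corol:overall} applies with $k(a)=k(a-1)=O(a)$ and $J(a)\le|\epsilon|^a\cdot(C/\Delta)^{a-1}J^a\cdot a$. The required condition $\sum_{a=1}^n 2^{k(a)+2}J(a)<\Delta$ becomes a geometric-type series $\sum_{a=1}^n(2^{c}C J|\epsilon|/\Delta)^a\cdot a\cdot\text{const}<1$ after the $\Delta$ is divided out; this holds provided $|\epsilon|\le\Omega(\Delta n^{-2})$, where one factor of $n^{-1}$ comes from the $a$-factor inside the sum and the other from the need for the largest, $a=n$ term to be below $\Delta/n$. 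Invoking Corollary~\ref{corol:overall} then places a ground state of $H^{\la n\ra}$ inside $\calP_0$. Combining with (a) yields $\delta_n\le\|H_{\garb}\|=O(N|\epsilon|^{n+1})$.

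The main obstacle will be the second half of (a): rigorously tracking the combinatorial explosion of Eq.~(\ref{V(j-1)T}) with nested anchored sums, while keeping the $N$-dependence linear rather than exponential. The factorial $1/q!$ in that formula is essential to beat the number of ordered sequences $(j_1,\ldots,j_q)$; the anchoring argument (each commutator $[\hat T_{j_i},\cdot]$ produces a term supported on the union of two local regions, of which one can be taken as the "anchor") is essential to prevent $\sum_A$ from producing a factor $N^q$. Once this bookkeeping is done cleanly, the remaining analytic steps are routine applications of Weyl's inequality and Corollary~\ref{corol:overall}.
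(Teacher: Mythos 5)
Your overall skeleton coincides with the paper's: split the error into the truncation piece $\|H_{\garb}\|$ and a ground-state-location statement for $H^{\la n\ra}$ handled by Corollary~\ref{corol:overall}, then assemble using the unitary equivalence of $H^{\la n\ra}+H_{\garb}$ with $H_0+\epsilon V$. That much is sound. The genuine gap is in your part (a), in the norm bound you propose to prove by induction: you claim the recursion yields $\|V^{(j)}\|_1\le (C/\Delta)^j\, j!\, J$, i.e.\ \emph{factorial} growth in $j$. If that were the available bound, the argument would collapse: $H_{\garb}=\sum_{j=n+1}^{\infty}\epsilon^j V^{(j-1)}$ is an infinite series (the generators $T_1,\dots,T_n$ are finitely many, but $e^{T}(H_0+\epsilon V)e^{-T}$ produces terms of all orders in $\epsilon$), and with factorial growth the tail $\sum_{j>n}|\epsilon|^j\|V^{(j-1)}\|_1$ diverges for every fixed $\epsilon\neq 0$, so no bound $\|H_{\garb}\|=O(N|\epsilon|^{n+1})$ follows. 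Two sentences later you instead assert a geometric bound $\|V^{(j-1)}\|\le N\, j\,(C/\Delta)^{j-1}J^{j-1}\cdot\mathrm{const}$ and a ``geometric tail,'' but this contradicts your own claimed recursion, and it is precisely the statement that needs proof; the entire difficulty of the theorem is concentrated there. (Your part (b) threshold analysis inherits the same unproved bounds, though its logic — applying Corollary~\ref{corol:overall} with $k(a)=O(a)$ and geometrically decaying $J(a)$ — matches the paper.)

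What the paper actually proves (Lemma~\ref{lemma:v}) is geometric growth with an $n$-dependent ratio, $\|V^{(j)}\|_1\le\alpha\bigl(n^2/(\Delta\beta)\bigr)^{j}$, and obtaining it requires exactly the bookkeeping you postpone. Each commutator $\hat{T}_{j_i}$ applied to an operator of support size $r$ costs a factor $2(r+n+1)/\Delta$ (Proposition~\ref{prop:induction}); since supports grow linearly under nesting, a string of $q$ commutators costs $O(n/\Delta)^q\, q^q$, which is superexponential in $q$. This $q^q$ is beaten by the $1/q!$ in Eq.~(\ref{V(j-1)T}) via Stirling ($q^q/q!=O(1)^q$) — note that the $1/q!$ is \emph{not} needed to beat the number of ordered sequences $(j_1,\dots,j_q)$, which is only $\le 2^{j}$, so your accounting of its role is off. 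The paper then closes the resulting recursion Eq.~(\ref{mu_inductive}) by a generating-function argument: the majorizing series $\mu(z)$ satisfies an algebraic equation whose solution is analytic in a disk of radius $\Omega(\Delta n^{-2})$, and Cauchy estimates yield the geometric bound Eq.~(\ref{Gscaling}). Unless you replace your $j!$ bound by this (or an equivalent) argument, part (a) does not go through, and with it the theorem.
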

Given any perturbation $V$ represented as a sum of local interactions, $V=\sum_{A\subseteq \Lambda} V_A$
we shall use the notation $\|V\|_1$ for the strength of $V$, that is,
\[
\| V\|_1 = \max_{u\in\Lambda} \; \sum_{A\ni u} \| V_A\|.
\]
All derivations below will implicitly use a bound
\be
\label{trivial_bound}
\sum_{A\, : \, A\cap M\ne \emptyset} \|V_A\| \le |M|\cdot \| V\|_1
\ee
that holds for any operator $V=\sum_{A\subseteq \Lambda} V_A$ and for any subset of spins $M\subseteq \Lambda$.
In particular, choosing $M=\Lambda$ we get a bound
\be
\label{even_more_trivial_bound}
\|V\|\le \sum_{A\subseteq \Lambda} \|V_A\| \le N \cdot \|V\|_1.
\ee
Applying Eq.~(\ref{even_more_trivial_bound}) we can bound the norm of $H_{\garb}$ as
\be
\label{tr_error}
\|H_{\garb}\| =\| e^T (H_0+\epsilon V)\, e^{-T} - H^{\la n\ra} \| \le N \sum_{j=n+1}^\infty |\epsilon|^j\cdot \|V^{(j-1)}\|_1.
\ee
Thus our task is to get an upper bound on the norm $\|V^{(j)}\|_1$.
To simplify notations we shall assume that the original perturbation $V$ has strength $\|V\|_1=1$.
\begin{lemma}
The operators $T_j$ and $V^{(j-1)}$ are $(j+1)$-local Hamiltonians.
There exists constants $\alpha,\beta>0$ such that
\label{lemma:v}
\be
\label{Gscaling}
\|V^{(j)}\|_1 \le  \alpha \left( \frac{n^2}{\Delta\cdot  \beta}\right)^{j}
\ee
for all $j\ge 0$.
\end{lemma}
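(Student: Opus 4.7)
The strategy is a joint induction on $j$ establishing both claims in parallel. Let $t_m = \|T_m\|_1$ and $v_m = \|V^{(m)}\|_1$. From Eq.~(4.9) applied termwise to $T_m = \sum_A \calL_A(V^{(m-1)}_A)$, one obtains $t_m \le v_{m-1}/\Delta$, which reduces everything to bounding $v_m$. The base case $j=1$ is immediate: $V^{(0)}=V$ is $2$-local with $\|V\|_1=1$, so $T_1$ is $2$-local with $\|T_1\|_1 \le 1/\Delta$, matching the claim with $\alpha\ge 1$.

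For the locality statement I would invoke the elementary fact that if $A=\sum_X A_X$ and $B=\sum_Y B_Y$ are local Hamiltonians of ranges $k_A$ and $k_B$, then $[A,B] = \sum_{X\cap Y \neq \emptyset}[A_X,B_Y]$ is supported on subsets of size at most $k_A+k_B-1$. Iterating this along a $q$-fold nested commutator $\hat T_{j_1}\cdots\hat T_{j_q}(H_0)$, with $T_{j_i}$ of range $j_i+1$ by the inductive hypothesis, gives range $1+\sum_i j_i$, equal to $j+1$ for the first sum in~(4.13) (where $\sum j_i = j$) and to $2 + (j-1) = j+1$ for the second. Hence $V^{(j-1)}$ is $(j+1)$-local, and so is $T_j$.

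For the norm estimate the key ingredient is a commutator strength inequality $\|[A,B]\|_1 \le 2(k_A+k_B)\|A\|_1\|B\|_1$, proved by writing $[A,B]_C = \sum_{X\cup Y = C,\, X\cap Y\ne\emptyset}[A_X,B_Y]$, bounding $\sum_{C\ni u}\|[A,B]_C\|$ by splitting according to whether $u\in X$ or $u\in Y\setminus X$, and applying~(4.14) to the inner sum over the second support. Iterating along a $q$-fold nested commutator yields
\[
\|\hat T_{j_1}\cdots\hat T_{j_q}(H_0)\|_1 \;\le\; \|H_0\|_1 \prod_{i=1}^{q} 2(2+s_i)\, \prod_{i=1}^q t_{j_i},
\]
where $s_i = \sum_{m\le i} j_{q-m+1}$ is the running partial sum. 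Plugging in $\|H_0\|_1 = O(\Delta)$, the inductive bound $t_m \le \alpha\lambda^{m-1}/\Delta$ with $\lambda = n^2/(\Delta\beta)$, using $\binom{j-1}{q-1}$ as the count of compositions of $j$ into $q$ positive parts, and accounting for the $1/q!$ prefactor in~(4.13) produces, for the first sum, a bound of the form $O(\Delta)\lambda^{j-1}\cdot\lambda\cdot\sum_q \frac{1}{q!}\binom{j-1}{q-1}(\text{per-step factor})^q$, with an analogous estimate for the second, $V$-dependent, sum. Choosing $\beta$ small enough relative to $\alpha$ forces this series below $\alpha\lambda^{j-1}$, closing the induction.

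The principal technical obstacle is ensuring that this combinatorial sum stays bounded uniformly in $j$; the naive estimate $\prod(2+s_i)\le(j+2)^q$ is too weak, since combined with $\binom{j-1}{q-1}$ it produces a $j$-dependent amplification that defeats the geometric bound. The resolution is to exploit the constraint $j_i\le n$ imposed by~(4.11), which forces $q\ge\lceil j/n\rceil$ and sharpens the partial sums to $s_i\le\min(in,j)$; this yields the tighter estimate $\prod(2+s_i)\le O(n)^q\,q!$, whose $q!$ is absorbed by the $1/q!$ prefactor and leaves a geometric series of ratio $O(\alpha\beta/n)$, controllable by small $\beta$. The two powers of $n$ in $\lambda = n^2/(\Delta\beta)$ account respectively for the per-commutator locality overhead $(n+1)$ and for the combinatorial weight from summing over bounded-part compositions, which is exactly the scaling needed for the induction to propagate to all $j\ge 0$.
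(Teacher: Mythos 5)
Your locality argument and your commutator-strength estimates are sound and essentially coincide with the paper's own (Propositions~\ref{prop:locality} and~\ref{prop:induction}): the paper likewise bounds $\|\hat{T}_{j_1}\cdots\hat{T}_{j_q}(H_0)\|_1$ by $O(n\Delta^{-1})^q\, q^q\,\Delta\,\prod_i\|V^{(j_i-1)}\|_1$, which after division by the $1/q!$ prefactor is the same per-commutator cost $c^q$, $c=O(n/\Delta)$, that you obtain. The genuine gap is in your last step, where you claim the induction with the pure geometric ansatz $\|V^{(j)}\|_1\le\alpha\lambda^j$, $\lambda=n^2/(\Delta\beta)$, closes for a constant $\beta$. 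It does not, and the obstruction is not the locality amplification $\prod_i(2+s_i)$ that you repair using $j_i\le n$, but the convolution multiplicity --- the number of compositions --- which that repair leaves untouched. Concretely, take $j=n+1$ and the $q=2$ part of the first sum in Eq.~(\ref{V(j-1)T}): all $n$ compositions $(j_1,j_2)$ of $n+1$ automatically have parts $\le n$, so the bounded-part constraint removes nothing. Your own estimates give for this single term
\be
\frac1{2!}\sum_{j_1+j_2=n+1} \|\hat{T}_{j_1}\hat{T}_{j_2}(H_0)\|_1 \;\le\; O(\Delta)\cdot O(n^2)\cdot n\cdot \frac{\alpha^2\lambda^{n-1}}{\Delta^2}
\;=\; O\!\left(\frac{n^3\alpha^2}{\Delta}\right)\lambda^{n-1},
\ee
and requiring this to stay below the target $\alpha\lambda^{n}$ forces $n\alpha\beta=O(1)$, i.e.\ $\beta=O(1/n)$. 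Hence your induction only proves $\|V^{(j)}\|_1\le\alpha\left(n^3/(\Delta\beta')\right)^{j}$, which is strictly weaker than Eq.~(\ref{Gscaling}) and would degrade the threshold of Theorem~\ref{thm:local} to $\Omega(\Delta n^{-3})$.

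The failure is structural: convolving geometric sequences produces an extra factor of order $j$ (realized here as the $\approx n$ compositions at $q=2$), so a pure geometric ansatz can never be self-reproducing in a recursion of the form Eq.~(\ref{mu_inductive}); your heuristic that the $q!$ from $\prod_i(2+s_i)$ cancels the $1/q!$ and ``leaves a geometric series'' silently drops the composition count. The paper avoids induction altogether: it majorizes $\|V^{(j-1)}\|_1$ by the exact solution $\mu_j$ of the recursion (even discarding the constraint $j_i\le n$, which it never needs), converts the recursion into a functional equation for $\mu(z)=\sum_j\mu_j z^j$, inverts it in closed form as $z(\mu)=\mu-\mu^2(c+\Delta c^2)$, reads off the radius of analyticity $z_0=\frac1{4(c+\Delta c^2)}=\Omega(\Delta/n^2)$, and applies the Cauchy estimate; this yields $n$-independent constants $\alpha,\beta$ in one stroke. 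If you insist on a direct induction, you must strengthen the ansatz by a polynomial correction, e.g.\ $\|V^{(j)}\|_1\le\alpha\lambda^j/(j+1)^2$: since $\sum_{j_1+j_2=j} j_1^{-2}j_2^{-2}=O(j^{-2})$ with an absolute constant, such an ansatz is stable under convolution, and the induction then closes with constant $\beta$ and implies the stated bound.
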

Let us first explain how Theorem~\ref{thm:local} follows from Lemma~\ref{lemma:v}.
Define $\rho_c=\Delta \beta n^{-2}$. Assume for simplicity that $\epsilon>0$.
Substituting Eq.~(\ref{Gscaling})   into Eq.~(\ref{tr_error}) one gets
\be
\label{deltap}
\| H_{\garb}\| \le \alpha N \sum_{j=n+1}^\infty \epsilon^j \rho_c^{1-j} \le 2\alpha \beta \Delta N n^{-2} (\epsilon/\rho_c)^{n+1}
\ee
provided that $\epsilon<\rho_c/2$.
Using Corollary~\ref{corol:overall} of Lemma~\ref{lemma:new}
and the fact that $V^{(j-1)}$ is $(j+1)$-local Hamiltonian
we conclude that
the ground state of $H^{\la n\ra}$ belongs to the subspace $\calP_0$ whenever
\[
\sum_{j=1}^n 2^{3+j} \cdot |\epsilon|^j \cdot \|V^{(j-1)}\|_1 <\Delta.
\]
Using  the bound Eq.~(\ref{Gscaling})  we get
\[
\sum_{j=1}^n 2^{3+j} \cdot |\epsilon|^j \cdot \|V^{(j-1)}\|_1\le 16 \alpha \epsilon \sum_{j=0}^\infty (2\epsilon /\rho_c)^j
\le 32 \alpha \epsilon <\Delta
\]
provided that $\epsilon<\rho_c/4$ and $32\alpha \epsilon<\Delta$.
If we choose
\[
\epsilon_c=\min{\left[ \frac{\Delta \beta}{4n^2}, \frac{\Delta}{32 \alpha}\right]}
\]
then for all $\epsilon<\epsilon_c$ the ground state of $H^{\la n\ra}$ belongs to the subspace
$\calP_0$ and $\|H_{\garb}\|=O(N \epsilon^{n+1})$.
Theorem~\ref{thm:local} now follows from the fact that $H^{\la n\ra} + H_{\garb}$ has the same
spectrum as  $H_0+V$.
In the rest of the section we prove Lemma~\ref{lemma:v}.
\begin{proof}[\bf Proof of Lemma~\ref{lemma:v}]
Let us first analyze locality of  $T_j$ and $V^{(j-1)}$.
\begin{prop}
\label{prop:locality}
$T_j$ is a $(j+1)$-local anti-hermitian Hamiltonian.
$V^{(j-1)}$ is a $(j+1)$-local Hamiltonian.
\end{prop}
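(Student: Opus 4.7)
The plan is to prove both claims simultaneously by strong induction on $j\ge 1$, with the combined inductive hypothesis: $T_j$ is a $(j+1)$-local anti-hermitian operator and $V^{(j-1)}$ is a $(j+1)$-local hermitian operator. The base case $j=1$ is immediate: $V^{(0)}=V$ is $2$-local and hermitian by assumption, and $T_1=\sum_A \calL_A(V_A)$ is manifestly $2$-local because each $\calL_A(V_A)$ is supported on the same pair of spins as $V_A$; anti-hermiticity of $T_1$ follows from the identity $\calL_A(X)^\dag = -\calL_A(X)$ for hermitian $X$, which one reads off directly from the defining integral representation of $\calL_A$.

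For the inductive step, assume the claim for all indices $<j$ and inspect the explicit expression Eq.~(\ref{V(j-1)T}). The key combinatorial observation is the following locality bound for nested commutators: if $O_1,\ldots,O_q$ are local operators with $O_i$ acting on at most $k_i$ spins, and $X$ acts on at most $k_X$ spins, then any non-zero local term appearing in $\hat O_{q}\cdots \hat O_1(X)$ is supported on a set of size at most $k_X + \sum_i k_i - q$, because successive commutators must overlap on at least one spin to be non-zero. Applying this with the inductive hypothesis $k_i = j_i+1$, each term $\hat T_{j_1}\cdots \hat T_{j_q}(H_0)$ in the first sum of Eq.~(\ref{V(j-1)T}) is at most $1+\sum_i(j_i+1)-q = 1+j$ local (using $\sum j_i=j$), and each term $\hat T_{j_1}\cdots \hat T_{j_q}(V)$ in the second sum is at most $2+\sum_i(j_i+1)-q = 2+(j-1) = j+1$ local (using $\sum j_i = j-1$). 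Hence $V^{(j-1)}$ is $(j+1)$-local; hermiticity is inherited because $\hat T_i$ maps hermitian operators to hermitian operators (as $T_i$ is anti-hermitian by induction), and both $H_0$ and $V$ are hermitian.

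Having established that $V^{(j-1)}$ is a $(j+1)$-local hermitian operator, decompose it as $V^{(j-1)}=\sum_A V^{(j-1)}_A$ in a basis of hermitian local terms (e.g., generalised Paulis) so that each $V^{(j-1)}_A$ is hermitian and supported on $A$ with $|A|\le j+1$. Then $T_j=\sum_A \calL_A(V^{(j-1)}_A)$ is automatically $(j+1)$-local, since $\calL_A$ preserves the support $A$ of its argument, and anti-hermitian by the same $\calL_A$-property invoked in the base case. This closes the induction.

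The only non-routine ingredient is the nested-commutator locality bound, and the subtle point there is the $-q$ saving: one must use that the commutator of two local terms vanishes unless they share a spin, so each commutation step adds at most $\max_i k_i - 1$ rather than $\max_i k_i$ new spins to the support. Everything else is bookkeeping: matching the counts in the two sums of Eq.~(\ref{V(j-1)T}) against $j_1+\cdots+j_q=j$ and $j-1$ respectively, and checking that the indices $j_i$ entering those sums are strictly smaller than $j$ (which follows from $q\ge 2$ in the first sum and $q\ge 1$, $\sum j_i=j-1$ in the second), so that the inductive hypothesis applies.
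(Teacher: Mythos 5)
Your proof is correct and follows essentially the same route as the paper's: strong induction on $j$, using the fact that the commutator of an $a$-local and a $b$-local operator is at most $(a+b-1)$-local (your ``$-q$ saving''), applied to the two types of terms in Eq.~(\ref{V(j-1)T}) with the counts $\sum_i j_i = j$ and $\sum_i j_i = j-1$. The only difference is that you spell out the (anti-)hermiticity bookkeeping via $\calL_A(X)^\dag = -\calL_A(X)$ and the explicit check that the indices $j_i$ are strictly smaller than $j$, points the paper leaves implicit.
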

\begin{proof}
We shall use the fact that for any $a$-local Hamiltonian $X$ and $b$-local Hamiltonian $Y$
the commutator $[X,Y]$ is at most $(a+b-1)$-local.
By assumption, $V^{(0)}=V$ is a $2$-local Hamiltonian.
Using Eq.~(\ref{DFFR-method}) we conclude that $T_1$ is a $2$-local Hamiltonian (anti-hermitian).
Let us use induction to  prove that $T_j$ and $V^{(j-1)}$ are $(j+1)$-local.
Indeed, from Eq.~(\ref{V(j-1)T}) one can see that  $V^{(j-1)}$ is a sum of terms proportional to
$\hat{T}_{j_1} \cdots \hat{T}_{j_q}(H_0)$ with $j_1+\ldots+j_q=j$, $q\ge 2$, and terms proportional to
$\hat{T}_{j_1} \cdots \hat{T}_{j_q}(V)$ with  $j_1+\ldots+j_q=j-1$, $q\ge 1$. The terms of the first type
are at most $k'$-local with $k'=j_1+\ldots+j_q+1=j+1$.
The terms of the second type are
at most $k''$-local with $k''=2+j_1+\ldots+j_q=j+1$.
Thus $V^{(j-1)}$ is $j+1$-local. Using Eq.~(\ref{DFFR-method})
we conclude that $T_j$ is $j+1$-local.
\end{proof}
Recall that the truncated Hamiltonian $H^{\la n\ra}$ is computed using a {\em finite} number
of generators $T_1,\ldots,T_n$. Thus {\em any} generator $T_j$ is at most
$(n+1)$-local.

\begin{prop}
\label{prop:induction}
Let $W$ be an  $r$-local Hamiltonian.  Then for any $j\le n$ one has
\be
\label{observation1}
\| \hat{T}_j (W)\|_1 \le \frac{2(r+n+1)}{\Delta}\,   \| V^{(j-1)}\|_1\cdot \|W\|_1.
\ee
\end{prop}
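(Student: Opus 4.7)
The plan is to expand both $T_j$ and $W$ as sums of local pieces, estimate the norm of each commutator term, and then control the strength norm by fixing a site $u$ and summing contributions that act nontrivially on $u$.

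First I would write $T_j = \sum_{A\subseteq \Lambda} \calL_A(V^{(j-1)}_A)$ from the definition in Eq.~(\ref{DFFR-method}), and $W=\sum_{B\subseteq \Lambda} W_B$ from the $r$-locality of $W$. By Proposition~\ref{prop:locality}, $V^{(j-1)}$ is $(j+1)$-local, so we may take $V^{(j-1)}_A=0$ unless $|A|\le j+1$; similarly $W_B=0$ unless $|B|\le r$. The operator $\calL_A(V^{(j-1)}_A)$ acts only on $A$, so the commutator
\[
\hat{T}_j(W) = \sum_{A,B\,:\, A\cap B\ne \emptyset} \bigl[ \calL_A(V^{(j-1)}_A),\, W_B\bigr]
\]
consists of terms supported on $A\cup B$, and each such term has norm at most $2\|\calL_A(V^{(j-1)}_A)\|\cdot \|W_B\|\le (2/\Delta)\|V^{(j-1)}_A\|\cdot \|W_B\|$ thanks to the bound Eq.~(\ref{1/gap}).

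Next, to compute $\|\hat{T}_j(W)\|_1$ I would fix an arbitrary site $u\in\Lambda$ and sum over all pairs $(A,B)$ with $A\cap B\ne\emptyset$ and $u\in A\cup B$. The natural split is into two cases: $u\in A$, and $u\in B\setminus A$. In the first case, the outer sum over $A\ni u$ gives $\sum_{A\ni u}\|V^{(j-1)}_A\|\le \|V^{(j-1)}\|_1$, while the inner sum over $B$ with $A\cap B\ne\emptyset$ is bounded via Eq.~(\ref{trivial_bound}) by $|A|\cdot\|W\|_1\le (j+1)\|W\|_1$. In the second case, the outer sum over $B\ni u$ gives $\sum_{B\ni u}\|W_B\|\le\|W\|_1$, and the inner sum over $A$ with $A\cap B\ne\emptyset$ is bounded by $|B|\cdot\|V^{(j-1)}\|_1\le r\|V^{(j-1)}\|_1$. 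Adding the two contributions and using $j\le n$ produces the factor $(r+j+1)\le (r+n+1)$, and including the $2/\Delta$ from the commutator bound yields exactly Eq.~(\ref{observation1}).

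The only subtle point is the bookkeeping: one must avoid double counting when $u\in A\cap B$, which is handled cleanly by splitting ``$u\in A$'' versus ``$u\in B$ and $u\notin A$''. Once the cases are set up, everything reduces to the triangle-inequality-style bound Eq.~(\ref{trivial_bound}) applied to each of $V^{(j-1)}$ and $W$ separately, so no serious estimate is needed beyond that. I do not expect any genuine obstacle; the statement is essentially a ``local commutator estimate'' whose proof is routine once one commits to the per-site accounting used to define $\|\cdot\|_1$.
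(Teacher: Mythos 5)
Your proof is correct and follows essentially the same route as the paper: expand $T_j$ and $W$ into local pieces, use the bound $\|\calL_A(X)\|\le \|X\|/\Delta$ together with $\|[X,Y]\|\le 2\|X\|\,\|Y\|$, and then do the per-site accounting with the two-case split ($u$ in the support of the $T_j$ piece versus $u$ in the support of the $W$ piece) and the bound Eq.~(\ref{trivial_bound}). The only cosmetic difference is that you make the two cases disjoint to avoid double counting, whereas the paper simply over-counts the overlap, which is harmless since all summands are nonnegative.
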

\begin{proof}
Indeed, expand $T_j$ and $W$ as a sum of local operators,
\[
T_j=\sum_{M\subseteq \Lambda} T_{j,M}, \quad W=\sum_{A\subseteq \Lambda} W_A,
\]
where $|M|\le j+1\le n+1$, see Proposition~\ref{prop:locality}, and $|A|\le r$ since $W$ is $r$-local.
Here $T_j$ is related to $V^{(j-1)}$ by Eq.~(\ref{DFFR-method}), that is,
\[
V^{(j-1)}=\sum_{M\subseteq \Lambda} V^{(j-1)}_M, \quad T_{j,M}=\calL_M (V^{(j-1)}_M),
\quad \|T_{j,M}\| \le \frac1{\Delta}\, \|V^{(j-1)}_M\|.
\]
Let $u\in \Lambda$ be the spin that maximizes the norm $\| \hat{T}_j (W)\|_1$. Then
\bea
\| \hat{T}_j (W)\|_1 &\le & 2\sum_{M\ni u} \; \sum_{A\, : \, A\cap M\ne \emptyset} \| T_{j,M}\| \cdot \| W_A\| +
2\sum_{A\ni u} \; \sum_{M\, : \, M\cap A\ne \emptyset} \| T_{j,M}\| \cdot \| W_A\| \nn \\
&\le & \frac2{\Delta} \sum_{M\ni u} \; \sum_{A\, : \, A\cap M\ne \emptyset} \| V^{(j-1)}_{M}\| \cdot \| W_A\| +
\frac2{\Delta}  \sum_{A\ni u} \; \sum_{M\, : \, M\cap A\ne \emptyset} \| V^{(j-1)}_M\| \cdot \| W_A\| \nn \\
&\le &  \frac2{\Delta}  \|W\|_1\, \sum_{M\ni u} |M| \cdot \| V^{(j-1)}_{M}\| +\frac2{\Delta} \|V^{(j-1)}\|_1
\sum_{A\ni u} |A|\cdot \| W_A\| \nn \\
&\le &  \frac{2(r+n+1)}{\Delta}\,   \| V^{(j-1)}\|_1\cdot \|W\|_1.\nn
\eea
\end{proof}
Now we can apply Proposition~\ref{prop:induction} to bound the norm of any term in Eq.~(\ref{V(j-1)T}).
Note that $V^{(j-1)}$ is composed of two types of terms. We have terms proportional to
$A_{\bf j}=\hat{T}_{j_1} \cdots \hat{T}_{j_q}(H_0)$ with $1\le j_1,\ldots,j_q\le n$, $j_1+\ldots+j_q=j$, and terms proportional to
$B_{\bf j} =\hat{T}_{j_1} \cdots \hat{T}_{j_q}(V)$ with $1\le j_1,\ldots,j_q\le n$, $j_1+\ldots+j_q=j-1$.
Note that applying $q$ commutators $\hat{T}_{j_1}\cdots \hat{T}_{j_q}$ to $H_0$ or $V$ we can produce
at most $O(qn)$-local operator.
Applying Proposition~\ref{prop:induction} we get
\bea
\label{ABnorms}
\|A_{\bf j}\|_1 &\le &O(n\Delta^{-1})^q \, q^q \cdot \Delta \cdot \|V^{(j_1-1)}\|_1 \cdots \|V^{(j_q-1)}\|_1, \nn \\
\|B_{\bf j}\|_1 & \le &  O(n\Delta^{-1})^q \, q^q \cdot \|V^{(j_1-1)}\|_1 \cdots \|V^{(j_q-1)}\|_1
\eea
where we have  taken into account that $\|V\|_1=1$ and $\|H_0\|_1=O(\Delta)$.
Let us introduce an auxiliary quantity
\be
\label{chi}
\chi_j=\|V^{(j-1)}\|_1, \quad j\ge 1.
\ee
Using Eqs.~(\ref{V(j-1)T},\ref{ABnorms})
we can get an upper bound $\chi_j\le \mu_j$, where
\[
\mu_1=\chi_1= \|V\|_1=1
\]
 and
$\mu_j$, $j\ge 2$, are defined inductively from
\be
\label{mu_inductive}
\mu_j = \Delta \sum_{q=2}^j c^q
\sum_{\scriptscriptstyle j_1+ \ldots+ j_q=j}
\mu_{j_1} \cdots \mu_{j_q} +
\sum_{q=1}^{j-1} c^q
\sum_{ \scriptscriptstyle  j_1+ \ldots+ j_q=j-1}
\mu_{j_1} \cdots \mu_{j_q},
\ee
where $c\equiv O(n\Delta^{-1})$.
Here we have taken into account that $q^q/q!=O(1)^q$ and omitted the constraint
$j_1,\ldots,j_q\le n$ since we are interested in an upper bound on $\chi_j$ (note that adding
this constraint can only make $\mu_j$ to grow more slowly with $j$).
In order to get an upper bound on $\mu_j$
define an auxiliary Taylor series
\be
\label{mu_series}
\mu(z)=\sum_{j=1}^\infty \mu_j\, z^j.
\ee
Inspecting the recursive
equation Eq.~(\ref{mu_inductive}) one can easily convince oneself that
$\mu(z)$ has to obey the following equation
\[
\mu = \Delta \left( \frac1{1-c\mu} -1 -c\mu\right)  + z\left( \frac1{1-c\mu}-1\right) + z.
\]
Solving this equation one can easily find the inverse function:
\[
z(\mu)=\mu -\mu^2(c+\Delta c^2).
\]
Clearly $z(\mu)$ is analytic in the entire complex plane.
Solving the quadratic equation and choosing the branch for which $\mu(0)=0$ one gets
\[
\mu(z)=\frac1{2(c+\Delta c^2)} \, \left(1-\sqrt{1-4z(c+\Delta c^2)}\right).
\]
Clearly $\mu(z)$ is analytic in a disk $|z|<z_0$ where
\[
z_0=\frac1{4(c+\Delta c^2)} \ge  \frac{\beta\Delta}{n^2}
\]
for some constant $\beta>0$.
Here we have taken into account that $c=O(n\Delta^{-1})$.
Using the Cauchy integral formula for $\mu(z)$ one gets
\[
\mu_j =\frac1{2\pi i} \oint_{|z|=z_0/2} \frac{\mu(z)}{z^{j+1}} \, dz
\]
which yields
\[
\mu_j\le \alpha \left( \frac{\beta \Delta}{n^2}\right)^{1-j}
\]
for some constant $\alpha>0$.
The lemma is proved.
\end{proof}

\subsection{Equivalence between the local and global SW theories}
\label{subs:equiv}

In this section we prove Theorem~\ref{thm:global}.
The key idea of the proof is to construct a unitary operator $e^K$ acting on the low-energy subspace
$\calP_0$ that transforms the $n$-th order effective low-energy Hamiltonian
obtained using the local SW method to the one obtained using the global SW method
with an error  $O(N|\epsilon|^{n+1})$ in the operator norm.
We illustrate the relationship between the two low-energy theories on Fig.~\ref{fig:iso}.
Once the desired unitary transformation $e^K$ is constructed, Theorem~\ref{thm:global}
follows directly from Theorem~\ref{thm:local}.
{
\unitlength=1mm
\begin{figure}
\centerline{
\begin{picture}(90,40)(0,0)
\put(48,41){\oval(30,10)}
\put(40,40){$H_0+\epsilon V$}
\put(3,11){\oval(50,10)}
\put(-15,10){$P_0 \, e^S(H_0+\epsilon V)e^{-S}P_0$}
\put(95,11){\oval(50,10)}
\put(77,10){$P_0 \, e^T(H_0+\epsilon V)e^{-T} P_0$}
\put(38,36){\vector(-1,-1){20}}
\put(58,36){\vector(1,-1){20}}
\put(70,11){\vector(-1,0){42}}
\put(46,13){$e^K$}
\put(30,23){$e^S$}
\put(60,23){$e^T$}
\put(5,25){\mbox{Global SW}}
\put(71,25){\mbox{Local SW}}
\end{picture}}
\caption{Correspondence between the high-energy theory $H_0+\epsilon V$
and the effective low-energy theories obtained using the global and the local
Schrieffer-Wolff transformations. The unitary operator $e^K$ defined on the low-energy
subspace $\calP_0$ maps the two effective low-energy Hamiltonians to each other.}
\label{fig:iso}
\end{figure}
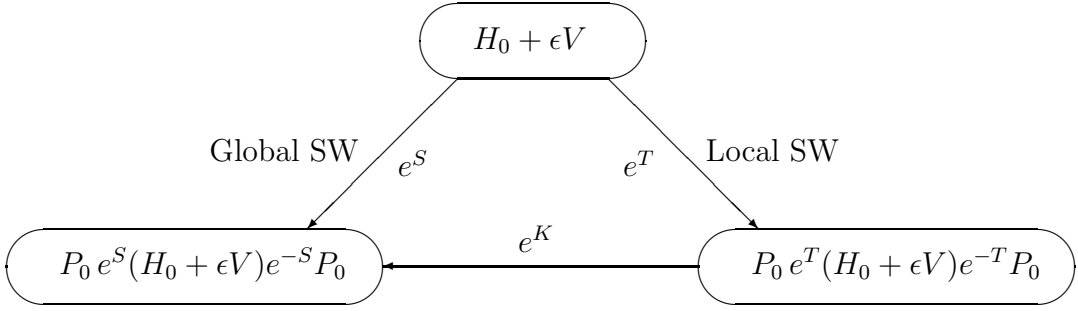
}
To introduce the notations let us first consider the limit $n\to \infty$.
Let
\be
\label{GSseries}
S=\sum_{j=1}^\infty S_j \, \epsilon^j \quad \mbox{and} \quad
T=\sum_{j=1}^\infty T_j \, \epsilon^j
\ee
be generators of the SW transformation calculated using the global rule ($S$) and
the local rule ($T$). Here we regard $S$ and $T$ as formal Taylor series.
Let
\[
H^{\gl}=e^S (H_0+\epsilon V) e^{-S} \quad \mbox{and} \quad
H^{\loc}=e^T (H_0+\epsilon V) e^{-T}
\]
be the corresponding  transformed Hamiltonians which we also consider as formal
Taylor series. Let $H^{\loc, \la n\ra }$ and $H^{\gl,\la n\ra }$
be the Hamiltonians obtained from $H^{\loc}$ and $H^{\gl}$
respectively by truncating the series at the order $n$.
\begin{theorem}
\label{thm:iso}
Let
\[
L=P_0\, H^{\loc,\la n\ra } P_0 \quad \mbox{and} \quad  M=P_0\, H^{\gl,\la n\ra } P_0
\]
 be the effective order-$n$
Hamiltonians on the low-energy subspace obtained by the local and the global SW methods.
There exists a constant $\beta$ depending only on $n$, $\Delta$, and the maximum degree of the interaction  graph
such that
\be
\| M- e^K  L e^{-K} \| \le \beta N |\epsilon|^{n+1} \quad \mbox{for all $|\epsilon|\le 1$}
\ee
for some unitary operator $e^{K}$ acting on the subspace $\calP_0$.
\end{theorem}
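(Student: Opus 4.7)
The core idea is to identify the intertwiner between the two transformed Hamiltonians as a block-diagonal unitary, whose $\calP_0$-block will serve as $e^K$. Treating $S$ and $T$ as formal Taylor series defined to all orders, set $H^{\gl} := e^S H e^{-S}$ and $H^{\loc} := e^T H e^{-T}$. Both are block-diagonal with respect to $P_0/Q_0$ as formal power series: for $H^{\gl}$ this is Lemma~\ref{lemma:S}, while for $H^{\loc}$ each order-$j$ correction $\sum_A \calD_A(V^{(j-1)}_A)$ preserves $\calP_0$ because $\calP_0 \subseteq \mathrm{range}(P_A)$ and $\calD_A(X)$ commutes with $P_A$ for any $X$.

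The first step is to show that the unitary $\tilde W := e^S e^{-T}$, which satisfies the intertwining relation $\tilde W H^{\loc} = H^{\gl} \tilde W$, is itself block-diagonal order by order. Decomposing $\tilde W = \tilde W_d + \tilde W_o$ into block-diagonal and block-off-diagonal parts, the off-diagonal part of the intertwining relation collapses to $\tilde W_o H^{\loc} = H^{\gl} \tilde W_o$. The plan is to induct on the order in $\epsilon$: $\tilde W_o^{(0)} = 0$ since $\tilde W|_{\epsilon=0} = I$, and assuming $\tilde W_o^{(j)} = 0$ for $j<n$, the order-$n$ equation reduces to $[\tilde W_o^{(n)}, H_0] = 0$, whose only block-off-diagonal solution is zero because $H_0$ has spectrum $\{0\}$ on $\calP_0$ and spectrum $\ge \Delta$ on $\calP_0^\perp$. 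With $\tilde W$ block-diagonal, its $\calP_0$-block is a formally unitary operator on $\calP_0$, and I will set $e^K := P_0 \tilde W P_0$ with $K := \log(e^K)$, yielding the formal identity $e^K (P_0 H^{\loc} P_0) e^{-K} = P_0 H^{\gl} P_0$.

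For the quantitative bound, additivity of both SW variants (Lemma~\ref{lemma:add} globally, and locally by construction since each $\calL_A$ is supported on $A$) makes $\tilde W$ and hence $K$ additive over disjoint bipartitions. The argument proving Theorem~\ref{thm:lct} then gives a linked-cluster property for $K$: each $K_j$ is a sum of operators supported on connected clusters of at most $j$ edges, with per-site strength bounded in terms of $j$, $\Delta$, and the maximum degree of the graph. Setting $K^{\la n\ra} := \sum_{j=1}^n K_j \epsilon^j$, the observation that the order-$j$ coefficient of $e^K (P_0 H^{\loc} P_0) e^{-K}$ depends only on $K_1,\ldots,K_j$ and on the order-$\le j$ coefficients of $P_0 H^{\loc} P_0$ implies that the truncated function $f(\epsilon) := e^{K^{\la n\ra}} L e^{-K^{\la n\ra}} - M$ has Taylor series vanishing below order $n+1$. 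Each $f_j$ is a finite sum of nested commutators of the local operators $K_{i_k}$ and the coefficients of $L$; iterating $\|[A,B]\|_1 \le 2\|A\|_1\|B\|_1$ gives $\|f_j\|_1 \le C_n 2^j$, and combining $\|\cdot\| \le N\|\cdot\|_1$ with a geometric sum on a small disk around the origin and the trivial bound $\|f\| \le O(N)$ elsewhere yields $\|f(\epsilon)\| \le \beta N |\epsilon|^{n+1}$ uniformly for $|\epsilon|\le 1$.

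The main obstacle is the block-diagonality of $\tilde W$: without it there is no meaningful way to extract a unitary on $\calP_0$ from the intertwining relation, and the argument crucially uses the spectral gap of $H_0$ together with the fact that both $H^{\gl}$ and $H^{\loc}$ are exactly block-diagonal order by order as formal power series. The subsequent locality and truncation-error steps are essentially extensions of techniques already developed earlier in the paper for the effective Hamiltonian itself.
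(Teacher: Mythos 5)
Your proposal is correct and follows essentially the same route as the paper's proof: you form the intertwiner $e^{S}e^{-T}$, establish its block-diagonality order by order from the spectral gap of $H_0$ (the paper runs the same induction at the level of the generator $K_j$ rather than the unitary, Lemma~\ref{lemma:bd}), transfer additivity to $K$ restricted to $\calP_0$ so that the linked-cluster theorem gives $O(1)$ interaction strength, and then bound the truncation error using locality of the Taylor coefficients. Two minor repairs, neither affecting the structure: the additivity of $K$ on $\calP_0$ requires the weak multiplicativity of the direct rotation (Lemma~\ref{lemma:mult}) rather than the additivity of the effective Hamiltonian (Lemma~\ref{lemma:add}) which you cite, since $K$ is built from the rotations themselves and not from $L$ and $M$; and the inequality $\|[A,B]\|_1 \le 2\|A\|_1 \|B\|_1$ is false as stated --- the correct constant grows with the supports of the local terms, as in Proposition~\ref{prop:induction} --- but since all supports here are bounded by constants depending on $n$, your geometric bound on $\|f_j\|_1$ and hence the final estimate survive.
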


A natural way to define the desired unitary transformation $e^K$ is
\be
\label{eK}
e^K = e^S \, e^{-T}.
\ee
Indeed, one should expect that $e^{-T}$ maps the local low-energy to the high-energy theory
and $e^S$ maps the high-energy theory to the global low-energy theory.
The main difficulty in using Eq.~(\ref{eK}) is that (i) it defines a unitary transformation on the entire Hilbert space
$\calH$ while we expect the two theories to be close only on the low-energy subspace $\calP_0$;
(ii) the formal series $K$, $S$, and $T$ may be divergent.

The proof of Theorem~\ref{thm:iso} goes as follows.
We shall regard Eq.~(\ref{eK}) as a   definition of a  formal Taylor series
\[
K=\sum_{j=1}^\infty K_j \, \epsilon^j
\]
Here $K_j$ are some operators acting on the entire Hilbert space $\calH$.
We will show that operators $K_j$ are actually block-diagonal, so we can apply the
transformation $e^K$ on the subspace $\calP_0$ only, see Lemma~\ref{lemma:bd},
even if one uses a truncated version of $K$.
Next we shall make use of the fact  that the SW operators $e^S$ and $e^T$ have (weak) multiplicativity
properties, see Lemma~\ref{lemma:mult}.
It will allow us to prove multiplicativity for $e^K$ and thus additivity of $K$, see Eqs.~(\ref{Kadd}).
Next we shall exploit the additivity of $K$ to show that the series for $K$ obeys the linked
cluster condition analogous to Theorem~\ref{thm:lct}.
Finally, we shall construct the desired transformation mapping the
local theory to the global theory as $\exp{(K^{\la n\ra})}$, where
 $K^{\la n\ra}=\sum_{j=1}^n K_j \, \epsilon^j$. Lemma~\ref{lemma:KLM} shows that this transformation
 indeed does the right job.

\begin{lemma}
\label{lemma:bd}
Suppose $H_0$  has a positive spectral gap $\Delta$
between $\calP_0$ and $\calP_0^\perp$.  Then
all generators $K_j$ are block-diagonal.
\end{lemma}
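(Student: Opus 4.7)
The plan is to reduce the lemma to the single assertion that $e^K$ commutes with $P_0$ as a formal power series in $\epsilon$; once this is in hand, writing $K = \log e^K$ (which is well-defined because $e^K = I + O(\epsilon)$), the block-diagonality of every Taylor coefficient $K_j$ follows since the formal logarithm is, at each order, a polynomial in the coefficients of $e^K$ that preserves the property of commuting with $P_0$.

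The setup rests on two observations. First, both transformed Hamiltonians $H^{\gl} = e^S(H_0+\epsilon V)e^{-S}$ and $H^{\loc} = e^T(H_0+\epsilon V)e^{-T}$ are block-diagonal with respect to $P_0$ in every order of $\epsilon$: for $H^{\gl}$ this is the defining property of the global SW transformation, and for $H^{\loc}$ it follows from the local SW construction in Section~\ref{subs:DFFR}, where each $T_j$ is chosen so that $[T_j,H_0]+V^{(j-1)} = \sum_A \calD_A(V^{(j-1)}_A)$ is block-diagonal. Second, the definition $e^K = e^S e^{-T}$ gives the intertwining relation $H^{\gl} = e^K H^{\loc} e^{-K}$. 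Combining these facts,
\[
[H^{\gl},\, e^K P_0 e^{-K}] = e^K [H^{\loc}, P_0] e^{-K} = 0,
\]
so both $P_0$ and $\widetilde P := e^K P_0 e^{-K}$ are orthogonal projectors that commute with $H^{\gl}$ and agree at order $\epsilon^0$.

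The main step is then a formal-series uniqueness statement: for any Hermitian formal power series $H(\epsilon) = H_0 + O(\epsilon)$ whose leading term $H_0$ has a spectral gap $\Delta > 0$ separating $\calP_0$ from $\calP_0^\perp$, there exists a unique formal power series orthogonal projector $\Pi(\epsilon)$ with $\Pi(0) = P_0$ and $[H(\epsilon),\Pi(\epsilon)] = 0$. This is the formal-series analogue of the contour integral construction used in Section~\ref{subs:convergence}: inductively in $k$, writing $\Pi_k = \calO(\Pi_k) + \calD(\Pi_k)$, the condition $[H,\Pi]_k = 0$ reduces on the block-off-diagonal sector to a Sylvester equation for $\calO(\Pi_k)$ uniquely solvable thanks to the gap $\Delta$, while the block-diagonal part $\calD(\Pi_k)$ is pinned down by the idempotency constraint $\Pi^2 = \Pi$. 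Applying this uniqueness with $H = H^{\gl}$ to the two candidates $P_0$ and $\widetilde P$ forces $\widetilde P = P_0$, which is exactly $[e^K, P_0] = 0$, and the lemma follows. The only real obstacle is the formal-series uniqueness step; the remainder of the proof is a short algebraic manipulation using the intertwining $H^{\gl} = e^K H^{\loc} e^{-K}$ between two block-diagonal Hamiltonians.
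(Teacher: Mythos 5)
Your proof is correct, but it takes a genuinely different route from the paper's. The paper argues directly at the level of the coefficients $K_j$: expanding $H^{\gl}=e^K H^{\loc}e^{-K}$ order by order and taking the $P_0(\cdot)Q_0$ block, the inductive hypothesis that $K_1,\ldots,K_{j-1}$ are block-diagonal kills every term except $P_0[K_j,H_0]Q_0$, so the order-$j$ equation collapses to the homogeneous intertwining identity $(P_0K_jQ_0)(Q_0H_0Q_0)=(P_0H_0P_0)(P_0K_jQ_0)$; the gap (disjoint spectra of the two diagonal blocks of $H_0$) forces $P_0K_jQ_0=0$, and anti-hermiticity of $K$ supplies the other off-diagonal block. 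You instead lift the problem to the level of projectors: $P_0$ and $\widetilde{P}=e^K P_0 e^{-K}$ are both formal-series projectors commuting with $H^{\gl}$ and agreeing at order zero, and your uniqueness lemma (off-diagonal blocks pinned by an inhomogeneous Sylvester equation via the gap, diagonal blocks pinned by idempotency) forces $\widetilde{P}=P_0$, i.e.\ $[e^K,P_0]=0$, after which the formal logarithm transfers commutation with $P_0$ to every $K_j$. I checked your uniqueness step and it is sound: at each order the commutation relation determines $\calO(\Pi_k)$ uniquely because the diagonal blocks of $H_0$ have spectra separated by $\Delta$, while $\Pi^2=\Pi$ fixes $P_0\Pi_k P_0$ and $Q_0\Pi_k Q_0$ in terms of lower orders; existence is not needed since you already have two explicit candidates. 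Both arguments ultimately invoke the gap through the same linear-algebra fact (a Sylvester equation with spectrally separated blocks has at most one solution), so neither is deeper; what your version buys is modularity --- the uniqueness statement is the formal-series analogue of uniqueness of the Kato spectral projector and is reusable for any unitary formal series conjugating one block-diagonal gapped Hamiltonian into another --- whereas the paper's version is shorter, needs only the homogeneous equation, and avoids introducing $\widetilde{P}$ and the logarithm. Two points you should make explicit in a write-up: (i) $e^K=e^S e^{-T}$ is unitary as a formal series because $S$ and $T$ are anti-hermitian, which is what makes $\widetilde{P}$ an orthogonal projector and $K$ anti-hermitian (though, as your own argument shows, hermiticity is never actually used --- idempotency and commutation suffice); (ii) block-diagonality of $H^{\loc}$ at every order is a statement about the formal $n\to\infty$ limit of the local SW construction, which is indeed the setting the paper adopts in this section.
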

\begin{proof}
By definition we have
\be
\label{loc2gl}
H^{\gl}=e^K\, H^{\loc} \, e^{-K}
\ee
and both operators $H^{\gl}$, $H^{\loc}$ are block-diagonal.
It is assumed here that $H^{\gl}$ and $H^{\loc}$ are formal Taylor series in $\epsilon$
Let us multiply Eq.~(\ref{loc2gl}) by $P_0$ and $Q_0$ on the left and on the right respectively.
Taking into account that $H^{\gl}$ and $H^{\loc}$ are block-diagonal, and using formal Taylor
series $H^{\loc}=\sum_{j=0}^\infty H^{\loc}_j \, \epsilon^j$ with $H^{\loc}_0=H_0$ one obtains
\be
\label{loc2gl'}
P_0 [K_j,H_0] Q_0 + P_0 f(K_1,\ldots,K_{j-1},H^{\loc}_1,\ldots,H^{loc}_j) Q_0=0 \quad \mbox{for all $j\ge 1$},
\ee
where $f(K_1,\ldots,K_{j-1},H^{\loc}_1,\ldots,H^{loc}_j)$ is some polynomial obtained from the
expansion of the exponent in Eq.~(\ref{loc2gl}).
Let us use induction in $j$ to show that $K_j$ is block-diagonal.
Indeed, suppose we have already proved that $K_1,\ldots,K_{j-1}$ are block-diagonal.
Then the term
\[
P_0 f(K_1,\ldots,K_{j-1},H^{\loc}_1,\ldots,H^{loc}_j) Q_0
\]
 in Eq.~(\ref{loc2gl'}) vanishes and we
can rewrite Eq.~(\ref{loc2gl'}) as
\be
\label{loc2gl''}
(P_0 K_j Q_0)  (Q_0 H_0Q_0)=(P_0H_0P_0)(P_0K_j Q_0).
\ee
If $|\psi\ra \in Q_0$ is any eigenvector of $Q_0H_0 Q_0$ with an eigenvalue $\lambda$,
then Eq.~(\ref{loc2gl''}) implies that a state $(P_0K_jQ_0)\,|\psi\ra$
is an eigenvector of $P_0H_0P_0$ with
the same eigenvalue $\lambda$. On the other hand,
$(P_0K_jQ_0)\,|\psi\ra\in \calP_0$.
 It contradicts to the spectral gap assumption
unless $(P_0K_jQ_0)\, |\psi\ra=0$. Since eigenvectors of $Q_0H_0Q_0$ span the entire subspace $Q_0$,
we conclude that $P_0K_j Q_0=0$. Since $K$ is anti-hermitian, it implies $Q_0K_j P_0=0$, that is,
$K_j$ is block-diagonal.
\end{proof}

Let us now prove that the restriction of $K$ onto the $\calP_0$ subspace obeys the additivity property:
if one partitions the lattice $\Lambda$ into two disjoint parts, $\Lambda=AB$, such that $V=V_A+V_B$
then $K=K_A+K_B$ (only on the subspace $\calP_0$).
Note that the generator $T$ constructed using the local SW method  is additive by construction, so
we have $T=T_A+T_B$ on the entire Hilbert space (not only on the subspace $\calP_0$).
On the other hand,  weak multiplicativity of the direct rotation, see Lemma~\ref{lemma:mult},
implies
\be
\label{Sadd}
P_0 \, e^{S}=P_0 \, e^{S_A+ S_B}.
\ee
Since $e^K =e^S e^{-T}$, we conclude that
the restriction of $K$ onto the subspace $\calP_0$ is additive, that is,
\be
\label{Kadd}
e^{-K}\, P_0 = e^{-K_A - K_B}\, P_0
\ee
By abuse of notations let us identify  $K$, $K_A$, $K_B$ with  their restrictions of the on the subspace $\calP_0$.
Then we have $e^{-K}=e^{-K_A} e^{-K_B} = e^{-K_A -K_B}$.
Taking the logarithm we obtain the desired additivity property,
$K=K_A+K_B$.
It implies  that the series for $K$ obeys the linked cluster condition, see Theorem~\ref{thm:lct}.
Using the same notations as in the statement of Theorem~\ref{thm:lct}
we conclude that
\[
K_j=\sum_{C\subseteq \calE} K_{j,C},
\]
where $K_{j,C}$ acts only on $\Lambda(C)$ and $K_{j,C}=0$ unless
$|C|\le j$ and $C$ is a connected subset of edges.
Define an interaction strength of $K$ as
\be
\label{strength1}
J_K=\max_{u\in \calL} \max_{j=1,\ldots,n} \;  \sum_{C\ni u} \|K_{j,C}\|.
\ee
We claim that
\be
\label{strength2}
J_K=O(1),
\ee
that is, $J_K$ has an upper bound independent of $N$. Indeed,
assuming that the interaction graph has degree $d=O(1)$, the number of linked clusters of size $\le n$
containing a given vertex $u$ can be upper bounded by $d^{O(n)}=O(1)$.
Additivity of $K$ implies that we can
compute $K_{j,C}$ by turning off all interactions $V_{u,v}$, $(u,v)\notin C$.
It leaves us with a subgraph with at most $N'=2|C|\le 2j\le 2n=O(1)$ sites.
Since the number of spins is $O(1)$, any operator $K_{j,C}$ is a constant-degree polynomial
 with constant-bounded coefficients, so that $\|K_{j,C}\|=O(1)$.
Theorem~\ref{thm:iso} now follows from the following lemma.
\begin{lemma}
\label{lemma:KLM}
Let $K,L,M$ be formal operator-valued Taylor series acting on the subspace $\calP_0$ such that
$K$ is anti-hermitian, $K(0)=0$, and
\be
\label{cond1}
M=e^K L e^{-K}.
\ee
For any integer $n=O(1)$  define truncated series
\be
\tilde{K}=\sum_{j=1}^n K_j\, \epsilon^j, \quad \tilde{L}=\sum_{j=0}^n L_j\, \epsilon^j, \quad \tilde{M}=\sum_{j=0}^n M_j\, \epsilon^j
\ee
and a truncation error
\be
\delta=\| \tilde{M} -e^{\tilde{K}} \tilde{L} \, e^{-\tilde{K}}\|.
\ee
Suppose $K,L,M$ obey the linked cluster property and have constant-bounded interaction strength,
$J_K,J_L,J_M=O(1)$. Then
\be
\delta\le O(1) N |\epsilon|^{n+1} \quad \mbox{for any $|\epsilon|\le 1$}.
\ee
\end{lemma}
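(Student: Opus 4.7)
The plan is to Taylor-expand $g(\epsilon):=e^{\tilde K(\epsilon)}\tilde L(\epsilon)e^{-\tilde K(\epsilon)}$, use the hypothesis $M=e^K L e^{-K}$ to match the first $n+1$ Taylor coefficients of $g$ with those of $\tilde M$, and bound the remainder using the linked cluster structure. Expanding
\[
g(\epsilon)=\sum_{k\ge 0}\frac{1}{k!}\,\hat{\tilde K}^{\,k}(\tilde L)
=\sum_{k\ge 0}\frac{1}{k!}\sum_{\substack{j_0\in[0,n]\\ j_1,\ldots,j_k\in[1,n]}}\epsilon^{j_0+\cdots+j_k}\,\hat K_{j_k}\cdots\hat K_{j_1}(L_{j_0}),
\]
the degree-$j$ coefficient $[g]_j$ depends only on $K_1,\ldots,K_j$ and $L_0,\ldots,L_j$; for $j\le n$ the cutoffs $j_i\le n$ are inactive, so $[g]_j$ coincides with the degree-$j$ coefficient of $e^K L e^{-K}=M$, namely $M_j$. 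Hence
\[
f(\epsilon):=\tilde M(\epsilon)-g(\epsilon)=-\sum_{j\ge n+1}\epsilon^j\,[g]_j,
\]
and for each $j>n$ the sum defining $[g]_j$ is finite and forces $k\ge 1$.

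The central step is to bound $\|[g]_j\|$ by an extensive quantity. Substituting the linked cluster decompositions $K_{j,C}$, $L_{j,C}$ into the nested commutator $\hat K_{j_k,C_k}\cdots\hat K_{j_1,C_1}(L_{j_0,C_0})$, this vanishes unless for every $i\ge 1$ the cluster $C_i$ shares a site with $\bigcup_{\ell<i}\Lambda(C_\ell)$, forcing $C_0\cup\cdots\cup C_k$ to be a connected edge set of size at most $j$. Fix a root site $u\in\Lambda(C_0)$ and iteratively count neighbouring clusters: using $\sum_{C\ni v}\|K_{j,C}\|\le J_K$ together with $|\Lambda(\bigcup_{\ell<i}C_\ell)|\le i(n+1)$, the sum over chains rooted at $u$ is at most $2^k\,J_L\,(n+1)^k J_K^k\,k!$, where $2^k$ is the commutator norm bound. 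Crucially, the $k!$ from iterated counting cancels the $1/k!$ prefactor from the BCH expansion; summing over root sites (factor $N$), over the at most $\binom{j}{k}$ compositions of $j$, and over $k$ via the binomial theorem yields
\[
\|[g]_j\|\le N\,J_L\,B^{\,j},\qquad B:=1+2(n+1)J_K,
\]
with $B=O(1)$ since $n$ and $J_K$ are $O(1)$.

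The conclusion follows from a two-regime argument. For $|\epsilon|\le 1/(2B)$, the geometric series gives
\[
\|f(\epsilon)\|\le N\,J_L\sum_{j\ge n+1}(B|\epsilon|)^j\le 2N\,J_L\,(B|\epsilon|)^{n+1}=O\bigl(N|\epsilon|^{n+1}\bigr).
\]
For $1/(2B)<|\epsilon|\le 1$, using that $\tilde K$ is anti-hermitian for real $\epsilon$ (so $e^{\tilde K}$ is unitary), one has the crude bound $\|f(\epsilon)\|\le\|\tilde M\|+\|\tilde L\|=O(N)$, which combined with $|\epsilon|^{n+1}\ge(2B)^{-(n+1)}=\Omega(1)$ again yields $\|f(\epsilon)\|=O(N|\epsilon|^{n+1})$ with a larger $O(1)$ constant.

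The main obstacle is the cluster-chain counting: one must verify that the factorial $k!$ generated by iterated neighbour counting cancels exactly with the $1/k!$ from the BCH expansion, so that the bound on $\|[g]_j\|$ is linear in $N$. A looser combinatorial estimate (e.g.\ via the naive bound $\|K_{j}\|\le NJ_K$ inside each commutator) would produce a contribution $N^{k+1}$ and destroy extensivity; it is precisely the locality/additivity encoded in the linked cluster property that enables the per-site estimate.
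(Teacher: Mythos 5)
Your proposal is correct. Both your argument and the paper's rest on the same two pillars: the hypothesis $M=e^{K}Le^{-K}$ identifies all terms of degree $\le n$, and the linked-cluster property with bounded interaction strength converts per-term estimates into extensive ($O(N)$ rather than $O(N^{k})$) bounds. The implementations, however, genuinely differ. The paper introduces an intermediate operator $\Theta=\tilde{L}+\sum_{j=1}^{n}\frac{1}{j!}[\tilde{K},\cdot]^{j}\tilde{L}$ (a truncation in the \emph{number of commutators}, not in powers of $\epsilon$), bounds $\| e^{\tilde{K}}\tilde{L}e^{-\tilde{K}}-\Theta\|\le\frac{1}{(n+1)!}\|[\tilde{K},\cdot]^{n+1}\tilde{L}\|=O(N|\epsilon|^{n+1})$ by a Taylor-remainder estimate, and then splits the \emph{finitely many} monomials of $\Theta$ into good ones (total degree $\le n$, which sum to $\tilde{M}$ by Eq.~(\ref{cond1})) and bad ones, each of norm $O(N)$ and carrying at least $|\epsilon|^{n+1}$; the extensive nested-commutator bounds are cited from Lemmas~1--2 of~\cite{gadget_prl} rather than proved. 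You instead expand the full power series of $e^{\tilde{K}}\tilde{L}e^{-\tilde{K}}$, prove the coefficient bound $\|[g]_{j}\|\le N J_{L}B^{j}$ from scratch by cluster-chain counting --- your $k!$ versus $1/k!$ cancellation is precisely the content the paper outsources to the cited lemmas --- and sum a geometric series. Because your error term is an infinite series while the paper's bad terms are finite in number, you need the additional two-regime argument (with unitarity of $e^{\tilde{K}}$ for real $\epsilon$) to cover $1/(2B)<|\epsilon|\le 1$, whereas the paper's proof needs no such split (though it, too, implicitly relies on anti-hermiticity of $\tilde{K}$ in the remainder estimate). Your route buys self-containedness and explicit constants; the paper's buys brevity and freedom from convergence issues near $|\epsilon|=1$.
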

\begin{proof}
Introduce a truncated version of a transformed operator $e^{\tilde{K}} \tilde{L} \, e^{-\tilde{K}}$, namely,
\[
\Theta=\tilde{L} + \sum_{j=1}^n \frac1{j!} \, [\tilde{K},\cdot]^j \tilde{L}.
\]
The assumptions that $K$ obeys the linked cluster property and $K(0)=0$
imply that  $i\tilde{K}$ can be regarded as a local Hamiltonian with interaction strength
upper bounded by $J_K \sum_{j=1}^n |\epsilon|^j =O(|\epsilon|)$.
Similarly, $\tilde{L}$ can be regarded as a local Hamiltonian with interaction strength $O(1)$.
Using  Lemmas~1,2 from~\cite{gadget_prl} one gets an approximation
\be
\label{approx1}
\| e^{\tilde{K}} \tilde{L} \, e^{-\tilde{K}} - \Theta\| \le \frac1{(n+1)!} \, \| \,
 [\tilde{K},\cdot]^{n+1} \tilde{L}\, \| \le
O(1) N |\epsilon|^{n+1}.
\ee
Therefore it suffices to show that
\be
\label{approx2}
\| \tilde{M} - \Theta \| \le O(1) N |\epsilon|^{n+1}.
\ee
Expanding the multiple commutators in $\Theta$ we get a linear combination of terms
\[
\frac1{j!}\,
\epsilon^{q_1+q_2+\ldots+q_j+q} \, \hat{K}_{q_1} \hat{K}_{q_2} \cdots \hat{K}_{q_j}(L_q), \quad 1\le q_1,\ldots,q_j\le n, \quad 1\le j\le n, \quad 1\le q\le n.
\]
Let us call a term as above {\it good} if the total power of $\epsilon$ is at most $n$, that is,
$q_1+q_2+\ldots+q_j+q\le n$. Otherwise let us call such a term {\it bad}. Using the condition Eq.~(\ref{cond1})
one can easily check that  the sum of all good terms in $\Theta$ equals $\tilde{M}$. Thus we get a bound
\be
\label{approx3}
\| \tilde{M} - \Theta \| \le |\epsilon|^{n+1} \, \#\{ \mbox{bad terms}\} \, \max{\| \hat{K}_{q_1} \hat{K}_{q_2} \cdots \hat{K}_{q_j}(L_q)\|}.
\ee
Clearly for a constant $n$ the total number of terms in $\Theta$ is $O(1)$ and so the number of bad terms is $O(1)$.
Applying again Lemma~2 from~\cite{gadget_prl} and taking into account that
the Taylor coefficients $K_{q_i}$ and $L_q$ are local Hamiltonians with interaction strength $O(1)$ we get a bound
\[
\max{\| \hat{K}_{q_1} \hat{K}_{q_2} \cdots \hat{K}_{q_j}(L_q)\|}\le O(1) N.
\]
Therefore $\| \tilde{M} - \Theta\| \le O(1) N |\epsilon|^{n+1}$.
\end{proof}

\section*{Acknowledgments}
We thank Barbara Terhal for useful discussions.
SB would like to thank RWTH Aachen University
and University of Basel for hospitality  during several stages of this work.
SB was partially  supported by the
DARPA QuEST program under contract number HR0011-09-C-0047.
DL was partially supported by the Swiss NSF, NCCR Nanoscience, NCCR QSIT,
and DARPA QuEST.

\end{document}